\algnewcommand{\LineComment}[1]{\State \(\triangleright\) #1}
\theoremstyle{plain}
\newtheorem{thm}{Theorem}
\newtheorem{cor}[thm]{Corollary}
\newtheorem{lem}[thm]{Lemma}
\newtheorem{prop}[thm]{Proposition}
\theoremstyle{definition}
\newtheorem{defi}{Definition}
\newtheorem*{exa}{Example}
\theoremstyle{remark}
\newtheorem*{rem}{\bf Remark}
\definecolor{labelkey}{rgb}{0,.56,.7}
\let\bs\boldsymbol
\let\ox\otimes
\let\op\oplus
\let\ol\overline
\newcommand{\sox}{\slashed{\ox}}
\mathchardef\myhyph="2D
\DeclareMathAlphabet{\pazocal}{OMS}{zplm}{m}{n}   
\newcommand{\Pcal}{\pazocal{P}}
\newcommand{\Mcal}{\pazocal{M}}
\newcommand{\Ncal}{\pazocal{N}}
\newcommand{\Gcal}{\pazocal{G}}
\newcommand{\Fcal}{\pazocal{F}}
\def\bbC{\mathbb{C}}
\def\bbZ{\mathbb{Z}}
\def\bbR{\mathbb{R}}
\def\bbN{\mathbb{N}}
\def\bbI{\mathbb{I}}
\def\bbJ{\mathbb{J}}
\def\bbE{\mathbb{E}}
\newcommand{\dif}[1]{\mathrm{\,d} #1}             
\newcommand{\nn}{\nonumber}
\def\df{\overset{\mathrm{df}}{=}}
\newcommand{\ket}[1]{\mathop{|#1\rangle}\nolimits}
\def\ran{\rangle}
\def\lan{\langle}
\DeclareMathOperator*{\orb}{orb}
\newcommand{\haf}{\mathop{{\mathrm{haf}}}\nolimits}
\def\a{\alpha}
\def\b{\beta}
\def\g{\gamma}
\def\d{\delta}
\def\s{\sigma}
\def\la{\lambda}
\newcommand{\m}{\bs{m}}
\newcommand{\n}{\bs{n}}
\newcommand{\x}{\bs{x}}
\newcommand{\y}{\bs{y}}
\newcommand{\be}{\bs{e}}
\newcommand{\ba}{\bs{a}}
\DeclareSymbolFont{Eulerscripteusm10}{U}{eus}{m}{n}
\DeclareMathSymbol{\euI}{\mathord}{Eulerscripteusm10}{"4A}
\DeclareMathSymbol{\rH}{\mathord}{Eulerscripteusm10}{"48}
\DeclareMathSymbol{\euK}{\mathord}{Eulerscripteusm10}{"4B}
\DeclareMathSymbol{\euR}{\mathord}{Eulerscripteusm10}{"52}
\DeclareMathSymbol{\rS}{\mathord}{Eulerscripteusm10}{"53}
\begin{document}

\title{Graph isomorphism and Gaussian boson sampling}

\author[K. Br\'adler]{Kamil Br\'adler}
\author[S. Friedland]{Shmuel~Friedland}
\author[J. Izaac]{Josh Izaac}
\author[N. Killoran]{Nathan Killoran}
\author[D. Su]{Daiqin Su}
\address[Kamil Br\'adler]{ORCA Computing, 84 Wood Lane, London,  W12 0BZ, UK; formerly Xanadu.
}
\address[Josh Izaac, Nathan Killoran,  Daiqin Su]{Xanadu, Toronto, Canada}
\address[Shmuel~Friedland]{Department of Mathematics, Statistics and Computer Science, University of Illinois, 851 South Morgan Street, Chicago, Illinois 60607-7045, USA}
\email{kamilbradler@gmail.com}
\email{friedlan@uic.edu}
\email{josh@xanadu.ai}
\email{nathan@xanadu.ai}
\email{sudaiqin@gmail.com}

\date{March 22, 2021}

\begin{abstract}
We introduce a connection between a near-term quantum computing device, specifically a Gaussian boson sampler, and the graph isomorphism problem. We propose a scheme where graphs are encoded into quantum states of light, whose properties are then probed with photon-number-resolving detectors.
We prove that the probabilities of different photon-detection events in this setup can be combined to give a complete set of graph invariants. Two graphs are isomorphic if and only if their detection probabilities are equivalent. We present additional ways that the measurement probabilities can be combined or coarse-grained to make experimental tests more amenable. We benchmark these methods with numerical simulations on the Titan supercomputer for several graph families: pairs of isospectral nonisomorphic graphs, isospectral regular graphs, and strongly regular graphs.
\end{abstract}

\maketitle

\noindent {\bf 2020 Mathematics Subject Classification}:  05C50, 05C60, 15A15, 68Q12, 81P68

\noindent \emph{Keywords}:  Gaussian boson sampling, graph isomorphism, hafnian, quantum GI algorithm, strongly regular graph

\thispagestyle{empty}
\section{Introduction}\label{sec:intro}

The problem of graph isomorphism (GI) lies at an interesting point in the landscape of computational complexity theory. Though algorithms have been recently proposed which run in `quasi-polynomial' time~\cite{babai2016graph, helfgott2017graph}, it is still an open question in theoretical computer science whether there exists a polynomial-time algorithm that can determine whether two graphs are isomorphic; indeed, graph isomorphism is likely to belong to the class of NP-intermediate computational problems. Two other well-known problems which have similar status in the complexity landscape are integer factoring and the discrete logarithm problem. Famously, while no classically efficient algorithm for these two problems is known, they can be solved in polynomial time on quantum computers~\cite{shor1994algorithms, shor1999polynomial}.
Quantum algorithms with a superpolynomial runtime advantage have also been proposed for linear systems~\cite{harrow2009quantum, childs2017quantum}, semidefinite programming~\cite{brandao2017quantum, brandao2017exponential}, knot invariants~\cite{freedman2002simulation, freedman2002modular, wocjan2006jones}, and partition functions~\cite{aharonov2007polynomial, van2009quantum, geraci2008exact}, among many others.  Boson sampling is a strong candidate to demonstrate the quantum computational advantage~\cite{aaronson2011computational}.  Zhong et al. measured a sampling rate that is about 1014-fold faster than using state-of-the-art classical simulation strategies and supercomputers \cite{quantadv}

Given these other success cases, it is natural to hypothesize that  may also be useful for the graph isomorphism problem.

Over the last several years, several works have explored this problem, with quantum algorithms for tackling graph isomorphism proposed based on quantum annealing~\cite{gaitan2014graph, zick2015experimental, calude2017qubo} and quantum graph states~\cite{mills2017proposal}. However, the bulk of quantum algorithm proposals to distinguish non-isomorphic graphs have utilized the time-evolution of a quantum walker to calculate `graph invariants' or `graph certificates' which, ideally, produce identical results for two graphs if and only if they are isomorphic. Of the algorithms proposed, they differ mainly in the number of particles involved, the presence of interactions, localised perturbations, and construction of the GI certificate~\cite{douglas2008classical,berry2011two,emms2009coined,wang2015graph,rudinger2012noninteracting}. It has subsequently been proven using this approach that conventional quantum walk algorithms, both discrete-time and continuous-time, with interactions and
perturbations, cannot distinguish arbitrary non-isomorphic graphs~\cite{rudinger2012noninteracting, rudinger2013comparing, mahasinghe2015phase}.

To test the distinguishing ability of proposed quantum GI algorithms, a common benchmark has become their capacity to distinguish nonisomorphic strongly regular graphs (SRGs) with the same graph parameters. This provides an analytic approach to investigate the effectiveness of graph isomorphism proposals; if a particular certificate will always fail to distinguish two non-isomorphic SRGs, this can be shown to be because all elements of a certificate, as well as their multiplicities, are functions of SRG family parameters~\cite{gamble2010two}.

In this work, we present an approach to graph isomorphism which uses a near-term quantum computational device, namely a photonics-based Gaussian boson sampling apparatus~\cite{hamilton2016gaussian, kruse2018detailed}. For this method, graphs are encoded into quantum-optical states of light -- specifically Gaussian states -- which are then subjected to photon-number-resolving measurements. Mathematically, we show that the resulting measurement outcome probabilities can be combined to give a complete set of graph invariants. Two graphs are isomorphic if and only if these graph invariants are equal. We also present several ways that these measurement probabilities can be combined and coarse-grained to obtain new quantities which can be used to distinguish nonisomorphic graphs. Finally, we perform classical numerical simulations of our proposed method on the Titan supercomputer. Using these results, we are able to distinguish 3852 out of 3854 nonisomorphic graphs using only a subset of measurement events. The remaining two graphs were distinguished by failing to satisfy a necessary condition introduced here as well.

\section{Main results summarized}\label{sec:summary}

Our main result is a necessary and sufficient condition to distinguish isospectral nonisomorphic graphs by virtue of comparing the probabilities of the measurement patterns of the graphs encoded in a Gaussian boson sampling (GBS) apparatus.  We discovered the vital role played by a matrix function  called the hafnian~\cite{caianiello1953quantum}, applied to an adjacency matrix,  for the GI problem. It leads to a complete set of graph invariants. The hafnian belongs to the family of matrix functions such as the determinant, permanent and pfaffian~\cite{barvinok2017combinatorics}.  It has been established that photon detection probabilities can be expressed in terms of the hafnians of a collection of graphs related to the original graph~\cite{bradler2017gaussian}. Multiphoton detection probabilities are handled by introducing a new matrix product related to the Kronecker product and by showing how the output probabilities depend on the hafnian of the graph adjacency matrix as well. We further strengthen our graph invariant results by introducing the so-called symmetrized graphs invariants and showing that they correspond to coarse-grained measurement events in GBS. The measurement events are given by the stratification according to the total photon number and partitioned into the orbits of the permutation group. Their hafnian-based coarse-grained probabilities are again sufficient to distinguish isospectral nonisomorphic graphs. We extend these insights by deriving necessary conditions for isospectral graphs to be isomorphic by comparing the coarse-grained partition-averaged photon distribution from the Gaussian boson sampler.

Our method differs from  previous quantum GI algorithm proposals. A great majority have utilized quantum walks, either using discrete-time quantum walks (DTQWs)~\cite{emms2009coined,berry2011two} or continuous-time quantum walks (CTQWs)~\cite{mahasinghe2015phase,gamble2010two,rudinger2012noninteracting,rudinger2013comparing}. Although the graph invariants constructed via quantum walk propagation on graph structures have shown success in distinguishing various families of SRGs, it has been proven that this distinguishing power is not universal --- there will always exist graphs which (current) quantum walk-based algorithms cannot distinguish~\cite{smith2012algebraic}. In order to execute a quantum walk-based algorithm in a universal quantum photonic platform, it is necessary to implement a non-Gaussian  operation as a vertex-dependent shift or via multiple interacting walkers. This is a major obstacle with the current and near-term technology our proposal does not suffer from. GBS is a Gaussian circuit followed by an array of photon-number-resolving detector (PNR) representing a non-Gaussian element. Unlike non-Gaussian unitary transformations, the PNRs are available in the state-of-the-art laboratories.

The most comprehensive simulations of quantum methods for GI were performed in~\cite{douglas2008classical} and~\cite{rudinger2012noninteracting}. We successfully tested three types of isospectral graphs: pairs of isospectral nonisomorphic graphs (PINGs) as the first examples of such graphs~\cite{baker1966drum}, isospectral regular graphs~\cite{little2006combinatorial} and mainly SRGs. There are numerous resources available detailing the SRG families containing more than one non-isomorphic graph~\cite{spence2018,brouwer2017}; as a result, SRGs have become a common benchmark in studying the distinguishing powers of the GI algorithms. Note that there may be other graph classes (such as $k$-equivalent graphs) which have been proven to be harder to distinguish than strongly regular graphs for particular quantum GI algorithms~\cite{smith2012algebraic} --- however, SRGs remain an ideal testing set, simply due to the large number of relatively small non-isomorphic graphs present in specific families~\cite{spence2018,brouwer2017}. The largest tested and distinguished family by our approach was SRG(35,18,9,9) containing 3854 isospectral graphs. This family is supposedly tested in~\cite{rudinger2012noninteracting}. However, the size of the family is mistakenly taken to be only 227~graphs (see Table~I.). The same error appears in~\cite{douglas2008classical}. Ironically, another SRG family considered there (SRG(35,16,16,8)), that happens to be complementary to SRG(35,18,9,9) and thus containing 3854 graphs as well, is counted properly and analyzed (see Table~1.).

Section~\ref{sec:notation} contains all necessary definitions and previous results used in the paper including a detailed GBS description and a formal introduction of SRGs. Section~\ref{sec:main} contains the main result and is split into four subsections: In~\ref{subsec:suppresults} we gather several supporting  results followed by the main results in Sections~\ref{subsec:graphinv},~\ref{subsec:hierarchyTower} and~\ref{subsec:meanphoton}. Section~\ref{sec:simulations} contains the simulation results and Section~\ref{sec:discussion} concludes with a scalability discussion and other open questions.  In Appendix~\ref{sec:experiment} we informally introduce the hardware setup (Gaussian boson sampler) where studied graphs are encoded.  In Appendix~\ref{sec:algo} we present the GBS quantum GI algorithm applied to various SRG families and other isospectral graphs.  In Appendix~\ref{sec:symbols} we summarize with a table the most important symbols and their meaning.

\section{Notation and preliminaries}\label{sec:notation}

In the following text the symbol $\bbJ_{k,\ell}$ denotes an all-ones rectangular matrix of size $k\times\ell$ and $\bbJ_{k}\equiv\bbJ_{k,k}$. The following notation is extensively used: $\partial^n_x\equiv\partial_{x,\dots,x}={\partial^n\over\partial x^n}$ and $\partial^{n_i}_{x_i,\ol x_i}={\partial^{n_i}\over\partial x_i^{n_i}}{\partial^{n_i}\over\partial\ol{x}_i^{n_i}}$. Letting $\n=(n_1,\dots,n_M),\x=(x_1,\dots,x_M)$, we occasionally write $\prod_{i=1}^{M}\partial^{n_i}_{x_i}=\partial^{|\n|}_{\x}$ and $\prod_{i=1}^{M}\partial^{n_i}_{x_i,\ol x_i}=\partial^{|\n|}_{\x,\ol\x}$. The symbol~$\df$ stands for `defined' and a positive-definite matrix $A$ will be denoted by $A\succ 0$.  Recall that any Gaussian $n$-dimensional real distribution with  zero mean, denoted as $G_{\Sigma}$, is given by
\[
\frac{1}{(2\pi)^{\frac{n}{2}}\sqrt{\det \Sigma}} \exp{[-\frac{1}{2}\bs{x}^\top\Sigma^{-1}\bs{x}]}.
\]
Here, $\Sigma$ is a positive definite matrix which is the covariance matrix of the Gaussian variables $X_1,\ldots,X_n$.

Since $\Sigma$ is positive definite, there exists a unique positive definite matrix $A$ such that $\Sigma=A^2$.  Let us change the variables $\bs{y}=A^{-1}\bs{x}$. That is, $\bs{x}=A\bs{y}$.  Hence the determinant of the Jacobian is $\det{A}$. As $\det{\Sigma}=(\det{A})^2 $ we get that the density distribution for $(Y_1,\ldots,Y_n)$ is the standard normal density distribution $\frac{1}{(2\pi)^{\frac{n}{2}}}\exp{[-\frac{1}{2}\bs{y}^\top\bs{y}]}$.  Therefore $Y_1,\ldots,Y_n$ are independent standard random variables. Assume that $A=[a_{ij}]$ is a positive definite symmetric matrix.  Then
\[
X_i=\sum_{j=1}^n a_{ij}Y_j, \quad i\in[n]
\]
and
\[
\bbE{[X_iX_j]}=\bbE\big[\big(\sum_{p=1}^n a_{ip}Y_p\big)\big(\sum_{q=1}^n a_{jq}Y_q\big)\big]
=\sum_{p,q=1}^n a_{ip}a_{iq} \bbE[Y_pY_q]=\sum_{p=1}^n a_{ip}a_{jp}=\Sigma_{ij} .
\]
Observe the well known fact that the odd moments $\bbE[\prod_{i=1}^n X^{m_i}]$, where $(m_1,\ldots,m_n)\in\bbZ_+^n$ and $\sum_{i=1}^n m_i$ is odd, are zero.  A polynomial $p(\bs{x}), \bs{x}=(x_1,\ldots,x_n)\in\bbR^n$ is called symmetric if for each permutation $\sigma:[n]\to[n]=(1,\ldots,n)$ we have the equality $p(\bs{x})=p(\bs{x}_{\sigma})$, where $\bs{x}_{\sigma}=(x_{\sigma(1)},\ldots,x_{\sigma(n)})$.

Denote by $\mathfrak{S}_n$ the symmetric group of bijections $\sigma:[n]\to[n]$.  Denote by $\Pcal_n\subset \bbR^{n\times n}$ the group of $n\times n$ permutation matrices.  So $P(\sigma)\bs{x}=\bs{x}_{\sigma}$.

Recall that two square matrices $A,B$ are permutationally similar, if $B=PAP^\top$, where $P$ is a permutation matrix. In this case $P^{-1}=P^\top$. Two Gaussian distributions corresponding to positive definite covariance matrices $\Sigma,\Sigma'\in \bbR^{n\times n}$ are called isomorphic, if $\Sigma'=P^\top\Sigma P$ for some permutation $P=P(\sigma)$.  That is $\bs{x}^\top (\Sigma')^{-1}\bs{x}=\bs{x}_{\sigma}^\top (\Sigma)^{-1} \bs{x}_{\sigma}$, where $\sigma\in\mathfrak{S}_n$, for all $\bs{x}\in\bbR^n$.

Denote by $\rH_{N}\supset \rH_{+,N}\supset \rH_{++,N}$ the real space of $N\times N$ hermitian matrices, the closed cone of positive semidefinite hermitian matrices, and the open set of positive definite hermitian matrices. For $F\in \rH_N$ denote by $\lambda_1(F)\ge \cdots\ge \lambda_N(F)$ the $N$ eigenvalues of $F$. Recall that the spectral norm  of $F$ is given by $\|F\|_2=\max(\lambda_1(F),-\lambda_N(F))$. For $X,Y\in\rH_N$ we denote $X\preceq Y$ and  $X\prec Y$ if $Y-X\in\rH_{+,N}$ or $Y-X\in\rH_{++,N}$, respectively.

\subsection{Gaussian Boson Sampling}

\begin{defi}\label{def:haf}
  Let $C = [c_{ij}]\in\bbR^{2M\times2M}$ be a symmetric matrix and let $\Mcal_{2M}$ denote all partitions $\varsigma$ to unordered disjoint pairs. Then
  \begin{equation}\label{eq:hafA}
        \haf{C}\df\sum_{\varsigma\in\Mcal_{2M}}\prod_{(uv)\in\varsigma}c_{uv}
  \end{equation}
  is the hafnian of~$C$~\cite{caianiello1953quantum}.
\end{defi}
Given a detection event $\n$, its measurement probability was shown in~\cite{hamilton2016gaussian} to be
\begin{equation}\label{eq:ProbMixedGBS}
  p(\n)={1\over{\n!}\sqrt{\det{\s_Q}}}
  \partial^{|\n|}_{\bs{\b},\ol{\bs\b}}
  e^{{1\over2}\bs{\g}^\top C\bs{\g}}\big\rvert_{\bs{\g}=0},
\end{equation}
where $\n!=n_1!\times\dots\times n_M!$ and $\bs{\g}\df(\bs{\b},\ol{\bs{\b}})=(\b_1,\dots,\b_M,\ol\b_1,\dots,\ol\b_M)\in\bbC^{2M}$ which we view as a column vector (even though $\ol{\bs{\b}}$ is a complex conjugate of a complex $\bs\b$ (entrywise), we consider $\ol\b_i$ as a new variable). We denote
\begin{equation}
X_{2M}=\begin{bmatrix}
           0 & \bbI_M \\
           \bbI_M & 0 \\
         \end{bmatrix}.
\end{equation}
Then
\begin{equation}\label{eq:sigmaQ}
\s_Q=(\bbI_{2M}-X_{2M}C)^{-1}
\end{equation}
and $\s = \s_Q - \bbI_{2M}/2$ is the covariance matrix in Eq.~\eqref{eq:covmatrix}.

Note that in order for $\s$ to be a covariance matrix, $C$ has to satisfy certain restrictions which will be elaborated on later.
We call a GBS detection event corresponding to the measurement pattern $\bs{n}\df(n_1,\dots,n_{M})$ of an $M$-mode matrix $C$ of size $2M$ \emph{pure} if $n_i=n_j,\forall i,j$ and \emph{mixed} otherwise. For $\n=(1,\dots,1)$ Eq.~\eqref{eq:ProbMixedGBS} reduces to~\eqref{eq:p11111}~\cite{hamilton2016gaussian}.

\subsection{Graphs: eigenvalues, isospectral graphs, strongly regular graphs and graph isomorphism}

We now recall briefly some well known results on graphs that we use in this paper.
An undirected simple graph consist of a set of $n$-vertices $V=\{v_1,\ldots,v_n\}$ which we identify with $[n]$.  The set of edges $E(G)=E=\{e_1,\ldots,e_m\}$ is the set of unordered pairs $e=(u,v)$ where $u\ne v\in V$.
We say that $e$ connects $u$ and $v$, or $e$ is adjacent to $u$ and $v$.   A simple path in $G$ is an ordered subset of $E$: $(u_1,u_2),\ldots,(u_k,u_{k+1})$.  A graph $G$ is called connected if for each pair of vertices $u\ne v$ there is simple path such that $u_1=u$ and $u_{k+1}=v$.

A compete graph
on $n$-vertices is denoted by $K_n$, so the cardinality of its edges is $|E(K_n)|=n(n-1)/2$.  Given a simple graph $G$ on $n$-vertices then $E(G)$ is a subset of $E(K_n)$.   Hence $m\le n(n-1)/2$.
The complement of $G$, denoted as $G^c$, is a graph on $[n]$ vertices with the edges $E(G^c)=E(K_n)\setminus E(G)$.  Thus the complement of $K_n$ is a graph with no edges, the null graph.  Given a subset $W\subset V$ the induced subgraph $G(W)=(W, E(G(W))$, where $E(G(W))$ is the subset of edges in $V$ that connect two vertices in $W$.   A clique is a subgraph  $G(W)$ which is a complete graph on $W$.  A graph $G$ is called regular (or $k$-regular) if each vertex $v\in V$ is adjacent to exactly $k$-edges.  A graph
$G$ is called bipartite if $V$ is a union of disjoints subsets of vertices $V_1,V_2$ where the edges $E$ connect vertices in $V_1$ to vertices in $V_2$.

The adjacency matrix $A(G)=A=[a_{ij}]$ is an $n\times n$ symmetric matrix with zero diagonal whose off diagonal entries are zero or one.   Then $a_{ij}=1$ if and only if $(i,j)\in E(G)$.  If we relabel the vertices of $G$, i.e., apply a bijection $\sigma: [n]\to [n]$, then the  new adjacency matrix $\tilde A$ is $PAP^\top$ for some permutation matrix $P\in \Pcal_n$.  As $A$ is real symmetric it has $n$ real eigenvalues, counted with their multiplicities: $\lambda_1(G)=\lambda_1\ge \cdots\ge \lambda_n(G)=\lambda_n$.  As $A$ is a nonnegative matrix the Perron-Frobenius theorem yields that $\lambda_1\ge |\lambda_n|$.  Equality holds is and only if $G$ is bipartite.
If $G$ is connected then $\lambda_1>\lambda_2$, that is, $\lambda_1$ is a simple eigenvalue of $G$.  Furthermore, if $G$ is a $k$-regular graph then $\lambda_1=k$.
Since $\Pcal_n$ is a subgroup of the group of orthogonal matrices, it follows that the eigenvalues of $G$ do not depend on the labeling of the vertices of $G$.

A graph $H$ on $n$-vertices is called isomorphic to $G$, if $H$ is a relabeling  $G$.  That is, if $A(H)=PA(G)P^\top$ for some $P\in\Pcal_n$.  Thus a necessary conditions
for two graph to be isomorphic is to be isospectral, i.e., to have the same sequence of eigenvalues.   That is, $A(G)$ and $A(H)$ have the same characteristic polynomial.
Hence we can check in polynomial time if $G$ and $H$ are isospectral.
As we pointed out in Introduction, the problem of graph isomorphism (GI) lies at an interesting point in the landscape of computational complexity theory.

In studying the graph isomorphism problem, it is convenient to consider a class of graphs known to be classically intractable to distinguish. An important tractable feature is the graph eigenvalues and the first examples of isospectral graphs were  pairs of isospectral nonisomorphic graphs (PINGs)~\cite{baker1966drum}. The smallest connected example of a PING is on six vertices, see Fig.~\ref{fig:PING6v}.
    \begin{figure}[h]
      \resizebox{8cm}{!}{\includegraphics{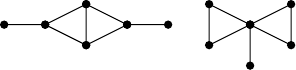}}
      \caption{ PING on six vertices. }
      \label{fig:PING6v}
    \end{figure}
PINGs may have some tractable features enabling one to easily decide that they are not isomorphic.

One of best known positive results in graph isomorphism is the following result of  Babai-Grigoryev-Mount \cite{BGM}:  Let $k$ be a fixed integer.  Assume that $G$ and $H$ are isospectral, and the multiplicity of each eigenvalue is at most $k$.  Then there exists a polynomial time algorithm that decides if $G$ and $H$ are isomorphic.

The situation gets complicated for graphs with symmetries such as strongly regular graphs (SRGs), defined as follows~\cite{godsil2001strongly}.
\begin{defi}
	Let $G(V,E)$ be a regular graph of degree $k$ consisting of $N$ vertices and adjacency matrix $A$, that is neither a complete graph ($A\neq\bbJ_N-\bbI_N$) nor a null graph. $G$ is then said to be strongly regular with parameters SRG$(N,k,\lambda,\mu)$ if every pair of adjacent vertices have exactly $\lambda$ common neighbours, and every pair of \textit{non}-adjacent vertices have exactly $\mu$ common neighbours.
\end{defi}
Recall also that an SRG graph has three distinct eigenvalues: simple eigenvalue $\lambda_1=k$, and other two eigenvalues with high multiplicity: at least one eigenvalue of multiplicity  at least $(n-1)/2$.

\begin{lem}
	Let $v_i\in V$ be a vertex in SRG$(N,k,\lambda,\mu)$. Then $k(k-\lambda-1)=\mu(N-k-1)$. Thus the SRG parameters are not independent.
\end{lem}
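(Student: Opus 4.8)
The plan is to carry out a double-counting argument on the set of paths of length two emanating from a fixed vertex $v_i$. Fix a vertex $v_i \in V$. Since $G$ is $k$-regular, $v_i$ has exactly $k$ neighbours; call this set $N(v_i)$, and let $R$ be the set of the remaining $N-k-1$ vertices (those distinct from $v_i$ and not adjacent to $v_i$). I will count, in two ways, the number of ordered pairs $(a,b)$ with $a \in N(v_i)$, $b \in R$, and $a$ adjacent to $b$; equivalently, the number of edges between $N(v_i)$ and $R$.

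First I would count from the side of $N(v_i)$. Take any $a \in N(v_i)$. Then $a$ is adjacent to $v_i$, so by the SRG condition $a$ and $v_i$ have exactly $\lambda$ common neighbours; all of these lie in $N(v_i)$. Since $a$ has degree $k$, and one of its neighbours is $v_i$ itself while $\lambda$ of them lie in $N(v_i)$, the number of neighbours of $a$ lying in $R$ is $k - 1 - \lambda$. Summing over the $k$ choices of $a$ gives $k(k-\lambda-1)$ edges between $N(v_i)$ and $R$. Next I would count from the side of $R$. Take any $b \in R$. Then $b$ is not adjacent to $v_i$, so by the SRG condition $b$ and $v_i$ have exactly $\mu$ common neighbours, and these necessarily lie in $N(v_i)$. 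Hence each $b \in R$ contributes exactly $\mu$ edges to the bipartite count, for a total of $\mu(N-k-1)$. Equating the two counts yields $k(k-\lambda-1) = \mu(N-k-1)$, as claimed. The fact that $G$ is neither complete nor null guarantees $1 \le k \le N-2$, so both sides are well-defined (in particular $R$ is nonempty when $\mu$ is the relevant parameter), and the identity is a genuine constraint rather than vacuous.

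There is no real obstacle here; the only point requiring a little care is the bookkeeping for a neighbour $a$ of $v_i$, namely that its $k$ incident edges split as one to $v_i$, exactly $\lambda$ to other vertices of $N(v_i)$, and the rest to $R$ — using that $a \notin R$ and $a$ is not counted among its own neighbours. Once that is made explicit, the two expressions for the same edge count coincide and the proof is complete.
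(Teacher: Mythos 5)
Your proof is correct and is exactly the double-counting argument the paper sketches in one sentence (counting edges between the neighbours and non-neighbours of a fixed vertex in two ways); you have simply supplied the bookkeeping the paper leaves implicit. No further comment is needed.
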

\begin{proof}
  Consider a vertex $v$ in a graph with parameters $SRG(N,k,\lambda,\mu)$, and count in two ways the number of edges from vertices adjacent to $v$ to vertices non-adjacent to $v$.
\end{proof}
If multiple non-isomorphic strongly regular graphs share the same set of SRG parameters, we refer to this graph set as an \emph{SRG family}, often simply denoted by the SRG parameters $(N,k,\lambda,\mu)$. Graphs within the same SRG family share various properties that are dependent only on the SRG parameters. The spectrum  is rather special, consisting of just three eigenvalues with known multiplicities.  SRG families contain non-isomorphic graphs which are isospectral, and difficult to distinguish using common classical measures~\cite{godsil2001strongly}.

\section{Graph isomorphism via Gaussian boson sampling}\label{sec:main}

\subsection{Multiphoton contributions in GBS -- supporting results}\label{subsec:suppresults}
We start with the exploration of how to interpret Eq.~\eqref{eq:ProbMixedGBS} for the detection events $\n$ where $n_i>1$ for some $i$. This corresponds to a multiphoton contribution of the output probability function. The multiphoton contributions play a vital role in our analysis.
\begin{defi}\label{def:sox}
Assume that $A=[a_{ij}]\in \bbR^{m_1\times m_2}$ and in total $m_1\times m_2$ matrices $B=[B_{ij}]$ where each $B_{ij}$ is an $n_i\times n_j$ matrix.  Then the \emph{reduced Kronecker product} $C=A\sox B$ will denote a block matrix $C$ partitioned as $B$ and the blocks are $C=[a_{ij}B_{ij}]$.
\end{defi}
\begin{rem}
  The dimension of $A\sox B$ is the dimension of the matrix $B$.  Then $A\sox B$ is a submatrix of the Kronecker tensor product of two matrices $A\otimes B=[a_{ij}B]$.   The matrix $B$ in this paper will always be assembled from $B_{ij}=\bbJ_{n_i,n_j}\in\bbR^{n_i\times n_j}$ where $\n$ is a measurement pattern. In this case we will write $B=\bbJ_{|\n|}\in\bbR^{|\n|\times|\n|}$ where $|\n|=\sum_{i=1}^{M}n_i$. Note that if $\n$ is a pure event then the reduced Kronecker product $\sox$ becomes the ordinary Kronecker product $A\ox\bbJ_{n_1}$. Also note that if $\dim{B_{ij}}=1,\forall i,j$ the reduced Kronecker product becomes the Hadamard (Schur) product.
\end{rem}
\begin{exa}
  Let $\n=(3,2,1,4)$ and $A$ an adjacency matrix of a simple weighted graph (without loops). Then
\[
A\sox\bbJ_{|\n|}=\left[\begin{array}{@{}c|c|c|c@{}}
\raisebox{-1.2pt}{\mbox{\Large 0}}
&
  \begin{matrix}
  a_{12} & a_{12} \\
  a_{12} & a_{12} \\
  a_{12} & a_{12}
    \end{matrix}
&
  \begin{matrix}
  a_{13}  \\
  a_{13}  \\
  a_{13}
    \end{matrix}
&
  \begin{matrix}
  a_{14} & a_{14} &  a_{14} & a_{14} \\
  a_{14} & a_{14} &  a_{14} & a_{14} \\
    a_{14} & a_{14} &  a_{14} & a_{14}
    \end{matrix}
\\\hline
      \begin{matrix}
        a_{12} & a_{12} & a_{12} \\
        a_{12} & a_{12} & a_{12} \\
      \end{matrix}
      &
      \raisebox{-1.2pt}{\mbox{\Large 0}}
    &
      \begin{matrix}
      a_{23}  \\
      a_{23}
        \end{matrix}
    &
      \begin{matrix}
      a_{24} & a_{24} &  a_{24} & a_{24} \\
      a_{24} & a_{24} &  a_{24} & a_{24}
        \end{matrix}
        \\\hline
       \begin{matrix}
        a_{13} & a_{13} & a_{13}
      \end{matrix}
    &
      \begin{matrix}
        a_{23} & a_{23} \\
      \end{matrix}
    &
      \raisebox{-1.2pt}{\mbox{\Large 0}}
    &
      \begin{matrix}
      a_{34} & a_{34} &  a_{34} & a_{34}
        \end{matrix}
\\\hline
      \begin{matrix}
  a_{14} & a_{14} &  a_{14}  \\
  a_{14} & a_{14} &  a_{14} \\
    a_{14} & a_{14} &  a_{14} \\
    a_{14} & a_{14} &  a_{14}
    \end{matrix}
        &
      \begin{matrix}
      a_{24} & a_{24}  \\
      a_{24} & a_{24} \\
      a_{24} & a_{24}  \\
      a_{24} & a_{24}
        \end{matrix}
    &
  \begin{matrix}
  a_{34}  \\
  a_{34}  \\
  a_{34}  \\
  a_{34}
    \end{matrix}
      &
      \raisebox{-1.2pt}{\mbox{\Large 0}}
\end{array}\right].
\]
\end{exa}
The reason for introducing a new kind of structure is a compact expression for the probability of measurement of a mixed multiphoton event $\n$ as a hafnian function not unlike~Eq.~\eqref{eq:p11111} for $\n=(1,\dots,1)$.
\begin{defi}\label{def:GBSenc}
A $2M\times 2M$-dimensional real symmetric matrix $R$ will be called \emph{GBS encodable} if we can find a covariance matrix $\s_Q$ such that
\begin{equation}
 R = X_{2M}\big( \bbI_{2M} - \sigma_Q^{-1} \big).
\end{equation}
\end{defi}
Ref.~\cite{bradler2017gaussian} introduced a necessary criterion for $R$ to be GBS encodable. For some real symmetric $\tilde R$ not satisfying the conditions a general procedure was created to produce a matrix related to $\tilde R$ that is GBS-encodable. It consists of taking $\tilde R\mapsto R\df c(\tilde R\oplus \tilde R)$ where $0<c<1/\|\tilde R\|_2$.

Even though this procedure is always guaranteed to succeed in creating a Gaussian covariance matrix, it is not a necessary condition. Here we strengthen this previous result by loosening the requirements on~$R$.
\begin{lem}\label{lem:posdefsigA}
Let $G\in \rH_{N}$ and assume that $G=(\bbI_N-F)^{-1}-\frac{1}{2}\bbI_N$.  Then
\begin{enumerate}
\item $G\succ 0$ if and only if $\|F\|_2<1$.
\item  Suppose that $N=2M$, $F=\begin{bmatrix}F_{11}&F_{12}\\F_{21}&F_{22}\end{bmatrix}$.  Then the following conditions hold
\begin{equation}\label{eq:posdefsigA1}
G\succ 0, \quad  G+ \frac{1}{2}\begin{bmatrix}\bbI_M&0\\0&-\bbI_M\end{bmatrix}\succeq 0
\end{equation}
if and only if $\|F\|_2<1$ and $F_{22}\succeq 0$.
\end{enumerate}
\end{lem}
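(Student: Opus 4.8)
The plan is to reduce everything to statements about eigenvalues of $F$ and of a related block. For part (1), I would start from the spectral relation: if $F$ is hermitian with eigenvalues $\mu_j$, then $(\bbI_N-F)^{-1}$ is well-defined precisely when $1 \notin \spec(F)$, and in that case $G$ is hermitian with eigenvalues $\frac{1}{1-\mu_j} - \frac12 = \frac{1+\mu_j}{2(1-\mu_j)}$. So $G \succ 0$ iff $\frac{1+\mu_j}{1-\mu_j} > 0$ for every $j$, which (treating the cases $\mu_j < 1$ and $\mu_j > 1$ separately) holds iff $-1 < \mu_j < 1$ for all $j$, i.e.\ iff $\|F\|_2 < 1$. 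This direction is essentially a one-line computation once the eigenvalue formula is written down; I'd present it cleanly and note that $\|F\|_2<1$ also guarantees $(\bbI_N-F)^{-1}$ exists, so there is no hidden domain issue.

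For part (2), the forward implication is easy given part (1): $G \succ 0$ already forces $\|F\|_2 < 1$. It remains to see why the second condition in \eqref{eq:posdefsigA1} is equivalent to $F_{22} \succeq 0$. Write $D = \frac12\begin{bmatrix}\bbI_M & 0\\ 0 & -\bbI_M\end{bmatrix}$, so we must analyze $G + D = (\bbI_N - F)^{-1} - \bbI_N + D = (\bbI_N-F)^{-1} - \frac12\begin{bmatrix}\bbI_M & 0 \\ 0 & 3\bbI_M\end{bmatrix}$. That particular combination does not look like it simplifies directly, so instead I would massage $(\bbI_N-F)^{-1}$ using a congruence. The key idea: multiply $\bbI_N - F$ on both sides by $X_{2M}$ (which is its own inverse) or by a suitable signature matrix so that the $(2,2)$ block of the transformed object is exactly $F_{22}$ (up to sign and an invertible shift). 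Concretely, note $\bbI_N - F = \begin{bmatrix}\bbI_M - F_{11} & -F_{12}\\ -F_{21} & \bbI_M - F_{22}\end{bmatrix}$; conjugating by $\diag(\bbI_M, -\bbI_M)$ flips the signs of the off-diagonal blocks but leaves the diagonal blocks alone, which is not yet what I want.

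The cleaner route, and the step I expect to be the crux, is to use the fact that under the hypothesis $\|F\|_2 < 1$ one has the convergent Neumann series $(\bbI_N - F)^{-1} = \sum_{k \ge 0} F^k$, and more usefully, $(\bbI_N-F)^{-1}$ and $\bbI_N - F$ have the same $\succ 0 / \succeq 0$ behaviour under the partial order only for the whole matrix, not blockwise — so I need a Schur-complement argument. Specifically, $G + D \succeq 0$ is, after the manipulations above, equivalent to positive semidefiniteness of a $2M \times 2M$ hermitian matrix whose Schur complement with respect to its invertible (by part (1), since $\|F\|_2<1$) top-left block reduces exactly to a positive multiple of $F_{22}$ plus terms that cancel. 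The plan is: (i) write $G + D$ explicitly in block form using $\sigma_Q = (\bbI_N - F)^{-1}$ and the block-inverse formula for $\sigma_Q$ in terms of the blocks of $\bbI_N - F$; (ii) take the Schur complement with respect to the top-left block, which is positive definite because the full matrix $\bbI_N-F \succ 0$ under $\|F\|_2 <1$... wait, $\bbI_N - F$ need not be positive definite, only invertible; but its top-left block $\bbI_M - F_{11}$ is positive definite since $F_{11} \preceq \|F\|_2 \bbI_M \prec \bbI_M$; (iii) observe that after simplification the Schur complement equals $F_{22}$ exactly, giving the equivalence $G + D \succeq 0 \iff F_{22} \succeq 0$. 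The main obstacle is bookkeeping in step (i)–(iii): making sure the block-inverse identities are applied correctly and that the spurious cross-terms genuinely cancel, rather than merely simplify. I would double-check the cancellation on the $M=1$ (so $N=2$) case first, where everything is explicit, before trusting the general block computation.
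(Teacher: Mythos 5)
Part (1) is correct and is essentially the paper's argument (the paper phrases it as $\lambda_1(F)<1$ plus $\lambda_N(F)>-1$; your direct eigenvalue formula $\tfrac{1+\mu_j}{2(1-\mu_j)}$ is the same computation).

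Part (2) has a concrete gap, and it starts with an arithmetic slip: you wrote $G+D=(\bbI_N-F)^{-1}-\bbI_N+D$, but $G=(\bbI_N-F)^{-1}-\tfrac12\bbI_N$, so in fact
\[
G+\tfrac12\begin{bmatrix}\bbI_M&0\\0&-\bbI_M\end{bmatrix}=(\bbI_N-F)^{-1}-\begin{bmatrix}0&0\\0&\bbI_M\end{bmatrix},
\]
and the whole condition collapses to $(\bbI_N-F)^{-1}\succeq\diag(0,\bbI_M)$. Your miscomputed target $(\bbI_N-F)^{-1}-\tfrac12\diag(\bbI_M,3\bbI_M)$ genuinely does not simplify, which is why you then wander through $X_{2M}$-conjugation and Neumann series before settling on a Schur-complement plan that you never execute: the decisive step ``observe that after simplification the Schur complement equals $F_{22}$ exactly'' is asserted, not proved, and as stated it is false. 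If you fix the arithmetic, the Schur route does close: writing $S=\bbI_N-F\succ0$ and $T=S^{-1}$, the block $T_{11}\succ0$, and the Schur complement of $T_{11}$ in $T-\diag(0,\bbI_M)$ is $(T_{22}-T_{21}T_{11}^{-1}T_{12})-\bbI_M=S_{22}^{-1}-\bbI_M=(\bbI_M-F_{22})^{-1}-\bbI_M$ by the standard inverse-of-a-block identity; since $\bbI_M-F_{22}\succ0$, this is $\succeq0$ iff $F_{22}\succeq0$ --- not ``$F_{22}$ exactly,'' but equivalent to it. So the plan is salvageable, but as written the proof of (2) is incomplete: the key cancellation is both unverified and misstated. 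For comparison, the paper avoids Schur complements entirely: it conjugates $(\bbI_N-F)^{-1}\succeq\diag(0,\bbI_M)$ by $(\bbI_N-F)^{1/2}$ to get $\bbI_N\succeq(\bbI_N-F)^{1/2}\diag(0,\bbI_M)(\bbI_N-F)^{1/2}$ and then computes $\lambda_1\bigl((\bbI_N-F)\diag(0,\bbI_M)\bigr)=1-\lambda_M(F_{22})$ from the block-triangular product $\begin{bmatrix}0&-F_{12}\\0&\bbI_M-F_{22}\end{bmatrix}$; both routes are about equally short once the correct reduction of $G+D$ is in hand.
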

\begin{proof}\mbox{}
\begin{enumerate}
    \item Clearly we need to have that $(\bbI_N-F)^{-1}\succ 0$. This equivalent to $\lambda_1(F)<1$.  The assumption that $(\bbI_N-F)^{-1}\succ \frac{1}{2}\bbI_N$ is equivalent to $\lambda_N((\bbI_N-F)^{-1})>\frac{1}{2}$.  This is equivalent to $\lambda_N(F)>-1$.  Therefore the claim follows.
    \item Use~(1) to get the assumption that  $G\succ 0$ is equivalent to  $\|F\|_2<1$.  The assumption that $G+ \frac{1}{2}\begin{bmatrix} \bbI_M&0\\0&-\bbI_M\end{bmatrix}\succeq 0$ is equivalent to $(\bbI_N-F)^{-1}\succeq\begin{bmatrix} 0&0\\0&\bbI_M\end{bmatrix}$.  This is equivalent to
        $$
        \bbI_N\succeq (\bbI_N-F)^{\frac{1}{2}}\begin{bmatrix}0&0\\0&\bbI_M\end{bmatrix}(\bbI_N-F)^{\frac{1}{2}}\succeq 0.
        $$
        This inequality is equivalent to
        \[
        1\ge \la_1\Big((\bbI_N-F)^{\frac{1}{2}}\begin{bmatrix}0&0\\0&\bbI_M\end{bmatrix} (\bbI_N-F)^{\frac{1}{2}}\Big)=\lambda_1\Big((\bbI_N-F)\begin{bmatrix} 0&0\\0&\bbI_M\end{bmatrix}\Big).
        \]
        Observe next that
        \[
        (\bbI_N-F)\begin{bmatrix} 0&0\\0&\bbI_M\end{bmatrix}=\begin{bmatrix} 0&-F_{12}\\0&\bbI_M-F_{22}\end{bmatrix}.
        \]
        Hence
        \[
        1\ge \lambda_1\Big((\bbI_N-F)\begin{bmatrix} 0&0\\0&\bbI_M\end{bmatrix}\Big)=1-\lambda_M(F_{22})
        \]
        is equivalent to $\lambda_M(F_{22})\ge 0$, that is $F_{22}\succeq 0$.
\end{enumerate}
\end{proof}
\begin{cor}\label{cor:fromadjm}  Let  $R\in\bbR^{2M\times2M}$ be a nonzero real symmetric matrix with the following partition to $M\times M$ blocks: $R=\begin{bmatrix}
    R_{11}&R_{12}\\
    R_{21}&R_{22}
\end{bmatrix}$.
Then there exists a Gaussian covariance matrix $\sigma$ such that
$cR=X_{2M}[\bbI_{2M}-(\sigma+\frac{1}{2}\bbI_{2M})]$ if and only if:
\begin{enumerate}
\item $R_{11}=R_{22}$ and $R_{12}=R_{21}$.
\item $R_{12}\succeq 0$.
\item $c\in (0,\frac{1}{\|R\|_2})$.
\end{enumerate}
\end{cor}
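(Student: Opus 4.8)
The plan is to reduce the statement to Lemma~\ref{lem:posdefsigA}(2) by substituting $F := c\,X_{2M}R$, so that the covariance matrix $\sigma$ we are looking for is exactly the matrix $G=(\bbI_{2M}-F)^{-1}-\tfrac12\bbI_{2M}$ of that lemma, with $\sigma_Q:=\sigma+\tfrac12\bbI_{2M}=(\bbI_{2M}-F)^{-1}$, and the displayed identity becomes $cR=X_{2M}(\bbI_{2M}-\sigma_Q^{-1})$, i.e.\ the GBS-encodability condition of Definition~\ref{def:GBSenc}. Before doing so I would record four elementary facts about $F$. (a) Since $X_{2M}$ is a symmetric involution, hence orthogonal, $\|X_{2M}R\|_2=\|R\|_2$, so $\|F\|_2=c\,\|R\|_2$. (b) A block computation gives $X_{2M}R=\left[\begin{smallmatrix}R_{21}&R_{22}\\R_{11}&R_{12}\end{smallmatrix}\right]$, so the lower-right $M\times M$ block of $F$ is $F_{22}=c\,R_{12}$. (c) $F=cX_{2M}R$ is symmetric if and only if $R$ commutes with $X_{2M}$, i.e.\ if and only if $R_{11}=R_{22}$ and $R_{12}=R_{21}$ --- precisely condition~(1) --- and under~(1) the matrix $F$ has the ``self-conjugate'' block form $\left[\begin{smallmatrix}P&Q\\Q&P\end{smallmatrix}\right]$ with $P=cR_{12}$ and $Q=cR_{11}$ symmetric. (d) Matrices of that form are closed under inversion (when invertible) --- conjugating by the orthogonal involution $\tfrac1{\sqrt2}\left[\begin{smallmatrix}\bbI_M&\bbI_M\\\bbI_M&-\bbI_M\end{smallmatrix}\right]$ block-diagonalizes $\left[\begin{smallmatrix}P&Q\\Q&P\end{smallmatrix}\right]$ to $\diag(P+Q,\,P-Q)$ --- and a real symmetric Gaussian covariance matrix necessarily has this block form ($\sigma$, equivalently $\sigma_Q$, in the $a,a^\dagger$ frame is constrained this way by the reality of the underlying state), while, on top of this structure, ``$\sigma$ is a Gaussian covariance matrix'' amounts exactly to the two positivity conditions of~\eqref{eq:posdefsigA1}: $\sigma\succ0$ and $\sigma+\tfrac12\left[\begin{smallmatrix}\bbI_M&0\\0&-\bbI_M\end{smallmatrix}\right]\succeq0$. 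Throughout I take $c>0$, as befits a scaling constant (cf.\ the discussion preceding the corollary), so that~(3) reads $c<1/\|R\|_2$.

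With (a)--(d) in hand the two implications are short. For the ``if'' direction I would assume (1)--(3): by~(c), $F$ is symmetric of block form $\left[\begin{smallmatrix}P&Q\\Q&P\end{smallmatrix}\right]$; by~(a) and~(3), $\|F\|_2<1$; by~(b) and~(2), $F_{22}=cR_{12}\succeq0$. Lemma~\ref{lem:posdefsigA}(2) then yields $G=(\bbI_{2M}-F)^{-1}-\tfrac12\bbI_{2M}$ satisfying the two properties of~\eqref{eq:posdefsigA1}, and by~(d) the matrix $\sigma:=G$ also inherits the required block structure, so it is a genuine Gaussian covariance matrix; finally $X_{2M}(\bbI_{2M}-\sigma_Q^{-1})=X_{2M}\bigl(\bbI_{2M}-(\bbI_{2M}-F)\bigr)=X_{2M}F=cR$. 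For the ``only if'' direction I would start from a valid $\sigma$ satisfying the identity and set $F:=\bbI_{2M}-\sigma_Q^{-1}$, so the identity reads $F=cX_{2M}R$; the block form of $\sigma_Q$, inherited by $\sigma_Q^{-1}$ and hence by $F$, compared with $c\left[\begin{smallmatrix}R_{21}&R_{22}\\R_{11}&R_{12}\end{smallmatrix}\right]$ and cancelling $c>0$, forces $R_{21}=R_{12}$ and $R_{22}=R_{11}$, which is~(1) (and in particular makes $F$ symmetric). Then, $\sigma$ being a Gaussian covariance matrix, it satisfies~\eqref{eq:posdefsigA1}, so Lemma~\ref{lem:posdefsigA}(2) gives $\|F\|_2<1$ and $F_{22}\succeq0$, which by~(a) and~(b) are exactly~(3) and~(2).

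The only genuinely non-routine ingredient is~(d): identifying, in the paper's conventions, the self-conjugate block structure $\left[\begin{smallmatrix}S&T\\T&S\end{smallmatrix}\right]$ of a Gaussian covariance matrix in the $a,a^\dagger$ frame and checking it is preserved under $X\mapsto\bbI_{2M}-X^{-1}$; this is what links the abstract Lemma~\ref{lem:posdefsigA} to the GBS-encodable setting of Definition~\ref{def:GBSenc}, and is essentially the content recalled from~\cite{bradler2017gaussian}. Everything else is the norm identity~(a), the block identity~(b), and two applications of Lemma~\ref{lem:posdefsigA}(2). One caveat I would flag in the write-up: the biconditional as stated needs some hypothesis ruling out $c\le0$ (for instance, with $R=\left[\begin{smallmatrix}A&0\\0&A\end{smallmatrix}\right]$, $A\succ0$, and small $c<0$, the identity is still solvable by a valid $\sigma$ while~(3) fails), which is why I keep $c>0$ in force throughout.
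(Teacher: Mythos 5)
Your proof is correct and follows exactly the route the paper intends: the paper leaves the corollary as an immediate consequence of Lemma~\ref{lem:posdefsigA}(2) via the substitution $F=cX_{2M}R$, recording only the observation $\|X_{2M}R\|_2=\|R\|_2$ for part (3), and your write-up simply fills in the block computations and the identification of \eqref{eq:posdefsigA1} with the covariance-matrix conditions. Your caveat that the stated biconditional implicitly assumes $c>0$ is a fair and correct observation, consistent with the paper's use of $c$ as a positive scaling constant throughout.
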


Part \emph{(3)} follows from the observation the  singular values of $R$ and $X_{2M}R$ are the same.  Hence $\|R\|_2=\|X_{2M} R\|_2$.
\begin{rem}
  The previous lemma was presented for the sake of completeness. In the rest of the paper we use the construction of GBS-encodable matrices introduced already in~\cite{bradler2017gaussian}.
\end{rem}

For future reference we recall the following straightforward result~\cite{bradler2017gaussian}.
\begin{lem}\label{lem:hafsquared}
  Let $M$ be even and $C=A\op A$ a real symmetric matrix of dimension $2M\times2M$. Then
  \begin{equation}\label{eq:hafdoubled}
    \haf{[c(C+k\bbI_{2M})]}=c^{M}\haf^{\,2}{A},
  \end{equation}
  where $c>0$ and $k\in\bbR$.
\end{lem}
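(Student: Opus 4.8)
The plan is to deduce the identity from three elementary properties of the hafnian, each immediate from Definition~\ref{def:haf}: that it ignores diagonal entries, that it is homogeneous of degree $M$ on $2M\times 2M$ matrices, and that it is multiplicative over the relevant direct sums.

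First I would note that in every partition $\varsigma\in\Mcal_{2M}$ the pairs $(uv)$ satisfy $u\neq v$, so no diagonal entry $c_{uu}$ ever appears in the product $\prod_{(uv)\in\varsigma}c_{uv}$; since $c(C+k\bbI_{2M})$ and $cC$ differ only along the diagonal, $\haf{[c(C+k\bbI_{2M})]}=\haf{[cC]}$. Next, each such product has exactly $M$ factors, one per pair, so replacing $C$ by $cC$ multiplies every summand — hence the whole sum — by $c^M$; thus $\haf{[cC]}=c^M\haf{C}$. It then remains only to evaluate $\haf{C}=\haf{(A\oplus A)}$.

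For the last step I would split the index set $[2M]$ into the two blocks $\{1,\dots,M\}$ and $\{M+1,\dots,2M\}$ carrying the two copies of $A$. Because the off-diagonal blocks of $A\oplus A$ vanish, any $\varsigma$ containing a pair that straddles the two blocks contributes $0$, so only those $\varsigma$ whose pairs each lie within a single block matter. Here is where the hypothesis that $M$ is even enters: it guarantees that a block of size $M$ admits perfect matchings, so every contributing $\varsigma$ is precisely an unordered union $\varsigma_1\cup\varsigma_2$ with $\varsigma_i$ a perfect matching of block $i$, and conversely every such union arises. The sum over $\varsigma$ therefore factors, $\haf{(A\oplus A)}=\big(\sum_{\varsigma_1}\prod_{(uv)\in\varsigma_1}a_{uv}\big)\big(\sum_{\varsigma_2}\prod_{(uv)\in\varsigma_2}a_{uv}\big)=\haf^{\,2}{A}$, and combining the three steps gives $\haf{[c(C+k\bbI_{2M})]}=c^M\haf^{\,2}{A}$. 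There is no genuine obstacle here; the only point requiring a word of care is this parity remark — for odd $M$ both sides vanish, but then $\haf{A}$ would not even fall under Definition~\ref{def:haf}, which is why the even case is the one stated.
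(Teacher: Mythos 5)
Your proof is correct. Note that the paper itself does not prove this lemma --- it is recalled as a ``straightforward result'' from the earlier reference on Gaussian boson sampling for perfect matchings --- so there is no in-paper argument to compare against; your three-step decomposition (diagonal entries never occur in a pairing, homogeneity of degree $M$, and block-factorization of $\haf{(A\oplus A)}$ into $\haf^{\,2}A$ using that the off-diagonal blocks vanish) is exactly the standard argument that the authors are implicitly relying on, and your remark that evenness of $M$ is what makes $\haf{A}$ well defined and the within-block matchings nonempty correctly identifies the only point of care.
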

\begin{lem}
  Assume   $C = [c_{ij}]\in\bbR^{2M\times2M}$ is a symmetric and GBS-encodable matrix. Then the probability of sampling in the GBS event, Eq.~\eqref{eq:ProbMixedGBS}, can be expressed as
  \begin{equation}\label{eq:ProbMixed}
  p(\bs{n})={1\over{\n!}\sqrt{\det{\s_Q}}}\haf{[C\sox\bbJ_{2|\n|}]}.
  \end{equation}
\end{lem}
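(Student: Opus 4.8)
The goal is to turn the derivative expression \eqref{eq:ProbMixedGBS} into the hafnian \eqref{eq:ProbMixed}. The plan is to start from the known base case $\n=(1,\dots,1)$, where \eqref{eq:ProbMixedGBS} already reduces to $p(1,\dots,1) = \haf{C}/\sqrt{\det\s_Q}$, and to interpret a general measurement pattern $\n$ as the pattern $(1,\dots,1)$ on a \emph{larger} $2|\n|$-mode system whose encoding matrix is exactly $C\sox\bbJ_{2|\n|}$. Concretely, I would examine the generating function $e^{\frac12\bs\g^\top C\bs\g}$ and observe that replacing a single variable $\b_i$ by a sum $\b_i^{(1)}+\dots+\b_i^{(n_i)}$ of $n_i$ fresh variables (and similarly for $\ol\b_i$) turns the single operator $\partial^{n_i}_{\b_i,\ol\b_i}$ into $n_i!\,$ times the product $\prod_{k=1}^{n_i}\partial_{\b_i^{(k)},\ol\b_i^{(k)}}$ evaluated at zero. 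This is the standard polarization identity: $\partial^n_t f(t)\big|_{t=0} = \prod_{k=1}^n \partial_{t_k} f(t_1+\dots+t_n)\big|_{t=0}$, which accounts precisely for the $1/\n!$ prefactor. After this substitution the exponent becomes $\frac12\,\tilde\g^\top (C\sox\bbJ_{2|\n|})\,\tilde\g$, because the coefficient $c_{ij}$ multiplying $\b_i\b_j$ (resp. $\b_i\ol\b_j$, etc.) gets distributed over all pairs $(\b_i^{(k)},\b_j^{(\ell)})$, which is exactly the block structure $c_{ij}\bbJ_{n_i,n_j}$ of the reduced Kronecker product (the $X_{2M}$-block layout is inherited by the $2M\mapsto 2|\n|$ doubling, giving $\bbJ_{2|\n|}$ rather than $\bbJ_{|\n|}$).

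Once the problem is recast in this form, I would apply the base-case formula \eqref{eq:p11111} in the enlarged system: since the new pattern is all-ones on $|\n|$ modes, its probability is $\frac{1}{\sqrt{\det\s_Q'}}\haf{[C\sox\bbJ_{2|\n|}]}$ where $\s_Q' = (\bbI_{2|\n|} - X_{2|\n|}(C\sox\bbJ_{2|\n|}))^{-1}$. It then remains to check that $\det\s_Q' = \det\s_Q$, i.e. that the replication of variables does not change the Gaussian normalization. This should follow from a direct determinant computation: $\bbI - X(C\sox\bbJ)$ has a block structure in which, within each "replicated" block, all rows are equal except for the identity contribution, so its determinant collapses (by row operations subtracting equal rows, or by the matrix-determinant lemma applied blockwise) to $\det(\bbI_{2M} - X_{2M}C) = \det\s_Q^{-1}$. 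Alternatively, one can cite that the physical state is unchanged: replicating a mode into $n_i$ copies that are then "recombined" corresponds to an isometry on the Hilbert space, leaving $\det\s_Q$ invariant.

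The main obstacle I anticipate is bookkeeping the $2M$-versus-$|\n|$-versus-$2|\n|$ index structure carefully: one must be sure that the doubled (holomorphic/antiholomorphic) block form of $C$ and $X_{2M}$ interacts correctly with the replication, so that the resulting matrix is genuinely $C\sox\bbJ_{2|\n|}$ with $\bbJ$ of size $2|\n|$ (this is why the statement has $\bbJ_{2|\n|}$ and not $\bbJ_{|\n|}$), and that the combinatorics of pairings in $\haf{[C\sox\bbJ_{2|\n|}]}$ matches the derivative expansion term-by-term. A clean way to organize this is to expand $e^{\frac12\tilde\g^\top(C\sox\bbJ)\tilde\g}$ as a power series, note that only the top-degree term survives the evaluation of all first-order derivatives at $\tilde\g=0$, and recognize Wick's theorem / the hafnian as the count of perfect matchings on the $2|\n|$ replicated variables — at which point the identity is immediate. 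One should also verify the GBS-encodability hypothesis is used only to guarantee that $\s_Q$ (hence $\s_Q'$) is a legitimate covariance matrix so that \eqref{eq:p11111} applies; the algebraic identity between the derivative and the hafnian holds regardless.
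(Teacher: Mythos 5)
Your proposal is correct and follows essentially the same route as the paper: replace each $\b_i$ (and $\ol\b_i$) by a sum of $n_i$ fresh variables so that the quadratic form becomes $\tfrac12\tilde{\bs\g}^\top(C\sox\bbJ_{2|\n|})\tilde{\bs\g}$, convert the higher-order derivatives into mixed first-order ones via the polarization/chain-rule identity (the paper phrases this through Fa\`a di Bruno), and read off the hafnian from Wick's theorem; the paper likewise keeps the original $1/(\n!\sqrt{\det\s_Q})$ prefactor fixed throughout rather than re-deriving it in an enlarged system. Two small cautions: your parenthetical claim that the substitution produces an extra factor of $n_i!$ contradicts the (correct) identity you state immediately after it --- the prefactor $1/\n!$ is simply carried along unchanged, not generated by the substitution --- and the detour through $\det\s_Q'$ for an enlarged physical GBS device is unnecessary (and would require proving $C\sox\bbJ_{2|\n|}$ is itself GBS-encodable), since, as you note at the end, the purely algebraic identity between the derivative and the hafnian is all that is needed.
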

\begin{proof}
  To prove this equality we assume that $C=A\op A$ where $A$ is an arbitrary symmetric matrix of order $M$. Let $\tilde{p}(n)$ be defined by the right-hand side of~\eqref{eq:ProbMixedGBS}. We will show that $\tilde{p}(\n)=p(\n)$. Consider $\bbJ_{2|\n|}$ as a $2M\times2M$ block matrix $[\bbJ_{n_i,n_j}]$ where $n_{i+M}=n_i$. Hence $C\sox\bbJ_{2|\n|}=A\sox\bbJ_{|\n|}\op A\sox\bbJ_{|\n|}$. We now consider the quadratic form $\bs{\g}^\top(C\sox\bbJ_{2|\n|})\bs{\g}$ and focus on $\bs{\b}^\top(A\sox\bbJ_{|\n|})\bs{\b}$ (the `second half' is treated in exactly the same way). We substitute
  \begin{equation}\label{eq:betaToalphan}
    \b_i\mapsto(\a_{i1},\dots,\a_{in_i}).
  \end{equation}
  in~\eqref{eq:ProbMixedGBS}.  We define its action for the quadratic form to be
\begin{equation}\label{eq:bAb}
  \bs{\b}^\top A\bs{\b}=\sum_{ij}\b_i\b_ja_{ij}\mapsto
  \sum_{ij}(\a_{i1}+\dots+\a_{in_i})(\a_{j1}+\dots+\a_{jn_j})a_{ij}
  =\bs{\a}^\top[a_{ij}\bbJ_{n_in_j}]\bs{\a}\equiv\bs\a^\top(A\sox\bbJ_{|\n|})\bs\a,
\end{equation}
  where we used Def.~\ref{def:sox} by setting $B_{ij}=\bbJ_{n_i,n_j}$. We further set $\partial_{\b_i}^{n_i}=\partial_{\a_{i1},\dots,\a_{in_i}},\forall i$ (and similarly for the conjugated variables $\ol\b_i$ and the corresponding $\ol\a_{in_i}$) and write
  \begin{equation}\label{eq:ProbMixedtilde}
  \tilde{p}(\bs{n})={1\over{\bs{n}!}\sqrt{\det{\s_Q}}}\prod_{k=1}^{M}\partial_{\a_{k1},\dots,\a_{kn_k}}
  e^{{1\over2}\bs\a^\top(A\sox\bbJ_{|\n|})\bs\a}\Big\rvert_{\bs{\a}=0}.
  \end{equation}
  It remains to show that $p(\bs{n})=\tilde{p}(\bs{n})$. Indeed, the higher-order partial derivatives in~\eqref{eq:ProbMixedGBS} result in the same expression as the first-order ones in~\eqref{eq:ProbMixedtilde} whenever we set $\bs{\a}=\bs{\b}=0$ at the end of the calculation. This is a consequence of the elementary properties of the differential operator, namely,
  \begin{equation}\label{eq:PDequality}
    \partial_{x_{k1},\dots,x_{k1}}f({\textstyle \sum}^{n_k}_{\ell=1}x_{k\ell})
    =\partial_{x_{k1},\dots,x_{kn_k}}f({\textstyle\sum}^{n_k}_{\ell=1}x_{k\ell} ),\ \forall k
  \end{equation}
  and the chain rule for the $n$-th derivative given by F{}a\`a di Bruno's formula for $(f\circ g)^{(n)}(x)$ in the special case of $g(x)=x+K$ where $K$ is a constant:
  \begin{equation}\label{eq:FaaDB}
    f^{(n)}(x+K)=h(x+K)
  \end{equation}
  whenever $f^{(n)}(x)\df h(x)$. Now we put all the pieces together. The RHS of~\eqref{eq:PDequality} is identified with~\eqref{eq:ProbMixedtilde} through $(x_{k1},\dots,x_{kn_k})\mapsto(\a_{k1},\dots,\a_{kn_k})$ (or its conjugate) for a given $1\leq k\leq M$ and so $f({\sum_{k=1}^M\sum}^{n_k}_{\ell=1}\a_{k\ell})=e^{{1\over2}\bs\a^\top(A\sox\bbJ_{|\n|})\bs\a}$. But then, according to the LHS of~\eqref{eq:PDequality} we may write~\eqref{eq:ProbMixedtilde} as
  \begin{equation}\label{eq:ProbMixedtilde1}
  \tilde{p}(\bs{n})={1\over{\bs{n}!}\sqrt{\det{\s_Q}}}\prod_{k=1}^{M}\partial_{\a_{k1},\dots,\a_{k1}}
  e^{{1\over2}\bs\a^\top(A\sox\bbJ_{|\n|})\bs\a}\Big\rvert_{\bs{\a}=0}.
  \end{equation}
  The RHS of~\eqref{eq:ProbMixedtilde1} is identified with the LHS of~\eqref{eq:FaaDB} by setting $x=\a_{k1},n=n_k$ and $K=\sum^{n_k}_{\ell=2}\a_{k\ell}$ for a given~$k$. Since
  $$
  h(\b_1)={\dif^n\over\dif\b_1^n}e^{{1\over2}\bs{\b}^\top A\bs{\b}}
  $$
  forms $p(\bs{n})$ and from~\eqref{eq:ProbMixedtilde} we get
  $$
  \prod_{k=1}^{M}\partial_{\a_{k1},\dots,\a_{kn_k}}
  e^{{1\over2}\bs\a^\top(A\sox\bbJ_{|\n|})\bs\a}\Big\rvert_{\bs{\a}=0}=\haf{[A\sox\bbJ_{|\n|}]}
  $$
  we may conclude that $\tilde{p}(\bs{n})=p(\bs{n})$ due to $f^{(n)}(\b_{1})|_{\b_{1}=0}=f^{(n)}(\a_{k1}+K)|_{\a_{k1}=K=0}$. This follows  from~\eqref{eq:FaaDB} and the definition of $h(x)$.
\end{proof}
Interestingly, many of the detection  events have probability zero:
 \begin{lem}\label{lem:nTOT}
  Let $C=c(A\op A)\in\bbR^{2M\times2M}$ and let $\n$ be a detection event where $|\n|=\sum_{i=1}^{M}n_i$. If there is any $n_i>|\n|/2$ then $p(\n)=0$.
\end{lem}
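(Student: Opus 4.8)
The plan is to reduce the statement to a purely combinatorial fact about the hafnian that appears in Eq.~\eqref{eq:ProbMixed}. Since $C=c(A\oplus A)$ is exactly of the form handled there, we have $p(\bs n)=\frac{1}{\bs n!\sqrt{\det\sigma_Q}}\haf[C\sox\bbJ_{2|\bs n|}]$, and the prefactor $\frac{1}{\bs n!\sqrt{\det\sigma_Q}}$ is finite and nonzero; hence it suffices to prove $\haf[C\sox\bbJ_{2|\bs n|}]=0$ whenever some $n_i>|\bs n|/2$. (Alternatively one can invoke Lemma~\ref{lem:hafsquared}, which for even $|\bs n|$ gives $\haf[C\sox\bbJ_{2|\bs n|}]=c^{|\bs n|}\haf^{\,2}[A\sox\bbJ_{|\bs n|}]$ and reduces matters to $\haf[A\sox\bbJ_{|\bs n|}]=0$; working directly with the $2|\bs n|\times 2|\bs n|$ matrix has the advantage of also covering the odd case for free.)

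Next I would spell out the block structure of $N:=C\sox\bbJ_{2|\bs n|}$. Its index set $[2|\bs n|]$ splits into $2M$ blocks $V_1,\dots,V_M,V_{1'},\dots,V_{M'}$ with $|V_i|=|V_{i'}|=n_i$, and by Definition~\ref{def:sox} one has $N_{uv}=c\,a_{ij}$ for $u\in V_i,\,v\in V_j$, likewise $N_{uv}=c\,a_{ij}$ for $u\in V_{i'},\,v\in V_{j'}$, while $N_{uv}=0$ whenever $u$ lies in an unprimed block and $v$ in a primed one, because the off-diagonal $M\times M$ blocks of $A\oplus A$ vanish. In particular $N_{uv}=c\,a_{ii}=0$ whenever $u,v$ lie in the same block, since the underlying graph has no loops. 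Consequently a partition $\varsigma\in\Mcal_{2|\bs n|}$ contributes a nonzero term $\prod_{(uv)\in\varsigma}N_{uv}$ to $\haf N$ only if each of its pairs joins two \emph{distinct} blocks of the same parity.

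Fix such a contributing $\varsigma$. The $|\bs n|$ unprimed indices must then be matched among themselves (a pair joining an unprimed to a primed index has weight $0$); if $|\bs n|$ is odd this is impossible, so $\haf N=0$ and we are done. If $|\bs n|$ is even, these matches form $|\bs n|/2$ pairs, and replacing each such pair $(uv)$ with $u\in V_i$, $v\in V_j$ (necessarily $i\neq j$) by an edge $\{i,j\}$ produces a loopless multigraph $H$ on the vertex set $[M]$ with $|\bs n|/2$ edges, in which $\deg_H(i)=n_i$: each of the $n_i$ elements of $V_i$ occurs in exactly one pair of $\varsigma$, joining it to a different block. Since a loopless multigraph with $E$ edges has all degrees at most $E$, we get $n_i\le|\bs n|/2$ for every $i$. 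Contrapositively, if some $n_i>|\bs n|/2$ then no $\varsigma$ contributes, so $\haf[C\sox\bbJ_{2|\bs n|}]=0$ and therefore $p(\bs n)=0$.

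The argument is elementary; the only points requiring care are the bookkeeping that rewrites $p(\bs n)$ as a single hafnian times a nonvanishing scalar, the explicit block description of $C\sox\bbJ_{2|\bs n|}$ — in particular the use of $a_{ii}=0$ and of the vanishing cross blocks of $A\oplus A$ — and the clean phrasing of the degree bound ``$\deg_H(i)\le\#E(H)$ for loopless multigraphs''. I do not expect any genuine obstacle beyond getting these details right.
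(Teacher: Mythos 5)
Your proposal is correct and rests on the same idea as the paper's proof: since $a_{ii}=0$, the diagonal block of $A\sox\bbJ_{|\n|}$ (equivalently of $C\sox\bbJ_{2|\n|}$) corresponding to mode $i$ is a zero submatrix occupying more than half of the indices when $n_i>|\n|/2$, so every perfect matching must pair two indices inside it and every hafnian summand vanishes. Your reformulation of this pigeonhole step as the degree bound $\deg_H(i)\le\#E(H)$ for the loopless multigraph of a contributing matching is just a repackaging of the paper's argument, though your write-up is more careful and also disposes of the odd-$|\n|$ case explicitly.
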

\begin{proof}
  Here we may assume $c=1$ and so $C=A\op A$. The probability expression (Eq.~\eqref{eq:ProbMixed}) contains $\haf{[A\sox\bbJ_{|\n|}]}$ and if $n_i>|\n|/2$ then $A\sox\bbJ_{|\n|}$ contains a  zero matrix of size greater than $\lfloor M/2\rfloor$ (placed in the lower right corner of $A\sox\bbJ_{|\n|}$), see Definition~\ref{def:sox}. The hafnian of such a matrix is zero. This follows from the hafnian definition (Def.~\ref{def:haf}) where the hafnian of a $2M\times2M$ matrix is a sum of products of $M$ entries $a_{ij}$. Since $i<j$ and none of the indices repeats for any summand then inevitably at least one of the $a_{ij}$'s in every summand equals zero.
\end{proof}

We prove several useful properties of the reduced Kronecker product $\sox$.
\begin{lem}\label{lem:cTimesSox}
  Let $C =[c_{ij}]\in\bbR^{2M\times2M}$ and $c\in\bbR$. Then
  $$
  \haf{[(cC)\sox\bbJ_{2|\n|}]}=c^{|\n|}\haf{[C\sox\bbJ_{2|\n|}]}
  $$
  where $|\n|=\sum_{i=1}^Mn_i$.
\end{lem}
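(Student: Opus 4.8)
The plan is to reduce the identity to two elementary facts: that the reduced Kronecker product is linear in its left-hand argument, and that the hafnian of a matrix of order $2N$ is homogeneous of degree $N$ under scalar multiplication.

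First I would unwind the block structure using Definition~\ref{def:sox}. Writing $\bbJ_{2|\n|}$ as the $2M\times 2M$ block matrix $[\bbJ_{n_i,n_j}]$ with the convention $n_{i+M}=n_i$ (exactly as in the derivation of the probability formula~\eqref{eq:ProbMixed} above), the matrix $C\sox\bbJ_{2|\n|}$ is formed by replacing each block $\bbJ_{n_i,n_j}$ by $c_{ij}\bbJ_{n_i,n_j}$; in particular it is a symmetric matrix of order $\sum_{i=1}^{2M}n_i=2|\n|$. Since $cC=[cc_{ij}]$, every entry of $(cC)\sox\bbJ_{2|\n|}$ is $c$ times the corresponding entry of $C\sox\bbJ_{2|\n|}$, i.e. $(cC)\sox\bbJ_{2|\n|}=c\,(C\sox\bbJ_{2|\n|})$.

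Next I would invoke Definition~\ref{def:haf} directly. For an arbitrary symmetric matrix $D=[d_{uv}]$ of order $2N$, every $\varsigma\in\Mcal_{2N}$ consists of exactly $N$ unordered pairs, so
\[
\haf{(cD)}=\sum_{\varsigma\in\Mcal_{2N}}\prod_{(uv)\in\varsigma}(c\,d_{uv})=c^{N}\sum_{\varsigma\in\Mcal_{2N}}\prod_{(uv)\in\varsigma}d_{uv}=c^{N}\haf{D}.
\]
Applying this with $N=|\n|$ and $D=C\sox\bbJ_{2|\n|}$ then yields $\haf{[(cC)\sox\bbJ_{2|\n|}]}=\haf{[c\,(C\sox\bbJ_{2|\n|})]}=c^{|\n|}\haf{[C\sox\bbJ_{2|\n|}]}$, which is the assertion.

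I do not expect a genuine obstacle here: the argument is essentially bookkeeping from the definitions. The one point that deserves a moment of care is the dimension count, namely confirming that $C\sox\bbJ_{2|\n|}$ has order exactly $2|\n|$, since that is what pins the exponent of $c$ to $|\n|$ (rather than $2|\n|$ or $|\n|/2$). This same homogeneity of the hafnian is also the mechanism behind the scaling constants appearing in Lemmas~\ref{lem:hafsquared} and~\ref{lem:nTOT}.
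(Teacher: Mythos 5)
Your proof is correct and follows essentially the same route as the paper's: first establishing $(cC)\sox\bbJ_{2|\n|}=c\,(C\sox\bbJ_{2|\n|})$ entrywise from Definition~\ref{def:sox}, then using that the resulting matrix has order $2|\n|$ so that the hafnian's homogeneity gives the factor $c^{|\n|}$. The only difference is that you spell out the homogeneity of the hafnian explicitly from Definition~\ref{def:haf}, which the paper leaves implicit.
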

\begin{proof}
    From Def.~\ref{def:sox} we find $(cC)\sox\bbJ_{2|\n|}=c(C\sox\bbJ_{2|\n|})$. This trivially follows from $[(cc_{ij})B_{ij}]=[c(c_{ij}B_{ij})]$. We then observe that $\dim{[C\sox\bbJ_{2|\n|}]}=2|\n|$ and the claim follows.
\end{proof}
\begin{rem}
  For $n_i=1,\forall i$ we recover $\haf{[cC]}=c^M\haf{C}$ since $|\n|=M$.
\end{rem}
\begin{lem}\label{lem:linearity}
  Let $A,B,C$ be matrices and assume that $A\sox C$ and $B\sox C$ are defined. Then $(A\oplus B)\sox (C\oplus C)=(A\sox C)\oplus (B\sox C)$.
  \begin{proof}
    The ordinary Kronecker product satisfies $(A\op B)\ox C'=A\ox C'\op B\ox C'$. By removing  columns and the corresponding rows from $A,B$ on both sides of the expression we arrive at the claimed result.
  \end{proof}
\end{lem}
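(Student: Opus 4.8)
The plan is to prove the identity directly from Definition~\ref{def:sox}, comparing the two sides block by block. Since $A\sox C$ and $B\sox C$ are both assumed defined with the \emph{same} block matrix $C=[C_{ij}]$, the matrices $A=[a_{ij}]$ and $B=[b_{ij}]$ share a common shape $m_1\times m_2$, and $C$ is partitioned into $m_1\times m_2$ blocks $C_{ij}$, the $(i,j)$ block having $n_i$ rows and $n'_j$ columns. First I would write out the two direct sums explicitly. The $2m_1\times 2m_2$ matrix $A\op B$ has entry $a_{ij}$ in its top-left $m_1\times m_2$ corner, entry $b_{i-m_1,\,j-m_2}$ in its bottom-right corner, and $0$ everywhere else. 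The matrix $C\op C$, regarded as a $2m_1\times 2m_2$ array of blocks, has $(i,j)$ block $C_{ij}$ in the top-left corner, $C_{i-m_1,\,j-m_2}$ in the bottom-right corner, and the zero matrix of size $n_i\times n'_j$ elsewhere; one should check that this assignment of sizes is consistent, which it is, since the row count of the $(i,j)$ block depends only on $i$ and its column count only on $j$, so that $C\op C$ is a legitimate input for Definition~\ref{def:sox}.

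Next I would apply Definition~\ref{def:sox}: the $(i,j)$ block of $(A\op B)\sox(C\op C)$ is the scalar $(A\op B)_{ij}$ times the block $(C\op C)_{ij}$. In the top-left corner this is $a_{ij}C_{ij}$, i.e.\ the $(i,j)$ block of $A\sox C$; in the bottom-right corner it is $b_{i-m_1,\,j-m_2}\,C_{i-m_1,\,j-m_2}$, the corresponding block of $B\sox C$; and elsewhere it is $0$ times a zero matrix, hence a zero matrix of the shape dictated by its row- and column-indices. Reassembling the blocks, $(A\op B)\sox(C\op C)$ is block-diagonal with diagonal blocks $A\sox C$ and $B\sox C$, which is precisely $(A\sox C)\op(B\sox C)$.

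An alternative route, the one indicated by the Remark after Definition~\ref{def:sox}, is to start from the standard distributive law $(A\op B)\ox C'=(A\ox C')\op(B\ox C')$ for the ordinary Kronecker product and then pass to the relevant submatrix: in the case $C_{ij}=\bbJ_{n_i,n_j}$ used throughout the paper, $D\sox C$ is obtained from $D\ox C'$, with $C'=\bbJ_N$ and $N=\max_{i} n_i$, by deleting within each row-block $i$ all but its first $n_i$ rows, and likewise for columns; performing these same deletions on both sides of the Kronecker identity preserves the block-diagonal split and yields the claim. There is no genuine obstacle in either argument, the content being bookkeeping; the only step that calls for a little care is matching the zero off-diagonal blocks produced on the left-hand side (from the vanishing entries of $A\op B$) with the zero blocks appearing in the direct sum on the right, and this match is automatic once block dimensions are tracked as functions of the row-index $i$ and column-index $j$ alone. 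I would present the direct argument in the paper, since it needs no auxiliary padding matrix.
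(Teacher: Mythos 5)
Your direct block-by-block argument is correct and is a genuinely different route from the one the paper takes. The paper's proof simply invokes the distributive law $(A\op B)\ox C'=(A\ox C')\op(B\ox C')$ for the ordinary Kronecker product and then appeals to the fact that $\sox$ is obtained by deleting rows and columns, relying on the informal identification of $A\sox B$ as a submatrix of $A\ox B$ stated in the Remark after Definition~\ref{def:sox}; this is quick but leaves the reader to check that the deletions commute with the direct-sum decomposition. Your primary argument instead verifies the identity at the level of the definition itself: the $(i,j)$ block of $(A\op B)\sox(C\op C)$ is $(A\op B)_{ij}$ times the $(i,j)$ block of $C\op C$, which reproduces $a_{ij}C_{ij}$ in the top-left corner, $b_{i-m_1,\,j-m_2}C_{i-m_1,\,j-m_2}$ in the bottom-right, and zero blocks of the correct shapes elsewhere. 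This is more self-contained, needs no auxiliary padding matrix $C'$, and — as you note — the only point requiring care, the consistency of the block sizes of $C\op C$ (row counts depending only on $i$, column counts only on $j$), is handled explicitly. Your second, alternative route is essentially the paper's proof, so you have both arguments in hand; the direct one you propose to present is the cleaner of the two.
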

\begin{lem}\label{lem:hafinv}
  Let $P_\pi$ be a permutation matrix. Then the following diagram commutes for any matrix $A$ and a~detection event $\bs{n}$
  \[
  \begin{tikzpicture}[baseline= (a).base]
  \node[scale=1.2] (a) at (0,0){
  \begin{tikzcd}
    A \arrow{r}{\sox\bbJ_{|\n|}} \arrow[swap]{d}{P_\pi} & A\sox\bbJ_{|\n|} \arrow{d}{\hat{P}_\pi} \\
    \tilde{A} \arrow{r}{\sox\bbJ_{|\m|}}  & \tilde{A}\sox\bbJ_{|\m|}
  \end{tikzcd}
  };
  \end{tikzpicture}
  \]
  where $\hat{P}_\pi$ is another permutation matrix and $\bs{m}=\pi(\bs{n})$.
\end{lem}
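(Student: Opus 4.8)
The plan is to produce $\hat P_\pi$ explicitly as the permutation matrix that shuffles the coordinate blocks of $A\sox\bbJ_{|\n|}$ in the same way $P_\pi$ shuffles the vertices of $A$, and then to verify that the square commutes entry by entry. I read the vertical arrows as conjugation: $\tilde A=P_\pi A P_\pi^{\top}$, so that $\tilde a_{ij}=a_{\pi(i)\pi(j)}$, and $\m=\pi(\n)$ as the induced reindexing $m_i=n_{\pi(i)}$ (with the opposite convention one merely swaps $\pi\leftrightarrow\pi^{-1}$ throughout). The first observation is that $|\m|=\sum_i n_{\pi(i)}=|\n|$, so $A\sox\bbJ_{|\n|}$ and $\tilde A\sox\bbJ_{|\m|}$ have the same size $|\n|$ and $\hat P_\pi$ can indeed be an honest $|\n|\times|\n|$ permutation matrix, with the right vertical arrow being conjugation by it.

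Second, I would set up the block bookkeeping. Partition $[|\n|]$ into the consecutive blocks determined by $\n$ and let $\phi\colon[|\n|]\to[M]$ send each coordinate to the index of its block, so $|\phi^{-1}(k)|=n_k$; by Definition~\ref{def:sox} every entry of the block $\bbJ_{n_i,n_j}$ of $A\sox\bbJ_{|\n|}$ equals $a_{ij}$, i.e.\ $(A\sox\bbJ_{|\n|})_{uv}=a_{\phi(u)\phi(v)}$. Do the same for $\m$, getting $\psi\colon[|\m|]\to[M]$ with $|\psi^{-1}(k)|=m_k=n_{\pi(k)}$ and $(\tilde A\sox\bbJ_{|\m|})_{uv}=\tilde a_{\psi(u)\psi(v)}=a_{\pi(\psi(u))\,\pi(\psi(v))}$. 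Since $|\psi^{-1}(j)|=n_{\pi(j)}=|\phi^{-1}(\pi(j))|$ for every $j$, there is a bijection $\hat\pi$ of $[|\n|]$ that carries each block $\psi^{-1}(j)$ onto $\phi^{-1}(\pi(j))$ --- unique once we insist it preserve the natural order inside each block --- and this bijection satisfies $\phi\circ\hat\pi=\pi\circ\psi$. Take $\hat P_\pi=P(\hat\pi)$.

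Finally it remains to compute, using $(\hat P_\pi B\hat P_\pi^{\top})_{uv}=B_{\hat\pi(u)\hat\pi(v)}$:
\begin{align*}
\big(\hat P_\pi(A\sox\bbJ_{|\n|})\hat P_\pi^{\top}\big)_{uv}
&=(A\sox\bbJ_{|\n|})_{\hat\pi(u)\hat\pi(v)}
=a_{\phi(\hat\pi(u))\,\phi(\hat\pi(v))}\\
&=a_{\pi(\psi(u))\,\pi(\psi(v))}
=\tilde a_{\psi(u)\psi(v)}
=(\tilde A\sox\bbJ_{|\m|})_{uv},
\end{align*}
which is precisely commutativity of the diagram. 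The only thing requiring care is the index bookkeeping of the previous paragraph, and even that is essentially forced: each block $\bbJ_{n_i,n_j}$ is invariant under arbitrary internal permutations of its rows and columns, so the choice of $\hat\pi$ on each block is immaterial, and the matching of block sizes is exactly what the relation $\m=\pi(\n)$ encodes. Conceptually the lemma records the naturality of the multiplicity-inflation map $A\mapsto A\sox\bbJ_{|\n|}$: relabelling the vertices of the weighted graph $A$ while correspondingly relabelling the multiplicity vector $\n$ merely relabels the vertices of the inflated graph; since $\haf$ is invariant under permutational similarity, $\haf{[A\sox\bbJ_{|\n|}]}$ is then unchanged, which is how the lemma will be applied.
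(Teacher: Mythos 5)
Your proof is correct and follows essentially the same route as the paper's: both construct $\hat P_\pi$ as the block permutation that carries the blocks of $\tilde A\sox\bbJ_{|\m|}$ onto the correspondingly sized blocks of $A\sox\bbJ_{|\n|}$ (the paper phrases this as a reordering of the expanded variables $\b_i\mapsto(\a_{i1},\dots,\a_{in_i})$, which is exactly your index map $\phi$). Your version is the more explicit of the two, pinning down the conventions ($\tilde a_{ij}=a_{\pi(i)\pi(j)}$, $m_i=n_{\pi(i)}$, $\phi\circ\hat\pi=\pi\circ\psi$) and verifying commutativity entrywise, where the paper leaves the construction of $\hat P_\pi$ as ``straightforward.''
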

\begin{proof}
    Following the lower route, the spanning basis $\bs{\b}=(\b_1,\dots,\b_k)$ of $A$ becomes $\pi(\bs{\b})=(\b_{\pi{(1)}},\dots,\b_{\pi{(k)}})$ for $\tilde{A}$. The new basis is then expanded by considering
    \begin{equation}\label{eq:betaToalpham}
    \b_{\pi(i)}\mapsto(\a_{\pi{(i)}1},\dots,\a_{\pi{(i)}m_{\pi{(i)}}})
    \end{equation}
    and so $\pi(\bs{\a})$ is a spanning basis of $\tilde{A}\sox\bbJ_{|\m|}$. Going through the upper route, we observe that $\bs{\b}\mapsto\bs{\a}$ by the action of
    \begin{equation}\label{eq:betaToalphan}
    \b_i\mapsto(\a_{i1},\dots,\a_{in_i}).
    \end{equation}
    Then, a permutation matrix exists transforming~\eqref{eq:betaToalphan} into~\eqref{eq:betaToalpham}. Its construction is straightforward. The reordering (permutation) $(\a_{i1},\dots,\a_{in_i})\mapsto(\a_{\pi{(i)}1},\dots,\a_{\pi{(i)}n_{\pi{(i)}}})$ is followed by setting $n_{\pi{(i)}}=m_{\pi{(i)}}$. Naturally, the overall transformation is an action of a permutation matrix we denoted by $\hat{P}_\pi$.
\end{proof}
We use another result from~\cite{bradler2017gaussian} to prove the following lemma.
\begin{lem}\label{lem:determinants}
  The matrices $\s_{Q,G_i}$ of two isospectral graphs $G_1,G_2$ encoded as adjacency matrices $A_1,A_2$ of dimension $2M$ satisfy
  \[
  \det{\s_{Q,G_1}}=\det{\s_{Q,G_2}}.
  \]
\end{lem}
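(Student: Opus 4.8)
The plan is to show that $\det\s_{Q,G_i}$ depends on the adjacency matrix $A_i$ only through the spectrum of $A_i$, so that isospectrality forces equality. Recall from Eq.~\eqref{eq:sigmaQ} that the GBS state attached to a graph has $\s_Q=(\bbI-XC)^{-1}$, and recall the encoding from~\cite{bradler2017gaussian} used throughout the paper (cf.\ the Remark after Corollary~\ref{cor:fromadjm}): the adjacency matrix $A_i$ is encoded as $C_i=c\,(A_i\op A_i)$ with a fixed $0<c<1/\|A_i\|_2$, while $X$ swaps the two equal diagonal blocks of $C_i$. Since $\det\s_{Q,G_i}=1/\det(\bbI-XC_i)$, it suffices to compute $\det(\bbI-XC_i)$.

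Writing $X$ and $C_i$ in the corresponding $2\times2$ block form, $X=\begin{bmatrix}0&\bbI\\\bbI&0\end{bmatrix}$ and $C_i=\begin{bmatrix}cA_i&0\\0&cA_i\end{bmatrix}$, one gets $XC_i=\begin{bmatrix}0&cA_i\\cA_i&0\end{bmatrix}$ and hence
\[
\bbI-XC_i=\begin{bmatrix}\bbI&-cA_i\\-cA_i&\bbI\end{bmatrix}.
\]
Taking the Schur complement of the (invertible) top-left block yields $\det(\bbI-XC_i)=\det(\bbI-c^2A_i^2)=\prod_{j}\bigl(1-c^2\lambda_j(A_i)^2\bigr)$.

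Finally, the same constant $c$ may be used for both $G_1$ and $G_2$, because $\|A_1\|_2=\|A_2\|_2$ (for a symmetric matrix the spectral norm is the largest eigenvalue modulus, a spectral invariant), and isospectrality gives $\lambda_j(A_1)=\lambda_j(A_2)$ for every $j$. Hence $\det(\bbI-XC_1)=\det(\bbI-XC_2)$, and therefore $\det\s_{Q,G_1}=\det\s_{Q,G_2}$.

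I do not expect a genuine obstacle: the statement is essentially immediate once the $A_i\op A_i$ structure of the encoding is unpacked. The only points that need a line of justification are the choice of a common encoding constant $c$ — which is exactly the observation that $\|\cdot\|_2$ is spectrally determined — and the block-determinant step, which is routine via the Schur complement. If one prefers not to fix $c$, the computation above already exhibits $\det\s_{Q,G_i}$ as a symmetric function of the squared eigenvalues of $A_i$, which is all that is needed.
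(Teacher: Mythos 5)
Your proof is correct, and it reaches the same underlying conclusion as the paper's — that $\det\s_{Q,G_i}$ is a symmetric function of the eigenvalues of $A_i$ alone — but by a more explicit and self-contained route. The paper rewrites $\s_{A_i}=\tfrac12(\bbI_{2M}+X_{2M}A_i)(\bbI_{2M}-X_{2M}A_i)^{-1}$, invokes the fact from~\cite{bradler2017gaussian} that the doubled matrix commutes with $X_{2M}$, and argues that the eigenvalues of $\s_{Q,G_1}$ and $\s_{Q,G_2}$ therefore coincide; the determinant is then the product of eigenvalues. You instead bypass the spectrum of $\s_Q$ entirely: writing $\det\s_{Q,G_i}=1/\det(\bbI_{2M}-X_{2M}C_i)$ and computing the block determinant by a Schur complement gives the closed form $\det(\bbI_{2M}-X_{2M}C_i)=\prod_j\bigl(1-c^2\lambda_j(A_i)^2\bigr)$, from which the claim is immediate. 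This buys two things the paper's version leaves implicit: it needs no external commutation lemma (the block anti-diagonal structure does the work), and it makes visible the one point the paper glosses over, namely that the \emph{same} constant $c$ can be used for both graphs because $\|A_1\|_2=\|A_2\|_2$ is itself a spectral invariant. Your computation also extends verbatim to the encoding $c(A_i\op A_i+k\bbI_{2M})$ of Remark~\ref{rem:doubling}, yielding $\prod_j\bigl(1-c^2(\lambda_j(A_i)+k)^2\bigr)$, so nothing is lost by the simplifying choice $k=0$. The only cosmetic mismatch is that the lemma's statement calls the $2M\times2M$ doubled matrices ``$A_1,A_2$'' while you reserve $A_i$ for the $M\times M$ adjacency matrix; your usage is consistent with the rest of the paper and with the paper's own proof, so this is not a defect.
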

\begin{proof}
  In order to encode an arbitrary graph  we take two copies of a graph's adjacency matrix~\cite{bradler2017gaussian}. Also, it is advantageous to rewrite~\eqref{eq:sigmaQ} as
  \begin{align}\label{eq:sigmaAdifferent}
      \s_{A_i} &= {1\over2}(\bbI_{2M}+X_{2M}A_i)(\bbI_{2M}-X_{2M}A_i)^{-1}=
      {1\over2}(\bbI_{2M}-X_{2M}A_i)^{-1}(\bbI_{2M}+X_{2M}A_i).
  \end{align}
  The matrix $A_i$ commutes with $X_{2M}$~\cite{bradler2017gaussian} and so the eigenvalues of $X_{2M}A_i$ are products of eigenvalues of the constituents. Furthermore, using the second equality in~\eqref{eq:sigmaAdifferent} and $\s_{Q,G_i} = \s_{A_i} + \bbI_{2M}/2$ we conclude by a direct calculation that the eigenvalues of  $\s_{Q,G_1}$ and $\s_{Q,G_2}$ coincide. The claim follows from the fact that the determinant is a product of eigenvalues.
\end{proof}

\subsection{GBS and a complete set of graph invariants}\label{subsec:graphinv}

\begin{rem}\label{rem:doubling}
    We will use the transformation $A\mapsto C=A\oplus A$ for the application of GBS to the graph isomorphism problem. We recall $C\in\bbR^{2M\times2M}$. By `doubling' $A$, one copy is conveniently spanned by $(\b_1,\dots,\b_{M})$ whereas the second one by $(\ol\b_1,\dots,\ol\b_{M})$. Moreover, to make the matrix $C$ GBS-encodable (see Def.~\ref{def:GBSenc}) we simply take $R=c(C+k\bbI_{2M})=c(A\oplus A+k\bbI_{2M})$ where  $0<c<1/(\|A\|_2+k)$ for $k\geq0$. The additional multiple of an identity on the diagonal does not affect the hafnian of $A\oplus A$ as follows from Lemma~\ref{lem:hafsquared} but it will become useful in the next sections.
\end{rem}

\subsection*{GBS and Moments of Multivariate Gaussians}
The moments $\mu_{n_1,...,n_{2M}}(\Sigma)$ of a (zero-mean) $2M$-dimensional multivariate real normal distribution $\Ncal(0,\Sigma)$ are given by the following formula:
\begin{equation}
 \mu_{n_1,\dots,n_{2M}}(\Sigma)=
                \partial^{|\n|}_{\x}
		       e^{\frac{1}{2}\bs{x}^\top\Sigma \bs{x}}\big\rvert_{\bs{x}=0},
\end{equation}
where $\Sigma$ is the covariance matrix. This follows from the fact that $\exp{[\frac{1}{2}\bs{x}^\top\Sigma \bs{x}]}$ is the moment-generating function of the multivariate normal.

Let $R=c(A \op A+k\bbI_{2M})$ be GBS encodable for $k$ sufficiently high  so that $R\succ0$. Then
\begin{equation}
 p(\bs{n}) = \frac{1}{\mathbf{n}!\sqrt{\det{\sigma_Q}}}
                             \partial^{|\n|}_{\bs{\b},\ol{\bs\b}}
                             e^{\frac{1}{2}\bs{\g}^\top R \bs{\g}}\big\rvert_{\bs\g=0}.
\end{equation}
is exactly the moment $\mu_{\bs{n}}(R)$ of the $2M$-dimensional (zero-mean) multivariate normal distribution $\Ncal(0, R)$ if we ignore the prefactor $(\bs{n}!\sqrt{\det{\sigma_Q}})^{-1}$. For clarity, we have changed variables so that $x_j=\b_j$ and $x_{M+j}=\ol\b_j$ for $j=0,\dots,M$.

From the above equations, it is clear that the different photon-counting probabilities of a GBS setup are directly related to various moments of a multivariate normal distribution. Importantly, however, they do not give us \emph{all moments} $(\mu_{n_1,\dots,n_m,n_{m+1},\dots,n_{2M}}(R))$, but rather the smaller set

\noindent
$(\mu_{n_1,\dots,n_M,n_1,\dots,n_{M}}(R))$. This is something we need to be careful of.  In \cite{bradler19duality} the authors relate these moments to the matching of the prism graph induced by the graph $G$.

The moment-generating function factorizes:
\begin{equation}
 \exp{[\tfrac{1}{2}\bs{x}^\top R \bs{x}]}
 =\exp{[\tfrac{c}{2}(\bs{x}^{(M)})^\top (A+k\bbI_M)\bs{x}^{(M)}]} \times \exp{[\tfrac{c}{2}(\bs{x}^{(2M)})^\top(A+k\bbI_M)\bs{x}^{(2M)}]},
\end{equation}
where we have used the notation $\bs{x}^{(M)}\df(x_1, \dots, x_{M})$ and $\bs{x}^{(2M)}\df(x_{M+1}, \dots, x_{2M})$. We set $c=1$ (since we omit the determinant prefactor where it otherwise plays a role) and also $k=0$. This step will cost us the positive-definiteness of $A$ but at the moment this is just a formality to properly define the moment generating function. We would have set $k=0$ afterwards anyway to recover the correct probability expression. The moments of this factorized distribution are then
\begin{equation}
 \mu_{n_1,\dots,n_M,n_{M+1},\dots,n_{2M}}(A) = 
                                    \partial^{|\n|}_{\x}
                                    \big[
		                            \exp{[\tfrac{1}{2}(\bs{x}^{(M)})^\top A\,\bs{x}^{(M)}]}
                                    \times\exp{[\tfrac{1}{2}(\bs{x}^{(2M)})^\top A\,\bs{x}^{(2M)}]}
                                    \big]
		                            \big\rvert_{\bs{x}=0}
\end{equation}
Rewriting, we find
\begin{align}\label{eq:moments}
 \mu_{n_1,\dots,n_M,n_{M+1},\dots,m_{2M}}(A) & = \big[\displaystyle
                                        \partial^{|\n|}_{\x^{(M)}}
		                              \exp{[\tfrac{1}{2}(\bs{x}^{(M)})^\top A\,\bs{x}^{(M)}]}\big]\big|_{\bs{x^{(M)}}=0}
		                              \big[
                                    \partial^{|\n|}_{\x^{(2M)}}
		                              \exp{[\tfrac{1}{2}(\bs{x}^{(2M)})^\top A\,\bs{x}^{(2M)}]}\big]\big|_{\bs{x^{(2M)}}=0}\nn\\
		                          & =  \mu_{n_1,\dots,n_N}(A)\times\mu_{n_{N+1},\dots,n_{2N}}(A),
\end{align}
where $\mu_{n_1,\dots,n_M}(A)$ and $\mu_{n_{M+1},\dots,n_{2M}}(A)$ are moments of the $M$-dimensional normal distributions $\Ncal(0,A)$.

Connecting back to photon-counting probabilities, we recover $c$ and conclude that, for the considered case of block-diagonal $R$,
\begin{equation}
 p(\n) = \frac{c^{|\n|}}{\n!\sqrt{\det{\s_Q}}}\mu^2_{n_1,\dots,n_M}(A).
\end{equation}
Finally, we note that the moments are exactly the hafnian of some appropriate matrix, so
\begin{equation}\label{eq:probDoubled}
 p(\bs{n})=\frac{c^{|\n|}}{\n!\sqrt{\det{\s_Q}}}\haf{[A^{\oplus2} \sox\bbJ_{2|\n|}]}
 =\frac{c^{|\n|}}{\bs{n}!\sqrt{\det{\s_Q}}}\haf^{\,2}{[A \sox\bbJ_{|\n|}]},
\end{equation}
where the second equality also follows from~Lemma~\ref{lem:hafsquared} and~\ref{lem:linearity}.

\begin{prop}\label{prop:GIwitness}
Suppose we have two isospectral graphs $G_1$ and $G_2$. Assume we can encode the adjacency matrices $A_i$ of either graph into a Gaussian boson sampling setup.
Then these graphs are isomorphic iff the hafnians are related by a permutation, $\haf{[A_1\sox\bbJ_{|\bs{n}|}]}$ = $\haf{[A_2\sox\bbJ_{|\pi(\bs{n})|}]}$, for all $\bs{n}$. Furthermore, the permutation $\pi$ must be the same for all $\bs{n}$.
\end{prop}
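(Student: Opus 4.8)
The plan is to prove the two implications separately, with Lemma~\ref{lem:hafinv} doing the work in the ``only if'' direction and a direct inspection of the two-photon detection events in the ``if'' direction. For the forward implication, suppose $G_1\cong G_2$, so that $A_2=P_\sigma A_1 P_\sigma^\top$ for some $\sigma\in\mathfrak{S}_M$. I would take $\pi=\sigma$ and feed it into the commuting square of Lemma~\ref{lem:hafinv} with $A=A_1$: along the two routes $A_1\sox\bbJ_{|\n|}$ is identified, up to conjugation by the permutation matrix $\hat P_\pi$, with $\tilde A_1\sox\bbJ_{|\m|}$, where $\tilde A_1=P_\sigma A_1P_\sigma^\top=A_2$ and $\m=\pi(\n)$. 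Because a hafnian is unchanged by conjugation with a permutation matrix (a simultaneous identical relabelling of rows and columns merely permutes the pairings summed over in Def.~\ref{def:haf}) and $|\pi(\n)|=|\n|$, this yields $\haf{[A_1\sox\bbJ_{|\n|}]}=\haf{[A_2\sox\bbJ_{|\pi(\n)|}]}$ for every detection event $\n$ with the \emph{one} fixed permutation $\pi=\sigma$; this is precisely the ``furthermore'' statement.

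For the converse, suppose there is a single permutation $\pi$ with $\haf{[A_1\sox\bbJ_{|\n|}]}=\haf{[A_2\sox\bbJ_{|\pi(\n)|}]}$ for all $\n$. I would specialize to the two-photon events $\n=\be_i+\be_j$ with $i\neq j$. For such $\n$ the reduced Kronecker product $A\sox\bbJ_{|\n|}$ (Def.~\ref{def:sox}) collapses to the $2\times2$ symmetric matrix whose off-diagonal entry is $a_{ij}$ and whose diagonal entries vanish for a simple graph, so its hafnian equals $a_{ij}$ (the single pairing of $\{1,2\}$). Likewise $\pi(\n)=\be_{\pi(i)}+\be_{\pi(j)}$ gives $\haf{[A_2\sox\bbJ_{|\pi(\n)|}]}=(A_2)_{\pi(i)\pi(j)}$. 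Hence $(A_1)_{ij}=(A_2)_{\pi(i)\pi(j)}$ for all $i\neq j$, which says $A_1=P_\pi^\top A_2P_\pi$, i.e.\ $G_1\cong G_2$. (Alternatively, one may route the argument through the moment picture: by~\eqref{eq:probDoubled} and Lemma~\ref{lem:hafsquared} the quantities $\haf{[A\sox\bbJ_{|\n|}]}$ are the moments $\mu_{\n}(A+k\bbI_M)$ of the Gaussian $\Ncal(0,A+k\bbI_M)$, whose full family pins down the covariance matrix and hence $A$; matching the two moment families through a fixed $\pi$ then forces the covariances to be permutation-similar. The two-photon shortcut above already makes this unnecessary.)

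Two remarks on the hypotheses and on where the care is needed. Isospectrality is not actually used in either implication; it is simply the interesting regime (otherwise the spectra already separate the graphs), and by Lemma~\ref{lem:determinants} it makes the prefactors $1/(\n!\sqrt{\det{\s_Q}})$ agree, so the hafnian equalities are equivalent to equalities of the physical GBS probabilities $p(\n)$, while GBS-encodability (Remark~\ref{rem:doubling}, Cor.~\ref{cor:fromadjm}) is what lets those hafnians be read off from an actual device. The step I expect to demand the most care is the forward direction: keeping track of the exact permutation matrix $\hat P_\pi$ appearing on the right edge of the square in Lemma~\ref{lem:hafinv} and checking that the \emph{same} $\pi$ realizes $\n\mapsto\m$ for every $\n$ at once — and, relatedly, making sure the proposition is read with a single global $\pi$, since a per-event choice of permutation is already vacuous at the two-photon level and would not imply isomorphism.
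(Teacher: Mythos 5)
Your proof is correct, but it takes a genuinely different route from the paper's in both directions. For the forward implication the paper does not invoke Lemma~\ref{lem:hafinv} at all: it passes through the moment picture, noting that isomorphic graphs give multivariate normals $\Ncal(0,A_1\oplus A_1)$ and $\Ncal(0,(P^{\oplus2})^\top(A_2\oplus A_2)P^{\oplus2})$ related by a coordinate permutation, so all moments match up to $\pi\oplus\pi$; factorizing the moments as in Eq.~\eqref{eq:moments} gives products of hafnians, and setting $\bs{p}=\bs{q}=\bs{n}$ yields equality of \emph{squared} hafnians, from which the paper must additionally use nonnegativity of $\haf{[A\sox\bbJ_{|\n|}]}$ (adjacency matrices being $0/1$) to drop the squares. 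Your route via the commuting square of Lemma~\ref{lem:hafinv} and permutation-invariance of the hafnian reaches Eq.~\eqref{eq:hafperm} directly, sidesteps the sign issue entirely, and makes the ``same $\pi$ for all $\n$'' claim transparent; it is essentially the argument the paper later deploys in Lemma~\ref{lem:permProbDistro}, just applied earlier. For the converse the paper is terse --- it asserts that Eqns.~\eqref{eq:perm}--\eqref{eq:double_hafperm} are ``all equivalent,'' implicitly leaning on the moment-determinacy machinery of Theorem~\ref{thm:isogausdist} --- whereas your two-photon specialization $\n=\be_i+\be_j$ reads off $(A_1)_{ij}=(A_2)_{\pi(i)\pi(j)}$ from the $2\times2$ hafnian and reconstructs the permutation similarity explicitly; this is more elementary, self-contained, and arguably more convincing than the paper's one-line closure. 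Your side remarks are also accurate: isospectrality is not needed for the hafnian identities themselves (it only equalizes the $1/(\n!\sqrt{\det\s_Q})$ prefactors via Lemma~\ref{lem:determinants} so that hafnian equalities translate into probability equalities), and the proposition must indeed be read with a single global $\pi$, since a per-event permutation is vacuous already at the two-photon level.
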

\begin{proof}[Proof of $\Rightarrow$:] Suppose $G_1$ and $G_2$ are isomorphic. Equivalently, their adjacency matrices are related by a permutation
\begin{equation}\label{eq:perm}
 A_1 = P^\top A_2 P,
\end{equation}
where $P_{\pi(i)i}=\delta_{i,\pi(i)}$ for some permutation $\pi$. If we encode these adjacency matrices directly into the covariance matrices of two Gaussian states, then graph isomorphism is equivalent to the multivariate normal distributions corresponding to these two Gaussian states being related by a permutation of coordinates:
\begin{equation}
 \Ncal(0, A_1\oplus A_1)
 = \Ncal(0, (P^\top A_2 P) \oplus (P^\top A_2 P))
 = \Ncal(0, (P^{\oplus 2})^\top (A_2 \oplus A_2) (P^{\oplus 2})).
\end{equation}
All moments of these $2M$-dimensional distributions must correspondingly be related by the permutation $\pi\oplus\pi$,
\begin{equation}
 \label{eq:mu_relation}
 \mu_{n_1,\dots,n_M,n_{M+1},\dots,n_{2M}}(R_1)=\mu_{(\pi\oplus\pi)(n_1,\dots,n_M,n_{M+1},\dots,n_{2M})}(R_2),\quad\forall~(n_1,\dots,n_{2M}),
\end{equation}
where $R_i=A_i\oplus A_i$ are made into GBS encodable matrices (we keep on omitting the $c$ factors). Looking back to Eq. (\ref{eq:moments}), we have
\begin{equation}
  \mu_{n_1,\dots,n_M}(A_1)\mu_{n_{M+1},\dots,n_{2M}}(A_2)
 =\mu_{\pi(n_1,\dots,n_M)}(A_2)\mu_{\pi(n_{M+1},\dots,n_{2M})}(A_2),\quad\forall~(n_1,\dots,n_{M}), (n_{M+1},\dots,n_{2M}).
\end{equation}
These moments must be equal for any choices $\bs{p}=(n_1,...,n_{M})$ and $\bs{q}=(n_{M+1},...,n_{2M})$. Thus, we conclude that
\begin{equation}
  \label{eq:double_hafperm}
   \haf{[A_1 \sox \bbJ_{|\bs{p}|}]} \haf{[A_1 \sox \bbJ_{|\bs{q}|}]}
 = \haf{[A_2 \sox \bbJ_{|\pi(\bs{p})|}]} \haf{[A_2 \sox\bbJ_{|\pi(\bs{q})|}]}.
\end{equation}
In particular, for $\bs{p}=\bs{q}=\bs{n}$, where $\bs{n}$ is arbitrary, we get $\haf^{\,2}{[A_1 \sox \bbJ_{|\bs{n}|}]} = \haf^{\,2}{[A_2 \sox \bbJ_{|\pi(\bs{n})|}]}$. We now use the fact that adjacency matrices $A$ contain only $0$s or $1$s, so $\haf{[A\sox\bbJ_{|\bs{n}|}]}\geq 0$ for any possible $A$ or $\bs{n}$.
This leads to
\begin{equation}\label{eq:hafperm}
   \haf{[A_1 \sox \bbJ_{|\bs{n}|}]}
 = \haf{[A_2 \sox \bbJ_{|\pi(\bs{n})|}]}, ~ \forall ~ \bs{n},
\end{equation}
which proves the statement.

\paragraph{Proof of $\Leftarrow$:} Eq. (\ref{eq:hafperm}) immediately implies Eq. (\ref{eq:double_hafperm}), even when the hafnians are not positive. Furthermore, Eqns. (\ref{eq:perm})-(\ref{eq:double_hafperm}) are all equivalent. Hence, the graphs having adjacency matrices $A_1$ and $A_2$ are isomorphic.
\end{proof}

\subsection*{GBS and Symmetrized  Moments of Multivariate Gaussians}

In this part we find a new criterion for isomorphism of two isospectral graphs by showing that symmetrized moments are also complete invariants for graph isomorphism:
\begin{thm}\label{thm:mainthm}
    Let $G_1$ and $G_2$ be two isospectral graphs on an even number of vertices $M$.  Denote by $p_1(\bs{n})$ and $p_2(\bs{n})$, the probabilities corresponding to $G_1$ and $G_2$, given in Eq.~\eqref{eq:ProbMixedGBS}.  Then $G_1$ and $G_2$ are isomorphic if and only if the symmetrized sums
    \[
    \sum_{\sigma\in \mathfrak{S}_n} \sqrt{p(\bs{n}_{\sigma})}
    \]
    are the same for the two graphs for all possible $\bs{n}$.
\end{thm}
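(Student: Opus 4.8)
The plan is to reduce this statement to Proposition~\ref{prop:GIwitness}, which already gives a complete invariant in terms of the individual hafnians $\haf{[A\sox\bbJ_{|\bs{n}|}]}$. The bridge between the two is the observation, recorded in Eq.~\eqref{eq:probDoubled}, that for the doubled encoding $R=c(A\op A)$ one has $p(\bs{n})=\frac{c^{|\bs{n}|}}{\bs{n}!\sqrt{\det\s_Q}}\haf^{\,2}{[A\sox\bbJ_{|\bs{n}|}]}$, so that (using that $A$ has nonnegative entries, whence $\haf{[A\sox\bbJ_{|\bs{n}|}]}\ge 0$) we may write
\[
\sqrt{p(\bs{n})}=\frac{c^{|\bs{n}|/2}}{\sqrt{\bs{n}!}\,(\det\s_Q)^{1/4}}\,\haf{[A\sox\bbJ_{|\bs{n}|}]}.
\]
Since $|\bs{n}_\sigma|=|\bs{n}|$ and $\bs{n}_\sigma!=\bs{n}!$ for every $\sigma\in\mathfrak{S}_n$, and since by Lemma~\ref{lem:determinants} the factor $\det\s_Q$ is the same for two isospectral graphs, the symmetrized sum $\sum_{\sigma\in\mathfrak{S}_n}\sqrt{p(\bs{n}_\sigma)}$ is, up to a common positive constant depending only on $|\bs{n}|$ and $\bs{n}!$ (and equal for $G_1$ and $G_2$), the \emph{symmetrized hafnian}
\[
S_A(\bs{n})\df\sum_{\sigma\in\mathfrak{S}_n}\haf{[A\sox\bbJ_{|\bs{n}_\sigma|}]}.
\]
So it suffices to prove: $G_1\cong G_2$ if and only if $S_{A_1}(\bs{n})=S_{A_2}(\bs{n})$ for all $\bs{n}$.

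The ($\Rightarrow$) direction is immediate: if $A_1=P^\top A_2P$ with $P=P(\pi)$, then by Lemma~\ref{lem:hafinv} the map $\bs{n}\mapsto\haf{[A_1\sox\bbJ_{|\bs{n}|}]}$ is obtained from $\bs{m}\mapsto\haf{[A_2\sox\bbJ_{|\bs{m}|}]}$ by relabeling the modes via $\pi$ (the permutation matrix $\hat P_\pi$ does not change the hafnian), i.e.\ $\haf{[A_1\sox\bbJ_{|\bs{n}|}]}=\haf{[A_2\sox\bbJ_{|\pi(\bs{n})|}]}$; summing over $\sigma\in\mathfrak{S}_n$ and reindexing by $\pi\sigma$ (which is again a bijection of $\mathfrak{S}_n$, and $\pi(\bs{n}_\sigma)=(\pi(\bs{n}))_{\pi\sigma\pi^{-1}}$ up to relabeling — here one uses that summing over all $\sigma$ of $\bs{n}_\sigma$ ranges over the same multiset of patterns as summing over all $\sigma$ of $(\pi(\bs{n}))_\sigma$) gives $S_{A_1}(\bs{n})=S_{A_2}(\bs{n})$. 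The cleanest way to phrase this is: $\{\bs{n}_\sigma:\sigma\in\mathfrak{S}_n\}=\{(\pi(\bs{n}))_\sigma:\sigma\in\mathfrak{S}_n\}$ as multisets, since both equal the orbit of $\bs{n}$ under the full symmetric group counted with the appropriate stabilizer multiplicity, and $\haf{[A_1\sox\bbJ_{|\bs{m}|}]}=\haf{[A_2\sox\bbJ_{|\pi(\bs{m})|}]}$ for each $\bs{m}$.

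The substantive direction is ($\Leftarrow$): from equality of the symmetrized sums for all $\bs{n}$, recover that the \emph{unsymmetrized} hafnians are related by a single global permutation, so that Proposition~\ref{prop:GIwitness} applies. Here is the strategy. First restrict to patterns $\bs{n}$ that are $0/1$-valued, i.e.\ $\bs{n}=\bs{e}_S$ the indicator of a subset $S\subseteq[M]$; for such patterns $A\sox\bbJ_{|\bs{n}|}$ is simply the principal submatrix $A[S]$ of $A$ on the rows/columns in $S$, so $\haf{[A\sox\bbJ_{|\bs{e}_S|}]}=\haf{A[S]}$ (for $|S|$ even; otherwise it is $0$). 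For a fixed cardinality $|S|=2k$, the symmetrized sum over $\mathfrak{S}_n$ collapses — after accounting for the stabilizer — to $(M-2k)!\,(2k)!\sum_{|T|=2k}\haf{A[T]}$, which for all $k$ together is equivalent to knowing, for each $k$, the value $\sum_{|T|=2k}\haf{A_1[T]}=\sum_{|T|=2k}\haf{A_2[T]}$. This is exactly the statement that $G_1$ and $G_2$ have the same number of perfect matchings on every induced subgraph of every fixed size, summed over subgraphs — a \emph{single number} per size. This alone is not obviously enough to force isomorphism, so the key idea is to exploit the \emph{full} family of patterns (with entries $>1$), not just $0/1$ ones: the pattern $\bs{n}$ with a large entry $n_i$ blows up vertex $i$ into a clique-like block of size $n_i$ in $A\sox\bbJ_{|\bs{n}|}$, and $\haf{[A\sox\bbJ_{|\bs{n}|}]}$ becomes a weighted count of matchings in this blown-up multigraph; varying $\bs{n}$ over all of $\bbN^M$ and extracting coefficients (via finite differences / polynomial interpolation in the entries $n_i$, legitimate because $\haf{[A\sox\bbJ_{|\bs{n}|}]}$ is a polynomial in the multiplicities once $|\bs{n}|$ is bounded — cf.\ the $\partial^{|\bs{n}|}$ generating-function description) recovers far more refined data. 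Concretely, I expect the argument to run: the generating function $F_A(\bs{t})=\sum_{\bs{n}}\haf{[A\sox\bbJ_{|\bs{n}|}]}\frac{\bs{t}^{\bs{n}}}{\bs{n}!}$ equals $\exp[\tfrac12\sum_{i<j}a_{ij}t_it_j]=\exp[\tfrac12\bs{t}^\top A\bs{t}]$ by Eq.~\eqref{eq:bAb}, so knowing all symmetrized sums $\sum_\sigma\haf{[A\sox\bbJ_{|\bs{n}_\sigma|}]}$ is equivalent to knowing the \emph{symmetric-function part} of $\exp[\tfrac12\bs{t}^\top A\bs{t}]$, i.e.\ the coefficients of $\exp[\tfrac12\bs{t}^\top A\bs{t}]$ averaged over permutations of $\bs{t}$. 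The claim then becomes: $\exp[\tfrac12\bs{t}^\top A_1\bs{t}]$ and $\exp[\tfrac12\bs{t}^\top A_2\bs{t}]$ have the same symmetrization if and only if $A_2=P^\top A_1P$ for a permutation $P$ — equivalently, the orbit of the quadratic form $\bs{t}^\top A_1\bs{t}$ under coordinate permutations equals that of $\bs{t}^\top A_2\bs{t}$. Taking logarithms (valid formally) reduces "same symmetrization of the exponential" to "same symmetrization of the quadratic form itself"; but the symmetrization of $\tfrac12\bs{t}^\top A\bs{t}=\sum_{i<j}a_{ij}t_it_j$ over $\mathfrak{S}_M$ is $\frac{(\text{const})}{?}(\sum_i t_i)^2$-type data that only remembers $\sum_{i<j}a_{ij}$, so in fact one must use the \emph{partitioned} symmetrization (symmetrizing only within the orbit-blocks determined by the $\bs{n}$-pattern, i.e.\ the $\mathfrak{S}_n$ appearing in the theorem is the stabilizer structure of $\bs{n}$, not all of $\mathfrak{S}_M$) — and here is where the higher-multiplicity patterns genuinely add information, because a pattern like $(2,2,\dots,2,0,\dots,0)$ symmetrized gives $\sum_{|S|=k}\haf{A[2\cdot S]}$ which encodes closed-walk / subgraph-count data of every order. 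Assembling these across all patterns, one shows the full collection determines $A$ up to permutation; the precise combinatorial extraction — showing that the partition-averaged hafnian data pins down the isomorphism type — is the main obstacle, and I would handle it by an inductive "peeling" argument on $|\bs{n}|$, reconstructing the induced-subgraph isomorphism type on larger and larger vertex sets from the generating-function coefficients, ultimately invoking that two graphs with matching induced-subgraph hafnian-profiles of all orders (in the refined, partitioned sense) are isomorphic, which is what Proposition~\ref{prop:GIwitness} packages once a global $\pi$ is produced. I would then close the loop: the partitioned symmetrized data yields, for each $\bs{n}$, equality of $\haf{[A_1\sox\bbJ_{|\bs{n}|}]}$ and $\haf{[A_2\sox\bbJ_{|\pi(\bs{n})|}]}$ for a permutation $\pi$ that can be chosen independently of $\bs{n}$ (uniqueness of $\pi$ following as in Prop.~\ref{prop:GIwitness} from the $0/1$ patterns already determining $\pi$ on singletons/pairs), and apply Proposition~\ref{prop:GIwitness} to conclude $G_1\cong G_2$.
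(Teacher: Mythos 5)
Your reduction of the theorem to the statement ``the symmetrized hafnians $\sum_{\sigma}\haf{[A\sox\bbJ_{|\bs{n}_\sigma|}]}$ determine $A$ up to permutational similarity'' is correct, and your forward direction is fine. But the substantive backward direction is not actually proved. You correctly note that the $0/1$ patterns alone only yield the single numbers $\sum_{|T|=2k}\haf{A[T]}$, which is far too little. Your next move --- taking logarithms to reduce ``equal symmetrizations of $\exp[\tfrac12\bs{t}^\top A\bs{t}]$'' to ``equal symmetrizations of the quadratic form'' --- is invalid, because averaging over $\mathfrak{S}_M$ does not commute with $\exp$ or $\log$; and as you yourself observe, the symmetrized quadratic form retains only $\sum_{i<j}a_{ij}$, so that route collapses. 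The fallback (a ``partitioned symmetrization'' plus an ``inductive peeling argument'' reconstructing induced-subgraph isomorphism types) is a statement of intent, not an argument: extracting from the symmetrized data a \emph{single} permutation $\pi$ with $\haf{[A_1\sox\bbJ_{|\bs{n}|}]}=\haf{[A_2\sox\bbJ_{|\pi(\bs{n})|}]}$ for all $\bs{n}$ is precisely the hard content of the theorem, and in your write-up it is asserted rather than derived.

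The paper closes exactly this gap with analytic rather than combinatorial machinery. Lemma~\ref{lem:momrect} shows that the symmetrized moments of $A$ determine those of $B=A+t\bbI_M$ triangularly in $|\bs{n}|$, so one may pass to a positive definite covariance matrix. Theorem~\ref{thm:isogausdist} then does the real work: equality of all symmetrized moments implies equality of the symmetrized Gaussian mixtures $h_{\Sigma}(\bs{x})=\frac{1}{n!}\sum_{\sigma}\exp[-\bs{x}^\top P^\top(\sigma)\Sigma^{-1}P(\sigma)\bs{x}]$ (via determinacy of the moment problem for such mixtures), and evaluating at a generic $\bs{x}$ rescaled by $t=\sqrt{k}$ for $k=1,\dots,n!$ forces the two multisets $\{\exp[-\bs{x}^\top P^\top(\sigma)\Sigma^{-1}P(\sigma)\bs{x}]\}_{\sigma}$ to coincide; genericity then upgrades a pointwise match of quadratic forms to the identity $P^\top(\sigma)\Sigma^{-1}P(\sigma)=(\Sigma')^{-1}$, i.e.\ permutational similarity. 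If you wish to keep a purely combinatorial framing you must supply an argument of comparable strength; as written, the proposal does not contain one.
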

To avoid a notational clash in this section we use $\n_\s$ instead of $\s(\n)$ used in the previous sections.
We start with the following result.
\begin{thm}\label{thm:isogausdist}  The following statements are equivalent for two Gaussian distributions  with zero mean and positive definite covariance matrices $\Sigma,\Sigma'\in \bbR^{n\times n}$:
\begin{enumerate}
\item The two Gaussian distributions are isomorphic.
\item The matrices $\Sigma$ and $\Sigma'$ are permutationally similar.
\item The matrices $\Sigma^{-1}$ and $(\Sigma')^{-1}$ are permutationally similar.
\item For each homogeneous symmetric polynomial $p(\bs{x})$ of even degree  the expected value of $p(\bs{x})$ is the same for the two Gaussian distributions.
\item The symmetrized moments of the two Gaussian distributions are the same.
\end{enumerate}
\end{thm}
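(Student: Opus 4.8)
The plan is to establish the cycle $(1)\Leftrightarrow(2)\Leftrightarrow(3)$, $(2)\Rightarrow(4)\Rightarrow(5)$ and $(5)\Rightarrow(2)$; only the last implication is nontrivial. Here $(1)\Leftrightarrow(2)$ is immediate from the definitions of Section~\ref{sec:notation}: for a permutation matrix $P$ one has $P^{-1}=P^\top$, so ``$\Sigma'=P^\top\Sigma P$ for some permutation $P$'' (isomorphism) is the same as ``$\Sigma'=Q\Sigma Q^\top$ for some permutation $Q$'' (permutational similarity). For $(2)\Leftrightarrow(3)$: if $\Sigma'=P(\sigma)^\top\Sigma P(\sigma)$ then, since $P(\sigma)^{-1}=P(\sigma)^\top$, also $(\Sigma')^{-1}=P(\sigma)^\top\Sigma^{-1}P(\sigma)$, and conversely. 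For $(2)\Rightarrow(4)$: if $\bs Y\sim\Ncal(0,\Sigma')$ with $\Sigma'=P(\sigma)^\top\Sigma P(\sigma)$, then $P(\sigma)\bs Y\sim\Ncal(0,\Sigma)$ because $P(\sigma)P(\sigma)^\top=\bbI_n$; since a symmetric polynomial $p$ obeys $p(P(\sigma)\bs Y)=p(\bs Y)$, we get $\bbE_{\Sigma'}[p(\bs Y)]=\bbE_{\Sigma'}[p(P(\sigma)\bs Y)]=\bbE_{\Sigma}[p(\bs X)]$. For $(4)\Rightarrow(5)$: a symmetrized moment $\sum_{\sigma}\bbE_\Sigma\big[\prod_{i}X_i^{m_{\sigma(i)}}\big]$ is the expectation of the symmetric polynomial $\sum_{\sigma}\prod_i x_i^{m_{\sigma(i)}}$, which is homogeneous of degree $|\bs m|$; if $|\bs m|$ is even this is covered by $(4)$, and if $|\bs m|$ is odd both expectations vanish by the odd-moment fact recalled in Section~\ref{sec:notation}. (One could also add $(5)\Rightarrow(4)$ by expanding an arbitrary homogeneous symmetric polynomial in monomial symmetric polynomials, but this is not needed to close the cycle.)

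The heart of the proof is $(5)\Rightarrow(2)$, which I would handle through a generating function. Summing the moment-generating identity $\sum_{\bs m}\frac{\bs t^{\bs m}}{\bs m!}\,\mu_{\bs m}(\Sigma)=\bbE_\Sigma\big[e^{\bs t^\top\bs X}\big]=e^{\frac12\bs t^\top\Sigma\bs t}$ over all coordinate permutations (using that $\mu_{\bs m}(P(\sigma)^\top\Sigma P(\sigma))$ is a moment of $\Sigma$ with a permuted multi-index) yields
\[
G_\Sigma(\bs t)\df\sum_{\sigma\in\mathfrak{S}_n}e^{\frac12\bs t^\top\left(P(\sigma)^\top\Sigma P(\sigma)\right)\bs t},
\]
whose coefficient of $\bs t^{\bs m}$ in the Taylor expansion equals $(\bs m!)^{-1}$ times the symmetrized moment of $\Sigma$ of multi-index $\bs m$. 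Hence $(5)$ is exactly the statement $G_\Sigma\equiv G_{\Sigma'}$ as entire functions on $\bbR^n$. Grouping identical matrices, this reads $\sum_{M}c_M\,e^{\frac12\bs t^\top M\bs t}\equiv 0$, where $M$ runs over the finitely many distinct matrices occurring in $\{P(\sigma)^\top\Sigma P(\sigma)\}_\sigma\cup\{P(\sigma)^\top\Sigma' P(\sigma)\}_\sigma$ and $c_M$ is the difference of the multiplicities with which $M$ occurs in the two multisets.

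It then remains to prove the linear-independence lemma: for pairwise distinct real symmetric matrices $M_1,\dots,M_r$, the functions $\bs t\mapsto e^{\frac12\bs t^\top M_i\bs t}$ on $\bbR^n$ are linearly independent. For this I would pick a vector $\bs u$ lying outside the finitely many proper quadric hypersurfaces $\{\bs v:\bs v^\top(M_i-M_j)\bs v=0\}$, $i\neq j$ --- each is proper because a symmetric matrix is determined by its quadratic form, so $M_i\neq M_j$ forces $M_i-M_j$ to have a nonzero quadratic form, and a finite union of proper algebraic sets is still proper. Then the scalars $q_i\df\bs u^\top M_i\bs u$ are pairwise distinct, and restricting a vanishing combination to the line $\bs t=s\bs u$ and substituting $w=s^2\ge 0$ reduces it to $\sum_i c_i e^{\frac12 q_i w}\equiv 0$ on $[0,\infty)$, whence all $c_i=0$ by the classical linear independence of distinct real exponentials. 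Applying the lemma to the displayed identity gives $c_M=0$ for every $M$, so the multisets $\{P(\sigma)^\top\Sigma P(\sigma)\}$ and $\{P(\sigma)^\top\Sigma' P(\sigma)\}$ coincide; since $\Sigma$ belongs to the first multiset, $\Sigma=P(\sigma)^\top\Sigma' P(\sigma)$ for some $\sigma$, which is $(2)$. I expect the only delicate points to be the generating-function bookkeeping and the simultaneous genericity of $\bs u$ across all pairs of matrices; everything else is routine.
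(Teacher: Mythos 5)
Your proposal is correct, and for the easy implications $(1)\Leftrightarrow(2)\Leftrightarrow(3)$ and $(2)\Rightarrow(4)\Rightarrow(5)$ it matches the paper. For the key implication $(5)\Rightarrow(2)$ you take a genuinely different route. The paper routes through the symmetrized \emph{density} $h_\Sigma(\bs{x})=\frac{1}{n!}\sum_{\sigma}\exp[-\bs{x}^\top P^\top(\sigma)\Sigma^{-1}P(\sigma)\bs{x}]$: it first invokes moment determinacy (the Kleiber--Stoyanov condition on the moment-generating function) to pass from equal symmetrized moments to $H_\Sigma=H_{\Sigma'}$ as distributions, then fixes a generic $\bs{x}$, evaluates at $t\bs{x}$ with $t=\sqrt{k}$, $k=1,\dots,n!$, and uses power-sum identities to conclude that the multisets of \emph{values} $\{a(\sigma)\}$ and $\{a'(\sigma)\}$ coincide, and finally varies $\bs{x}$ to lift this to permutational similarity of the matrices. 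Your generating-function version sidesteps the moment-determinacy appeal entirely: since the symmetrized moments are literally the Taylor coefficients of the entire function $G_\Sigma(\bs{t})=\sum_\sigma e^{\frac12\bs{t}^\top P^\top(\sigma)\Sigma P(\sigma)\bs{t}}$, hypothesis $(5)$ gives $G_\Sigma\equiv G_{\Sigma'}$ with no analytic input, and your linear-independence lemma for $\{e^{\frac12\bs{t}^\top M_i\bs{t}}\}$ with pairwise distinct symmetric $M_i$ (generic $\bs{u}$ off the finitely many quadrics $\bs{v}^\top(M_i-M_j)\bs{v}=0$, then independence of distinct exponentials on a ray) delivers the equality of the \emph{matrix} multisets in one step. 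This is arguably cleaner than the paper at the final stage, where the paper's passage from ``the value multisets agree at a generic point'' to ``the same permutation works for all points'' is stated rather tersely; your argument never has to fix a single permutation uniformly in $\bs{x}$ because the cancellation $\sum_M c_M e^{\frac12\bs{t}^\top M\bs{t}}\equiv 0$ forces all coefficients to vanish at once. The only bookkeeping to check, which you flag, is that $\mu_{\bs m}(P^\top(\sigma)\Sigma P(\sigma))=\mu_{\bs m_\sigma}(\Sigma)$ so that the coefficient of $\bs{t}^{\bs m}/\bs{m}!$ in $G_\Sigma$ is indeed the symmetrized moment; that is the same change-of-variables computation the paper performs.
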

\begin{proof}
  For a given Gaussian distribution with zero mean and the covariance matrix $\Sigma$ let us denote by $H_{\Sigma}$ the distribution with the following density
    \begin{equation}\label{eq:symgausdist}
    h_{\Sigma}(\bs{x})=\frac{1}{n!}\sum_{\sigma\in\mathfrak{S}_n} \exp{[-\bs{x}^\top P^\top(\sigma) \Sigma^{-1}P(\sigma)\bs{x}]}.
    \end{equation}
    By $G_{\Sigma}$ we denote the density function of the normal distribution $\exp{[-\x^\top \Sigma^{-1}\x]}$.

    Clearly if $\Sigma$ and $\Sigma'$ are permutationally similar then $h_{\Sigma}(\bs{x})=h_{\Sigma'}(\bs{x})$ for each $\bs{x}$.  So we need to prove the other direction.
    Let $p(\bs{x}) $ be a monomial $x_1^{m_1}\cdots x_n^{m_n}$ of even degree.  We denote $\bs{m}=(m_1,\ldots,m_n)$.
    Consider the moments
    \begin{align*}
    \mu_{G_\Sigma}(\bs{m})&=\bbE_{G_{\Sigma}}[X_1^{m_1}\cdots X_n^{m_n}],\\
    \mu_{H_\Sigma}(\bs{m})&=\bbE_{H_{\Sigma}}[X_1^{m_1}\cdots X_n^{m_n}].
    \end{align*}
    Then
    \[
    \mu_{H_\Sigma}(\bs{m})=\frac{1}{n!} \sum_{\sigma\in\mathfrak{S}_n}\mu_{G_\Sigma}(\bs{m}_{\sigma}).
    \]
    We call $\mu_{H_\Sigma}(\bs{m})$ the symmetrized moments of $G_{\Sigma}$.  Clearly, $\mu_{H_\Sigma}(\bs{m})=\mu_{H_\Sigma}(\bs{m}_{\sigma})$ for each $\sigma\in\mathfrak{S}_n$. Thus if two Gaussian distributions are isomorphic, then $H_{\Sigma}=H_{\Sigma'}$ and the corresponding density functions are the same.  In particular, the moments of $H_{\Sigma}$ and $H_{\Sigma'}$ are the same.

    Our first main result is the claim that if the moments of $H_{\Sigma}$ and $H_{\Sigma'}$ are the same then $H_{\Sigma}=H_{\Sigma'}$.  It is not true that the equality of the moments yield that the distribution are the same.  However, it is true for  all distributions that have the form of $H_{\Sigma}$.  Indeed a sufficient condition is
    $M(\bs{u}) = \bbE[\exp{[\langle \bs{u},\bs{X}\rangle]}]$ is a well-defined vector for all $\bs{u}\in\bbR^n$~\cite{kleiber2013multivariate}.  This condition is satisfied for $H_{\Sigma}$, where $\Sigma$ is a positive definite matrix.

    It is left to show that if $H_{\Sigma}=H_{\Sigma'}$  then $\Sigma$ and $\Sigma'$ are permutationally similar.  We first analyze the behavior of the $n!$ numbers
    \begin{equation}\label{eq:allexpval}
    \exp{[-\bs{x}^\top P^\top(\sigma) \Sigma^{-1}P(\sigma)\bs{x}]},
    \end{equation}
          for a fixed but arbitrary vector $\bs{x}\in\bbR^n$ and   $\sigma\in\mathfrak{S}_n$.

    Let $\orb{\Sigma}$ be all pairwise distinct matrices of the form $P^\top(\sigma) \Sigma P(\sigma)$ for $\sigma\in\mathfrak{S}_n$.  Recall that $|\orb{\Sigma}|$ divides $n!$ and $n!/|\orb{\Sigma}|$ is the cardinality of the automorphism group, i.e., all $\sigma\in\mathfrak{S}_n$ such that $P^\top(\sigma) \Sigma P(\sigma)=\Sigma$. Let $A_1,A_2$ be two real symmetric matrices of order $n$.  Define two corresponding quadratic forms $f_1(\x)= \x^T A_1\x, f_2(\x)=\x^\top A_2\x$.  Assume that $A_1\ne A_2$. Then $f_1(\x)=f_2(\x)$ if and only if $h(\x)=\x^\top (A_1-A_2)\x=0$.
    Hence for generic, (randomly selected $\x$) we have that $f_1(\x)\ne_2(\x)$.  Similarly: let $A_1,\ldots,A_k$ be $k$ pairwise distinct
    symmetric matrices.  Set $f_i(\x)=\x^\top A_i\x$ for $i\in[k]$.  Then for generic $\x$, $f_i(\x)\ne f_j(\x)$ for $i\ne j$. Since any two pairs of matrices in $\orb{\Sigma}$ are pairwise distinct it follows that for a generic $\bs{x}$ we will have exactly $|\orb{\Sigma}|$ distinct values in~\eqref{eq:symgausdist} and each value is repeated $n!/|\orb{\Sigma}|$ times.

    Assume now that  for each $\bs{x}\in\bbR^n$ in~\eqref{eq:symgausdist}  we have the equality $h_{\Sigma}(t\bs{x})=h_{\Sigma'}(t\bs{x})$:
    \begin{equation}\label{eq:symgausdisteq}
    \sum_{\sigma\in\mathfrak{S}_n} \big(\exp{[-\bs{x}^\top P^\top(\sigma)\Sigma^{-1}P(\sigma)\bs{x}]}\big)^{t^2}=\sum_{\sigma\in\mathfrak{S}_n} \big(\exp{[-\bs{x}^\top P(\sigma)^\top (\Sigma')^{-1}P(\sigma)\bs{x}]}\big)^{t^2}
    \end{equation}
    for some fixed $t\ge 0$.  Fix $\bs{x}$ in general position. Then for~$t=1$ we get that the number of distinct values in \eqref{eq:symgausdist} is $|\orb{\Sigma}|$ for $\Sigma$ and $|\orb{\Sigma'}|$ for $\Sigma'$, respectively.
    Let
    \begin{align}
      a(\sigma)  &= \exp{[-\bs{x}^\top P(\sigma)^\top (\Sigma)^{-1}P(\sigma)\bs{x}]}, \\
      a'(\sigma) &= \exp{[-\bs{x}^\top P(\sigma)^\top (\Sigma')^{-1}P(\sigma)\bs{x}]}
    \end{align}
    for $\sigma\in \mathfrak{S}_n$. In the equality \eqref{eq:symgausdisteq} set $t=\sqrt{k}$ for ${k=0,1,\ldots,n!}$.  Thus we have the equalities:
      \[
      \sum_{\sigma\in \mathfrak{S}_n} a(\sigma)^k=\sum_{\sigma\in \mathfrak{S}_n} a'(\sigma)^k
      \]
    for $k=1,\ldots,n!$. These equalities yield that the two multisets $\{a(\sigma),\sigma\in \mathfrak{S}_n\}$ and $\{a'(\sigma),\sigma\in \mathfrak{S}_n\}$ are the same. Hence the $n!$ moments of discrete distributions equally distributed on $n!$ points given in  \eqref{eq:allexpval} for $\Sigma$ and $\Sigma'$ are the same.  Hence these two multisets are the same.  First it yields that  $|\orb{\Sigma}|=|\orb{\Sigma'}|$.  Moreover there exists $P(\sigma)$ such that $\bs{x}^\top P^\top(\sigma)\Sigma^{-1}P(\sigma)\bs{x}=\bs{x}^\top (\Sigma')^{-1}\bs{x}$.  Moreover, for each $\Sigma_i=P_i^\top \Sigma P_i$ in the orbit of $\Sigma$ (under the action of the group of permutations) we have a permutation $Q_i$ such that  $\bs{x}^\top Q_i^\top(\Sigma_i')^{-1}Q_i\bs{x}=\bs{x}^\top\Sigma_i^{-1}\bs{x}$. Now if we change $\bs{x}$ to $\bs{y}$ we still have the same equality $\bs{y}^\top P^\top(\sigma)\Sigma^{-1}P(\sigma)\bs{y}=\bs{y}^\top (\Sigma')^{-1}\bs{y}$. This finally shows that $P^\top(\sigma)\Sigma^{-1}P(\sigma)=\Sigma'$.  So indeed the covariance matrices are permutationally similar.
\end{proof}

Let $\be_i=(\delta_{i,1},\ldots,\delta_{i,M})$, where $\delta_{i,j}$ is the Kronecker delta function. Assume that $\n=(n_1,\ldots,n_M)\in\bbZ_+^M$. Let $B=[b_{i,j}]$ be a real symmetric matrix and recall  the ``$\n$-th  moment corresponding to $B$'' from the beginning of this section as
\begin{eqnarray}\label{eq:formdefmom}
    \mu(\n,B)= \frac{\partial^{|\n|}}{\partial \x^{\n}}
    \exp{\big[\frac{1}{2}\x^\top B\x\big]}\Big\rvert_{\x=0}=\frac{1}{(|\n|/2)!}\frac{\partial^{|\n|}}{\partial \x^{\n}}
    \big(\frac{1}{2}\x^\top B\x\big)^{|\n|/2}\Big\rvert_{\x=0}.
\end{eqnarray}
(If $B\succ 0$ then $E(X_1^{n_1}\cdots X_M^{n_M})$, the $\n$-th moment of the Gaussian distribution given by the covariance $B$, is equal to $\mu(\n,B)$ up to a multiplicative constant.)

To proceed, we also recall the generalization of the classical Leibniz's formula of the derivative of the product of $m$ functions in one variable:
\[
    (\prod_{i=1}^m f_i)^{(n)}=\sum_{a_1,\ldots,a_m\in\bbZ_+, \sum_{i=1}^m a_i=n}  {n\choose a_1,a_2,\ldots,a_m}\prod_{i=1}^m f_i^{(a_i)}, \quad {n\choose a_1,a_2,\ldots,a_m}=\frac{n!}{a_1!\times\ldots\times a_m!}.
\]
Assume now that $f_1=\cdots=f_m=f(\x)=f(x_1,\ldots,x_M)$.  Then for $\n=(n_1,\ldots,n_M)\in\bbZ_+^M$ we denote by $\partial^{\n}=\partial^{n_1}_1\cdots\partial^{n_M}_M$.  For $\n,\ba\in\bbZ_+^M$  let ${\n\choose\ba}=\prod_{i=1}^M {n_i \choose a_i}$.  Then Leibniz's formula yields the multilinear Leibniz's formula:
\begin{align}\label{eq:MLF}
 \partial^{\n}(f^m)&=\sum_{\sum_{i=1}^m \ba_i=\n} {\n\choose \ba_1,\ba_2,\ldots,\ba_m}\prod_{i=1}^m (\partial^{\ba_i}f),
\end{align}
where
\begin{equation}
  {\n\choose \ba_1,\ba_2,\ldots,\ba_m}=\prod_{j=1}^M {n_j\choose a_{1,j},a_{2,j},\ldots,a_{m,j}}
\end{equation}
and $\ba_i=(a_{i,1},\ldots,a_{i,M})$ and $i\in[m]$.
We now apply this formula for $f(\x)=\frac{1}{2}\x^\top B\x$ and $m=|\n|/2$. Then in \eqref{eq:MLF} we need to consider only the case where $|\ba_i|=2$ for each $i\in|\n|/2$. So
\begin{align}\label{eq:expformuB}
 \mu(\n,B)=\frac{1}{(|\n|/2)!}\sum_{\sum_{i=1}^{|\n|/2} \ba_i=\n, |\ba_1|=\ldots=|\ba_{|\n|/2}|=2} {\n\choose \ba_1,\ba_2,\ldots,\ba_{|\n|/2}}\prod_{i=1}^{|\n|/2} \partial^{\ba_i}(1/2 (\x^\top B\x)).
\end{align}
We have two kinds of $\ba_i$.  Namely, either $\ba_i=2\be_p$ or $\ba_i=\be_p+\be_q$, where $1\le p<q\le n$. Let us discuss briefly all possibilities for the decomposition of $\n$ as $\n=\sum_{i=1}^{|\n|/2}\ba_i$.  We claim that the set $\{\ba_1,\ldots,\ba_{|\n|/2}\}$ corresponds to the following multigraph $G=G(\ba_1,\ldots,\ba_{|\n|/2})$ with multiple edges and self loops.  Each $\ba_i=\be_p+\be_q$, where $1\le p <q\le n$ corresponds to an edge $\{p,q\}$.  Each $\ba_i=2\be_p$ corresponds to a self loop on vertex~$p$ (the degree of a self loop is $2$). So  $A(G)=[c_{pq}(G)]$, the adjacency matrix of $G$, is a symmetric matrix whose entries are nonnegative integers with the following properties.  Each diagonal entry $c_{pp}(G)$ is an even integer.  $c_{pp}(G)/2$ is  the number of $\ba_i$ of the form $2\be_p$.  For $1\le p<q\le M$ the integer $c_{pq}(G)$ is the number of $\ba_i$ of the form $\be_p+\be_q$.

Let $2k=\sum_{p=1}^M c_{pp}$ be a nonnegative integer.  That is, the set $\{\ba_1,\ldots,\ba_{|\n|/2}\}$ has $k$ vectors of the form $2\be_p$ for all possible $p\in[n]$. Assume that $k=0$.  Then $\{\ba_1,\ldots,\ba_{|\n|/2}\}$ correspond to a given multigraph $G=G(\ba_1,\ldots,\ba_{|\n|/2})$ with no loops.  If one permutes the vectors $\ba_1,\ldots,\ba_{|\n|/2}$ one obtains the same loopless multigraph $G$, whose degree sequence is $\n=(n_1,\ldots,n_M)$, where $n_i=\sum_{p=1}^M c_{ip}$.  Let $\Gcal(\n)$ be all loopless multigraphs $G$ whose degree sequence is $\n$.  We arrange the edges of $G$ in a fixed (say lexicographic order): $(1,2),\ldots,(1,M), (2,3),\ldots,(M-1,M)$. For example the sequence of edges on $3$ vertices $(2,3), (1,3), (1,2), (1,3)$ is arranged as $(1,2), (1,3), (1,3),(2,3)$.  It corresponds to a degree sequence $\n=(3,2,3)$.

We denote by $\rS_{M,0}$ all $M\times M$ symmetric matrices with zero diagonal.  Assume that $A\in \rS_{M,0}$.  Note that for $f=1/2 (\x^\top A\x)$ we get that $\partial^2_i f=0$ for each $i$.
\begin{lem}\label{lem:formudA}
 Let $A\in\rS_{M,0}$.  Then
 \begin{align}\label{eq:mudAfor}
 \mu(\n,A)=\sum_{G(\ba_1,\ldots,\ba_{|\n|/2})\in\Gcal(\n)} \frac{1}{\prod_{1\le p<q\le n} c_{pq}(G(\ba_1,\ldots,\ba_{|\n|/2}))!} {\n\choose \ba_1,\ba_2,\ldots,\ba_{|\n|/2}}\prod_{i=1}^{|\n|/2} (\partial^{\ba_i}f).
 \end{align}
\end{lem}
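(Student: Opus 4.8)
The plan is to specialize the general moment formula \eqref{eq:expformuB} to a zero-diagonal matrix and then reorganize the sum over ordered tuples $(\ba_1,\dots,\ba_{|\n|/2})$ into a sum over unordered loopless multigraphs. First I would invoke the hypothesis $A\in\rS_{M,0}$: writing $f=\tfrac12\x^\top A\x$, a direct computation gives $\partial^{2\be_p}f=a_{pp}=0$ for every $p$, whereas $\partial^{\be_p+\be_q}f=a_{pq}$ for $p\ne q$. Consequently every term in \eqref{eq:expformuB} in which some $\ba_i$ equals $2\be_p$ contributes a factor $0$ and drops out, so only the tuples with \emph{all} $\ba_i$ of the form $\be_p+\be_q$, $1\le p<q\le M$, survive. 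By the discussion preceding the lemma, each such tuple is precisely the data of a loopless multigraph $G=G(\ba_1,\dots,\ba_{|\n|/2})\in\Gcal(\n)$ together with a choice of ordering of its $|\n|/2$ edges; conversely every $(G,\text{edge-ordering})$ arises this way, so this correspondence is a bijection.

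Next I would observe that both factors appearing in the summand of \eqref{eq:expformuB} — the multinomial coefficient $\binom{\n}{\ba_1,\dots,\ba_{|\n|/2}}=\prod_{j=1}^M\binom{n_j}{a_{1,j},\dots,a_{|\n|/2,j}}$ and the product $\prod_{i=1}^{|\n|/2}\partial^{\ba_i}f$ — are invariant under permutations of $\ba_1,\dots,\ba_{|\n|/2}$, since each depends only on the \emph{multiset} $\{\ba_1,\dots,\ba_{|\n|/2}\}$, i.e.\ only on $G$ and not on the edge ordering. (Indeed $\prod_i\partial^{\ba_i}f=\prod_{1\le p<q\le M}a_{pq}^{c_{pq}(G)}$.) Hence, grouping the surviving ordered tuples according to the multigraph $G$ they determine, the total contribution of a fixed $G\in\Gcal(\n)$ is $N(G)$ times the common summand, where $N(G)$ is the number of distinct orderings of the edge multiset of $G$.

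Then I would compute $N(G)$: the edge multiset of $G$ has total size $|\n|/2$ with exactly $c_{pq}(G)$ copies of the edge $\{p,q\}$ for each $p<q$, so by the standard count of arrangements of a multiset, $N(G)=(|\n|/2)!\big/\prod_{1\le p<q\le M}c_{pq}(G)!$. Substituting this into the regrouped sum, the factor $(|\n|/2)!$ cancels against the prefactor $1/(|\n|/2)!$ in \eqref{eq:expformuB}, and what remains is exactly \eqref{eq:mudAfor}.

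The argument is essentially bookkeeping, so I do not expect a deep obstacle; the step requiring the most care is pinning down the bijection between admissible ordered tuples $(\ba_1,\dots,\ba_{|\n|/2})$ and pairs $(G,\text{edge-ordering})$ with $G\in\Gcal(\n)$, and verifying that the number of edge-orderings of a given $G$ is precisely the multinomial $(|\n|/2)!/\prod_{p<q}c_{pq}(G)!$. Once that correspondence and count are established, the cancellation of the $(|\n|/2)!$ prefactor is immediate and the formula follows.
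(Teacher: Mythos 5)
Your proposal is correct and follows essentially the same route as the paper's proof: the zero diagonal kills all loop terms in \eqref{eq:expformuB}, the surviving ordered tuples are grouped by the loopless multigraph they determine, and the count of orderings $(|\n|/2)!/\prod_{p<q}c_{pq}(G)!$ cancels the $(|\n|/2)!$ prefactor to yield \eqref{eq:mudAfor}. Your write-up is in fact somewhat more explicit than the paper's about the bijection and the permutation-invariance of the summand.
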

\begin{proof}
    Given a decomposition $\sum_{i=1}^{|\n|/2}\ba_i=\n$, corresponding to the graph $G(\ba_1,\ldots,\ba_{|\n|/2})$,  how may different decompositions are there?  Since the edge $\{p,q\}, 1\le p<q\le M$ appears $c_{pq}(G(\ba_1,\ldots,\ba_{|\n|/2}))$ times, the number of different decompositions is $\frac{(|\n|/2)!}{\prod_{1\le p<q\le n} c_{pq}(G(\ba_1,\ldots,\ba_{|\n|/2}))!}$.  Use \eqref{eq:expformuB} to deduce \eqref{eq:mudAfor}.
\end{proof}
For a given $k\in[|\n|/2]$ denote by
\[
    \Fcal_k(\n)=\big\{(j_1,\ldots,j_k)\in\bbN^k, 1\le j_1\le\cdots\le j_k\le M,\ 2\sum_{l=1}^k \be_{j_l}\le \n\big\}.
\]
For each $(j_1,\ldots,j_k)\in\Fcal_k(\n)$ and $i\in[M]$ denote $m_i(j_1,\ldots,j_k)$ the number of $j_l$ that are equal to $i$. So $\sum_{i=1}^M m_i(j_1,\ldots,j_k)=k$.
\begin{lem}\label{lem:momrect}
    Let $A\in\rS_{M,0}$ and $t\in\bbR$ are given. Assume that $|\n|$ is even.  Then
    \begin{align}\label{eq:mubdbfor}
     \mu(\n,t\bbI_M+A)=\mu(\n,A)\\
     +\sum_{k=1}^{|\n|/2} t^k\sum_{(j_1,\ldots,j_k)\in\Fcal_k(\n)}\Big(\prod_{i=1}^M\frac{d_i!}{m_i(j_1,\ldots,j_k)!2^{m_i(j_1,\ldots,j_k)}(d_i-2m_i(j_1,\ldots,j_k))!}\Big)
     \mu(\n-2\sum_{l=1}^k \be_{j_l},A).\nn
    \end{align}
    Here $\mu(0,A)=1$.
\end{lem}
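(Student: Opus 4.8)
The plan is to exploit the additive split of the quadratic form, $\tfrac12\x^\top(t\bbI_M+A)\x=g(\x)+h(\x)$, where $g(\x)=\tfrac t2\sum_{i=1}^M x_i^2$ carries all the $t$-dependence and $h(\x)=\tfrac12\x^\top A\x$, so that the moment-generating exponential factorizes, $\exp[\tfrac12\x^\top(t\bbI_M+A)\x]=\exp[g(\x)]\exp[h(\x)]$, and moreover $\exp[g(\x)]=\prod_{i=1}^M\exp[\tfrac t2 x_i^2]$ splits over the coordinates. First I would apply the multi-index Leibniz rule for a product of two functions to $\partial^{\n}\big(\exp[g]\exp[h]\big)$ and evaluate at $\x=0$, recognizing via the defining formula \eqref{eq:formdefmom} that $\partial^{\n-\ba}\exp[h]\rvert_{\x=0}=\mu(\n-\ba,A)$:
\[
\mu(\n,t\bbI_M+A)=\sum_{\ba\le\n}\binom{\n}{\ba}\Big(\partial^{\ba}\exp[g]\big\rvert_{\x=0}\Big)\,\mu(\n-\ba,A).
\]

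Next I would evaluate the purely diagonal factor. Since $\exp[g]$ factorizes, $\partial^{\ba}\exp[g]\rvert_{\x=0}=\prod_{i=1}^M\big(\partial_i^{a_i}\exp[\tfrac t2 x_i^2]\big)\rvert_{x_i=0}$, and from the one-variable series $\exp[\tfrac t2 x^2]=\sum_{m\ge0}\frac{t^m}{2^m m!}x^{2m}$ this vanishes unless every $a_i$ is even, say $a_i=2m_i$, in which case it equals $\prod_{i=1}^M\frac{(2m_i)!}{2^{m_i}m_i!}t^{m_i}$. Hence only multi-indices $\ba=2\m$ with $2\m\le\n$ survive; combining with the identity $\binom{n_i}{2m_i}\tfrac{(2m_i)!}{2^{m_i}m_i!}=\tfrac{n_i!}{m_i!\,2^{m_i}(n_i-2m_i)!}$ and collecting $t^{|\m|}$ gives
\[
\mu(\n,t\bbI_M+A)=\sum_{\m:\,2\m\le\n}t^{|\m|}\Big(\prod_{i=1}^M\frac{n_i!}{m_i!\,2^{m_i}(n_i-2m_i)!}\Big)\mu(\n-2\m,A).
\]
Finally I would re-index: each $\m$ with $|\m|=k$ and $2\m\le\n$ corresponds bijectively to the unique non-decreasing tuple $(j_1,\dots,j_k)\in\Fcal_k(\n)$ in which the value $i$ occurs $m_i(j_1,\dots,j_k)=m_i$ times, so $2\m=2\sum_{l=1}^k\be_{j_l}$; grouping by $k$, peeling off the $k=0$ term (which is $\mu(\n,A)$, the empty product being $1$), and noting that $k$ ranges only up to $|\n|/2$ because $2|\m|\le|\n|$, yields exactly Eq.~\eqref{eq:mubdbfor} (with $d_i=n_i$).

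I do not anticipate a genuine obstacle: the argument is a direct computation once the exponential is split multiplicatively. The only points requiring care are the bookkeeping that reconciles the three combinatorial factors ($\binom{\n}{\ba}$, the one-dimensional Gaussian moment $\tfrac{(2m_i)!}{2^{m_i}m_i!}$, and the target expression $\tfrac{d_i!}{m_i!\,2^{m_i}(d_i-2m_i)!}$), and the fact that passing from a sum over multi-indices $\m$ to a sum over sorted tuples $(j_1,\dots,j_k)$ is a plain bijection, so that no extra multiplicity factor is introduced. It is also worth recording that odd moments of a centered Gaussian vanish, so $\mu(\n-2\m,A)=0$ whenever $|\n|-2|\m|$ is odd — but since $|\n|$ is assumed even this never happens and no term is lost.
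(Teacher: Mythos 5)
Your proof is correct, and it takes a genuinely different route from the paper's. The paper derives \eqref{eq:mubdbfor} from its multigraph expansion \eqref{eq:expformuB}: it applies the multilinear Leibniz formula to the power $\big(\tfrac{1}{2}\x^\top(t\bbI_M+A)\x\big)^{|\n|/2}$, groups the decompositions $\n=\sum_i\ba_i$ (with $|\ba_i|=2$) according to the associated multigraph, isolates the $k$ self-loop factors $\ba_l=2\be_{j_l}$ each contributing $\partial^{\ba_l}f=t$, and then justifies the combinatorial prefactor by a careful multiplicity count of decompositions (including the final coordinate-by-coordinate verification of the factor $\frac{n_i!}{m_i!\,2^{m_i}(n_i-2m_i)!}$). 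You instead split the exponential multiplicatively as $\exp[g]\exp[h]$ with $g(\x)=\tfrac{t}{2}\sum_i x_i^2$, apply the \emph{binary} multi-index Leibniz rule, and dispose of the diagonal factor in closed form via the one-dimensional Gaussian moment coefficients $\frac{(2m_i)!}{2^{m_i}m_i!}$, after which only the elementary identity $\binom{n_i}{2m_i}\frac{(2m_i)!}{2^{m_i}m_i!}=\frac{n_i!}{m_i!\,2^{m_i}(n_i-2m_i)!}$ and a bijective re-indexing from multi-indices $\m$ to sorted tuples $(j_1,\dots,j_k)\in\Fcal_k(\n)$ are needed. Your version is shorter and avoids the multigraph multiplicity bookkeeping entirely; the paper's version has the advantage of staying inside the combinatorial framework it sets up for Lemma~\ref{lem:formudA}, which it reuses elsewhere. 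One cosmetic point you correctly flag: the statement's $d_i$ should read $n_i$, consistent with the paper's own proof.
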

\begin{proof}
    We consider the formula~\eqref{eq:expformuB}.  Suppose we have $\{\ba_1,\ldots,\ba_{|\n|/2}\}$ satisfying:  (i) $|\ba_l|=2$ for each~$l$, and, (ii) $\sum_{l=1}^{|\n|/2}\ba_l=\n$.  These terms define $G=G(\ba_1,\ldots,\ba_{|\n|/2})$.  In how may different ways  can we represent $\n$ corresponding to $G(\ba_1,\ldots,\ba_{|\n|/2})$?  We can do it by permuting the factors $\ba_1,\ldots,\ba_{|\n|/2}$.  As in the proof of Lemma~\ref{lem:formudA} it is
    \[
        \frac{(|\n|/2)!}{\prod_{1\le p<q\le M} c_{pq}(G(\ba_1,\ldots,\ba_{|\n|/2}))!\prod_{p=1}^M (c_{pp}(G(\ba_1,\ldots,\ba_{|\n|/2}))/2)!}.
    \]
    The entry $ c_{pq}(G(\ba_1,\ldots,\ba_{|\n|/2}))$ stands for the number of times the edge $\{p,q\}$ appears.  The number of selfloops $(p,p)$ is $c_{pp}(G(\ba_1,\ldots,\ba_{|\n|/2}))/2)$.  Dividing by $(|\n|/2)!$ we see that the contribution of $G(\ba_1,\ldots,\ba_{|\n|/2})$ is
    \begin{eqnarray*}
        \frac{1}{\prod_{1\le p<q\le M} c_{pq}(G(\ba_1,\ldots,\ba_{|\n|/2}))!\prod_{p=1}^M (c_{pp}(G(\ba_1,\ldots,\ba_{|\n|/2}))/2)!}\\
        \times{\n\choose \ba_1,\ba_2,\ldots,\ba_{|\n|/2}}\prod_{i=1}^{|\n|/2} \partial^{\ba_i}(1/2 (\x^\top B\x)).
    \end{eqnarray*}
    Let $k$ be the number of terms in $\{\ba_1,\ldots,\ba_{|\n|/2}\}$ of the form $2\be_i$.  The contribution of all terms $\{\ba_1,\ldots,\ba_{|\n|/2}\}$ for which $k=0$ is $\mu(\n,A)$.  Let us assume that $k\in [|\n|/2]$.  Without loss of generality we can assume that $\ba_l=2\be_{j_l}$ for $l\in[k]$.  So $\ba_l=\be_{p(l)} +\be_{q(l)}$, $p(l)<q(l)$ for $l>k$.  Clearly, $\partial^{\ba_l} f=t$ for $l\in[k]$.  Hence $\prod_{l=1}^k \partial^{\ba_l} f=t^k$.  The sum of all
    \[
    \frac{1}{\prod_{1\le p<q\le M} c_{pq}(G(\ba_{k+1},\ldots,\ba_{|\n|/2}))!}
    {\n-2\sum_{l=1}^k \be_{j_l}\choose \ba_{k+1},\ldots,\ba_{|\n|/2}}\prod_{l=k+1}^{|\n|/2} \partial^{\ba_l} f,
    \]
    where $\ba_l=\be_{p(l)} +\be_{q(l)}$ for $l>k$, and $\n-2\sum_{l=1}^k \be_{j_l}=\sum_{l=k+1}^{|\n|/2} \ba_l$ is exactly $\mu(\n-2\sum_{l=1}^k \be_{j_l},A)$.  It is left to justify the coefficient $\prod_{i=1}^M\frac{n_i!}{m_i(j_1,\ldots,j_k)!2^{m_i(j_1,\ldots,j_k)}(n_i-2m_i(j_1,\ldots,j_k))!}$ in front of $\mu(\n-2\sum_{l=1}^k \be_{j_l},A)$.
    This comes from the equality
    \begin{align*}
    &\frac{1}{\prod_{1\le p<q\le M} c_{pq}(G(\ba_{1},\ldots,\ba_{|\n|/2}))!\prod_{p=1}^M (c_{pp}(G(\ba_1,\ldots,\ba_{|\n|/2}))/2)!}\\
    &\times{\n\choose \ba_1,\ba_2,\ldots,\ba_{|\n|/2}}\\
    &=
    \Big(\prod_{i=1}^M\frac{n_i!}{m_i(j_1,\ldots,j_k)!2^{m_i(j_1,\ldots,j_k)}(n_i-2m_i(j_1,\ldots,j_k))!}\Big)\frac{1}{\prod_{1\le p<q\le M} c_{pq}(G(\ba_{k+1},\ldots,\ba_{|\n|/2}))!}\\
    &\times{\n-2\sum_{l=1}^k \be_{j_l}\choose \ba_{k+1},\ldots,\ba_{|\n|/2}}.
    \end{align*}
    We need to see this equality on the level of the derivative with respect to the variable $x_i$.  Let $m_i=m_i(j_1,\ldots,j_k)$.  If $m_i=0$, then $(\n-2\sum_{j=1}^k \be{_{j_l}})_i=n_i$ for the coordinate $i$ we have obvious equality. Assume now that $m_i\ge 1$. Then on the left-hand side of the above equality we the factor $\frac{n_i!}{(2!)^{m_i}m_i!}$.  The factor $m_i!$ is equal to $(c_{ii}(G)/2)!$.  On the right-hand side we have the factors:
    \[
    \frac{n_i!}{m_i!(2!)^{m_i}(n_i-2m_i)!}(n_i-2m_i)!.
    \]
\end{proof}


We are now ready to give the proof for the main result of this section.
\begin{proof}[Proof of Theorem~\ref{thm:mainthm}.]
    We write the symmetrized sums of  $\mu(\n,A)$ as
    \begin{equation}\label{eq:symmu}
      \mu_{\mathrm{sym}}(\n,A)=\sum_{\s\in\mathfrak{S}_n}\mu(\n,P^\top(\s)AP(\s)).
    \end{equation}
    Now assume $\mu_{\mathrm{sym}}(\n,A_1)=\mu_{\mathrm{sym}}(\n,A_2),\forall\n$ for two isospectral graphs $A_1,A_2$. Then from~\eqref{eq:mubdbfor} we get $\mu_{\mathrm{sym}}(\n,B_1)=\mu_{\mathrm{sym}}(\n,B_2),\forall\n$. Finally, Theorem~\ref{thm:isogausdist} yields that $B_1$ and $B_2$ are permutationally similar and so are $A_1$ and $A_2$.
\end{proof}
\begin{exa}
  Let us illustrate Lemma~\ref{lem:momrect} and Theorem~\ref{thm:mainthm} on an example of a graph whose adjacency matrix is $A\oplus A$ of size $2M=6$ for orbit of $\n=(2,3,3)$. Since $|\n|=8$ it is clear that only the fourth power ($|n|/2=4$)  of $(\x^{(M)})^\top A(t)\x^{(M)}$ survives an encounter with the partial derivatives. So, following Lemma~\ref{lem:momrect}, we write
  \begin{subequations}
  \begin{align}\label{eq:ExampleMainThm1}
    {1\over2^44!}{\partial^8\over\partial x_1^2\partial x_2^3\partial x_3^3}((\x^{(M)})^\top A(t)\x^{(M)})^4
     = {1\over2^44!}{\partial^8\over\partial x_1^2\partial x_2^3\partial x_3^3}\sum_{k=0}^4\binom{4}{k} t^k|\x^{(M)}|^{2k}((\x^{(M)})^\top A\x^{(M)})^{4-k} \\
     \stackrel{\x=0}{\mapsto} 36a_{12}a_{13}a^2_{23}+6ta_{23}(3(a_{12}^2+a_{13}^2)+a_{23}^2)+18t^2a_{12}a_{13}+9t^3a_{23}.\label{eq:ExampleMainThm2}
  \end{align}
  \end{subequations}
  Each $t^k$ coefficient corresponds to a polynomial of the matrix entries in the exponential of some $\mu(\n-2\sum_{l=1}^k \be_{j_l},A)\equiv\mu(\m,A)$. In accordance with Eq.~\eqref{eq:mubdbfor} we get, for example for $t^2$, $\m=(2,1,1)$ since
  \[
  {1\over16}{\partial^4\over\partial x_1^2\partial x_2\partial x_3}\big((\x^{(M)})^\top A\x^{(M)}\big)^2\Big\rvert_{\x=0}=a_{12}a_{13}.
  \]
  As the final step, we symmetrize the orbit represented by $\n$ (in this case  the orbit size equals $3$) which causes a permutation of indices in~\eqref{eq:ExampleMainThm2}.
\end{exa}

It is advantageous to stratify the measurement events of an $M$-mode interferometer according to the total photon number $|\n|\geq0$. Once $M$ and $|\n|$ are fixed, all possible detection events can be split into the orbits $O_i$ (equivalence classes under permutation) that partition the set of all events for a fixed $M$ and $|\n|$. We choose the class representative to be a detection event $\n=(n_j)_{j=1}^M$ such that $n_i\leq n_j,\forall i,j$ and denote by $G_{\n}$ its stabilizer. Clearly $G_{\n}\subset G=\mathfrak{S}_M$ and the orbits are generated by the left action of the coset $G/G_{\n}$. In order to find the orbits with a great number of detection events (presumably the most likely ones) we count the orbit size according to $|O_{\n}|=|\mathfrak{S}_M|/|G_{\n}|=\binom{M}{k_0,k_1,\dots,k_\ell}$, where $k_j$ are the multiplicities of the $j$-th photon events satisfying $\sum_{j=0}^{\ell}jk_j=|\n|$ and $\ell\leq M$.  The probability of measurement of a given pattern  $(n_1,\dots,n_M)$ is given by $p(\n)$ in Eq.~\eqref{eq:ProbMixed} (or, more precisely, by its doubled version, Eq.~\eqref{eq:probDoubled}, see Lemma~\ref{lem:hafsquared} and the remark on page~\pageref{rem:doubling}), where $\sum_{i=1}^Mn_i=|\n|$. Hence the  probability of orbit $O_{\n}$ for a graph $G$ reads
\begin{equation}\label{eq:probOrbit}
  p_G(O_{\n})={1\over\sqrt{\det{\s_{Q,G}}}}{c^{|\n|}\over\bs{n}!}
  \sum_{\n\in O_{\n}}^{|O_{\n}|}\haf^{\,2}{[A\sox\bbJ_{|\n|}]}.
\end{equation}
How does the number of orbits increase with $|\n|$? This is equivalent to the question of integer $|\n|$ partition, that is, in how many ways one can write
\begin{equation}\label{eq:partition}
\la_1+\dots+\la_m=|\n|,
\end{equation}
where the order of the sum plays no role and the number of parts is $1\leq m\leq M$. We naturally order the parts such that $\la_i\leq\la_{i+1}, \forall i$. Suppose $M\geq|\n|$ first. No closed formula is known but the generating function for integer partition provides the number of orbits for a given $|\n|$. Also, very precise estimates have been uncovered and the growth of the number of orbits is exponential in $|\n|$. For $M<|\n|$, not all number partitions can be realized and the counting is given by the generating function for the number of integer partitions into exactly $M$ parts. Note that we only partition even numbers in this paper, since GBS assigns zero probability for odd $|\n|$.
\begin{cor}[of Lemma~\ref{lem:nTOT}]\label{cor:ofnTOTlemma}
  $p_G(O_{\n})=0$ whenever  $p(\n)=0$ for the orbit representative $\n$.
\end{cor}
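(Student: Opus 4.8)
The plan is to reduce the statement to the fact that the hypothesis triggering Lemma~\ref{lem:nTOT} is constant on a permutation orbit. First I would recall the expression~\eqref{eq:probOrbit} for the orbit probability: up to the prefactor $c^{|\n|}/(\n!\sqrt{\det\s_{Q,G}})$, which is strictly positive since $c>0$ and $\s_{Q,G}\succ 0$, we have $p_G(O_{\n})\propto\sum_{\n'\in O_{\n}}\haf^{\,2}[A\sox\bbJ_{|\n'|}]$. Because $A$ is a $0/1$ adjacency matrix, every matrix $A\sox\bbJ_{|\n'|}$ has nonnegative entries, so each summand is nonnegative, and by the individual-event formula~\eqref{eq:probDoubled} each summand vanishes exactly when $p(\n')=0$. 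Hence $p_G(O_{\n})=0$ if and only if $p(\n')=0$ for \emph{every} $\n'\in O_{\n}$.

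Next I would invoke Lemma~\ref{lem:nTOT} once more, now applied to every element of the orbit: the condition ``there is some $i$ with $n_i>|\n|/2$'' depends only on the multiset $\{n_1,\dots,n_M\}$ and on the total $|\n|=\sum_i n_i$. Both are invariant under permutations of the entries, and by definition every $\n'\in O_{\n}$ is such a permutation of the representative $\n$, so $|\n'|=|\n|$ and $\n'$ has the same multiset of components. Therefore, if the representative satisfies $p(\n)=0$ via Lemma~\ref{lem:nTOT} (i.e. some $n_i>|\n|/2$), then every $\n'$ in the orbit satisfies the same inequality, and a second application of Lemma~\ref{lem:nTOT} gives $p(\n')=0$ for all $\n'\in O_{\n}$. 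Combined with the first paragraph, this forces $p_G(O_{\n})=0$.

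I do not expect a real obstacle here; the only points needing care are (i) making explicit that the prefactor in~\eqref{eq:probOrbit} is nonzero and the summands nonnegative, so that vanishing of the orbit probability is equivalent to term-by-term vanishing — this is precisely what lets one feed each $\n'\in O_{\n}$ back into Lemma~\ref{lem:nTOT} — and (ii) observing that it is immaterial which element of $O_{\n}$ is chosen as the representative, since the triggering condition of Lemma~\ref{lem:nTOT} is constant on the orbit, so the statement is well posed.
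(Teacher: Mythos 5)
Your proposal is correct and is exactly the argument the paper intends (the corollary is stated without an explicit proof): the orbit probability is a sum of nonnegative squared-hafnian terms, and the triggering condition of Lemma~\ref{lem:nTOT} (some $n_i>|\n|/2$) depends only on the multiset of entries and on $|\n|$, hence holds for every element of the orbit, killing every summand. You are also right to note that the hypothesis must be read as ``$p(\n)=0$ via Lemma~\ref{lem:nTOT}'' rather than for an arbitrary reason, since for a generic vanishing hafnian the conclusion would not follow.
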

Curiously, if we try to coarse-grain the probability distribution further and introduce the \emph{partition probability}
\begin{equation}\label{eq:probCoarsegrained}
  p_G(|\n|)\df\sum_{\n~\mathrm{s.t.\ }|\n|\mathrm{\ fixed}}p(\n)
  ={1\over\sqrt{\det{\s_{Q,G}}}}
  \sum_{n_1+\dots+n_M=|\n|}{1\over\n!}\sum_{\n\in O_{\n}}^{|O_{\n}|}\haf^{\,2}{[A\sox\bbJ_{|\n|}]},
\end{equation}
where the first sum on the RHS is over the partitions of $|\n|$ and the second sum over the orbit elements. We find
\begin{lem}\label{lem:probCoarseGr}
  $p_{G_1}(|\n|)=p_{G_2}(|\n|)$ for all $|\n|$ whenever the graphs $G_1,G_2$ are isospectral.
\end{lem}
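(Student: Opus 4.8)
The plan is to split $p_G(|\n|)$ into a graph-dependent prefactor — which Lemma~\ref{lem:determinants} already certifies to be a spectral invariant — and a remaining ``squared-hafnian'' sum, and then show that this sum, once packaged into a one-variable generating function, collapses to a determinant that manifestly depends only on the eigenvalues of the adjacency matrix. Concretely, encode each $G_i$ as $R_i=c(A_i\op A_i)$ (the construction of Remark~\ref{rem:doubling} with $k=0$, which is GBS-encodable for $0<c<1/\|A_i\|_2$) and sum Eq.~\eqref{eq:probDoubled} over all detection events with $|\n|=N$ fixed:
\[
p_G(N)=\frac{c^{N}}{\sqrt{\det\s_{Q,G}}}\,S_G(N),\qquad
S_G(N)\df\sum_{n_1+\dots+n_M=N}\frac{1}{\n!}\,\haf^{\,2}[A\sox\bbJ_{|\n|}].
\]
Since $c$ is a fixed constant and $\det\s_{Q,G_1}=\det\s_{Q,G_2}$ for isospectral graphs by Lemma~\ref{lem:determinants}, it suffices to prove $S_{G_1}(N)=S_{G_2}(N)$ for every $N$.

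Next I would recall that $\haf[A\sox\bbJ_{|\n|}]=\mu(\n,A)$ for the zero-diagonal adjacency matrix $A$ (this is exactly how Eq.~\eqref{eq:probDoubled} was obtained), i.e.\ $\mu(\n,A)/\n!$ is the $\x^\n$-coefficient of $f(\x)\df e^{\frac12\x^\top A\x}$. Hence $S_G(N)=\sum_{|\n|=N}\n!\,\big([\x^\n]f\big)^2$, and introducing an auxiliary variable $z$ one gets
\[
F_G(z)\df\sum_{N\ge 0}S_G(N)\,z^{N}
=\sum_{\n\in\bbZ_+^M}\n!\,\big([\x^\n]f\big)^2 z^{|\n|}
=\Big[e^{\frac12\,\partial_\x^\top A\,\partial_\x}\,e^{\frac{z^2}{2}\,\x^\top A\,\x}\Big]_{\x=0},
\]
where the last step uses the elementary identity $\sum_{\n}\n!\,([\x^\n]g)([\x^\n]h)=[g(\partial_\x)h(\x)]_{\x=0}$ with $g=f$ and $h(\x)=f(z\x)=e^{\frac{z^2}{2}\x^\top A\x}$, and $\partial_\x^\top A\,\partial_\x$ abbreviates the operator $\sum_{i,j}a_{ij}\partial_{x_i}\partial_{x_j}$. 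Odd values of $N$ automatically give $S_G(N)=0$, consistent with GBS assigning zero probability to odd $|\n|$.

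Then I would evaluate the right-hand side. For $A\succ0$ and small $|z|$, writing $e^{\frac12\partial_\x^\top A\partial_\x}$ via its Gaussian-convolution (heat-semigroup) representation turns the expression into an absolutely convergent Gaussian integral that equals $\det(\bbI_M-z^2A^2)^{-1/2}$; since, coefficient by coefficient in $z$, both sides are polynomials in the entries of $A$ agreeing on the open cone $\{A\succ0\}$, the identity $F_G(z)=\det(\bbI_M-z^2A^2)^{-1/2}$ holds for every real symmetric $A$, in particular for adjacency matrices. Finally, $\det(\bbI_M-z^2A^2)=\prod_{i=1}^M\big(1-z^2\lambda_i(A)^2\big)$ depends only on the spectrum $\{\lambda_i(A)\}$, so isospectral $G_1,G_2$ satisfy $F_{G_1}=F_{G_2}$ as power series, whence $S_{G_1}(N)=S_{G_2}(N)$ for all $N$, and by the first display, $p_{G_1}(|\n|)=p_{G_2}(|\n|)$ for all $|\n|$.

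The only genuinely delicate point is the evaluation step: the bookkeeping that produces the generating function is routine, but turning the differential-operator expression into the clean determinant — and especially justifying the continuation from positive-definite $A$ to an arbitrary symmetric (adjacency) matrix — needs care; alternatively one can prove this operator identity combinatorially by Wick's theorem (counting perfect matchings), which keeps the argument entirely inside the paper's hafnian formalism at the cost of more computation. As an informal cross-check, $p_G(|\n|)$ is nothing but the total-photon-number distribution of the GBS state, which the photon-number-preserving interferometer leaves invariant and which is therefore determined by the single-mode squeezing parameters — i.e.\ by the singular values $\{c\,|\lambda_i(A)|\}$ of the encoded matrix, once again a function of the spectrum alone.
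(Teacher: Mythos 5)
Your proof is correct, and it takes a genuinely different route from the paper's. The paper argues physically: it uses the circuit decomposition of the encoded Gaussian state into $M$ single-mode squeezers with $\tanh r_k=c\lambda_k(A)$ followed by a photon-number-preserving interferometer (Eq.~\eqref{eq:Sproduct}), so the total-photon-number distribution is read off from the squeezers alone and hence depends only on the spectrum. You instead stay entirely inside the hafnian/moment formalism: after peeling off the prefactor via Lemma~\ref{lem:determinants}, you encode the fixed-$|\n|$ sums in a generating function, use the pairing $\sum_{\n}\n!\,([\x^{\n}]g)([\x^{\n}]h)=[g(\partial_{\x})h]_{\x=0}$ (legitimate here since each $z$-coefficient involves only finitely many graded pieces), and evaluate the resulting operator expression to the closed form $F_G(z)=\det(\bbI_M-z^2A^2)^{-1/2}$, which is manifestly spectral; your polynomial-identity continuation from $A\succ0$ to arbitrary symmetric $A$ correctly handles the fact that adjacency matrices are neither positive definite nor admissible covariance matrices for the Gaussian-convolution step. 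The two arguments are consistent — combining your formula with $\det\s_{Q,G}^{-1/2}=\det(\bbI_M-c^2A^2)^{1/2}$ reproduces exactly the product of single-mode squeezed-vacuum photon-number distributions that the paper's proof invokes, and your final ``informal cross-check'' is in fact the paper's entire proof. What your route buys is an explicit, self-contained closed form for $\sum_{|\n|=N}\haf^2[A\sox\bbJ_{|\n|}]/\n!$ that requires no appeal to the circuit decomposition of~\cite{bradler2017gaussian}; what the paper's route buys is brevity and the physical insight (interferometer invariance) that motivates the remark following the lemma about why the finer orbit probabilities $p_G(O_{\n})$ do \emph{not} share this degeneracy.
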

\begin{proof}
  Any undirected graph $G$ on $2M$ vertices can be encoded as a pure covariance matrix whose circuit decomposition consists of an array of $M$ single-mode squeezing transformation $S(r_k)$ ($0\leq k\leq M$) followed by an $M$-mode linear interferometer $U$~\cite{bradler2017gaussian}. For each $S(r_k)$ we find
  \begin{equation}
    S(r_k)\ket{0}={1\over\sqrt{\cosh{r_k}}}\sum_{n=0}^{\infty}{\sqrt{(2n)!}\over2^nn!}\tanh^n{r_k}\ket{2n}
  \end{equation}
  and so
  \begin{equation}\label{eq:Sproduct}
    \bigotimes_{k=1}^M S(r_k)\ket{0}=\sum_{\n/2=0}^{\infty}\sum_{i=1}^{\binom{|\n|/2+M-1}{|\n|/2}}\a_{i\n}(r_1,\dots,r_M)\ket{(\n,M)}_i,
  \end{equation}
   where $\ket{(\n,M)}$  carry all completely symmetric representations of $su(M)$ (each representation labeled by $\n/2$). Given $\la_k(A)$ and $0<c<1/\|A\|_2$ for $G$'s adjacency matrix $A$ (\cite{bradler2017gaussian}, see also Lemma~\ref{lem:posdefsigA}) we can write $c\la_k=\tanh{r_k}$. Therefore $\a_{i\n}(r_1,\dots,r_M)=\a_{i\n}(\la_1,\dots,\la_M)$. Since the interferometer $U$ preserves the number of particles $|\n|$ the partition probability $p_{G}(|\n|)$ is unaffected by it. Then
   $$
   p_{G}(|\n|)=\sum_{i=1}^{\binom{|\n|/2+M-1}{|\n|/2}}|\a_{i\n}(r_1,\dots,r_M)|^2
   =\sum_{i=1}^{\binom{|\n|/2+M-1}{|\n|/2}}|\a_{i\n}(\la_1,\dots,\la_M)|^2.
   $$
   However, the RHS is independent on the graph (depends only on $\la_k$ common for isospectral graphs) and the claim follows.
\end{proof}
\begin{rem}
  Note that the similar argument does not hold for the less coarse-grained probability $p_G(O_{\n})$ in Eq.~\eqref{eq:probOrbit} since the interferometer `mixes' the orbits.
\end{rem}
Even though the coarse-grained probability, Eq.~\eqref{eq:probCoarsegrained}, cannot be used to distinguish nonisomorphic graphs, not all hope is lost. Possible strategies and the closely related problem of scalability is discussed in Sec.~\ref{sec:discussion}.

\subsection{Modifying the results for $C=A\oplus A$ and beyond}\label{subsec:hierarchyTower}
 Given $A$ of even order consider $p(\n,C)$ in \eqref{eq:ProbMixedGBS} and $\mu(\n,C)$. If for the two graphs $A$ and $B$ we have the equalities for the symmetrized sums
 \[
 \sum_{\sigma\in \mathfrak{S}_n} p(\n_{\sigma},A\oplus A)=\sum_{\sigma\in \mathfrak{S}_n} p(\n_{\sigma},B\oplus B),
 \]
 what can we say? If instead of considering just the matrix $A$ we will consider the matrix $A\oplus A$ then we can conclude that our bigger graph is a disjoint  union of two isomorphic graphs.  So if the union of two isomorphic graphs is isomorphic to the union of another two isomorphic graphs, then the two graphs are also isomorphic. If we consider the functions  $\mu(\m,A\oplus A)$.  Note that $\m=(m_1,\ldots,m_{2M})=(\n,\n')$ where $\n=(n_1,\ldots,n_{M}), \n'=(n_{M+1},\ldots,n_{2M})$.  It now follows that
 \[
 p((\n,\n'), A\oplus A)=p(\n,A)p(\n',A).
 \]
 Hence, if $\gamma=(\x,\y,\ol\x,\ol\y)$ then
\begin{align*}
   & \exp{\big[{1\over2}\g^\top (C\oplus C)\gamma\big]} \\
   & =\exp{\big[{1\over2}(\x^{(M)})^\top A\x^{(M)}\big]}\exp{\big[{1\over2}(\y^{(M)})^\top A\y^{(M)}\big]}
         \exp{\big[{1\over2} (\ol\x^{(M)})^\top A\ol\x^{(M)}}\big]\exp{\big[{1\over2}(\ol\y^{(M)})^\top A\ol\y^{(M)}\big]}.
\end{align*}
Therefore, if we have equalities for the symmetrized sums, we get a whole hierarchy of necessary conditions by considering $A^{\oplus k}$ for $k>2$.


\subsection{Partition-averaged photon distribution as a necessary condition for graph isomorphism}\label{subsec:meanphoton}
The main result of this section will be a simpler necessary condition for isospectral graphs to be isomorphic.
\begin{defi}\label{def:partAvPhDistro}
  Let $A$ be the adjacency matrix of an $M$-vertex graph $G$ and $1\leq k\leq M$. Then we call the partition-averaged photon distribution of the $k$-th detector the  function
    \begin{equation}\label{eq:averPhNumberfiner}
      \lan{n_k}\ran_G
      \df\sum_{\n\in O_{\n}}^{|O_{\n}|}n_kp(\n)={1\over\sqrt{\det{\s_{Q,G}}}}
      {1\over\n!}\sum_{\n\in O_{\n}}^{|O_{\n}|}n_k\haf^{\,2}{[A\sox\bbJ_{|\n|}]}
    \end{equation}
    and its coarse-grained version reads
    \begin{equation}\label{eq:averPhNumber}
      \langle\langle{n_k}\rangle\rangle_G
      ={1\over\sqrt{\det{\s_{Q,G}}}}\sum_{n_1+\dots+n_M=|\n|}{1\over\n!}\sum_{\n\in O_{\n}}^{|O_{\n}|}n_k\haf^{\,2}{[A\sox\bbJ_{|\n|}]}.
    \end{equation}
\end{defi}
\begin{rem}
The complexities of coarse-grained probability $p_G(O_{\n})$ in Eq.~\eqref{eq:probOrbit},  of the partition-averaged photon distribution of the $k$-th detector \eqref{eq:averPhNumberfiner}, and its coarse-grained version  \eqref{eq:averPhNumber}  are  NP-hard, as we need to sum on the number of elements in $O_n$, which may be of order $M!$.
\end{rem}

\begin{thm}\label{thm:IsoGraphs}
  The partition-averaged photon distributions introduced in Definition~\ref{def:partAvPhDistro} of two isomorphic graphs are identical up to a permutation of output modes which can be verified in polynomial time in $M$.
\end{thm}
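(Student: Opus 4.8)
The plan is to prove the two assertions separately: (i) that for isomorphic $G_1,G_2$ the whole tuple of partition-averaged photon distributions of $G_1$ is a \emph{fixed} mode-permutation of that of $G_2$, the permutation being the graph isomorphism itself; and (ii) that checking ``equal up to a permutation of the modes'' is a polynomial-time task, which then finishes the statement.

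First I would fix $\pi\in\mathfrak{S}_M$ with $A_1=P(\pi)^\top A_2 P(\pi)$, which exists since $G_1\cong G_2$. Isomorphic graphs are in particular isospectral, so Lemma~\ref{lem:determinants} gives $\det\s_{Q,G_1}=\det\s_{Q,G_2}$, so the scalar prefactor in Definition~\ref{def:partAvPhDistro} already matches for the two graphs. The key algebraic input is Eq.~\eqref{eq:hafperm} (equivalently Lemma~\ref{lem:hafinv}): for every detection event $\n$ there is a permutation matrix $\hat P_\pi$ with $A_2\sox\bbJ_{|\pi(\n)|}=\hat P_\pi\,(A_1\sox\bbJ_{|\n|})\,\hat P_\pi^\top$, and since the hafnian is invariant under permutational similarity --- immediate from Definition~\ref{def:haf}, as such a conjugation merely relabels the perfect matchings --- one obtains $\haf^{\,2}[A_1\sox\bbJ_{|\n|}]=\haf^{\,2}[A_2\sox\bbJ_{|\pi(\n)|}]$ for all $\n$.

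Next I would substitute this identity into $\lan n_k\ran_{G_1}$ restricted to a single orbit $O$. Since $O$ is invariant under permuting entries, the reindexing $\m=\pi(\n)$ is a bijection of $O$ onto itself; under it the factor $\n!$ is unchanged (it depends only on the multiset of entries, hence only on $O$) and the weight transforms as $n_k=m_{\pi(k)}$. Combining this with $\det\s_{Q,G_1}=\det\s_{Q,G_2}$ yields $\lan n_k\ran_{G_1}=\lan n_{\pi(k)}\ran_{G_2}$ for each orbit, and summing over all orbits with a fixed $|\n|$ gives $\langle\langle n_k\rangle\rangle_{G_1}=\langle\langle n_{\pi(k)}\rangle\rangle_{G_2}$. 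Thus both the fine-grained and the coarse-grained distributions of $G_1$ are obtained from those of $G_2$ by the single output-mode permutation $\pi$ (up to replacing $\pi$ by $\pi^{-1}$, depending on the action convention). For the complexity claim, I would note that verifying that two $M$-component distributions are permutations of one another only requires sorting their entries and comparing, which costs $O(M\log M)$.

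The one step that needs genuine care --- and the place I would treat as the main (if modest) obstacle --- is the bookkeeping around Lemma~\ref{lem:hafinv}: one must check that the induced permutation $\hat P_\pi$ on the blown-up index set $\{(i,\ell):1\le\ell\le n_i\}$ is compatible with the relabelling $\n\mapsto\pi(\n)$ so that the per-mode weight $n_k$ is transported precisely to the coordinate $m_{\pi(k)}$ and not to some other coordinate, and that this $\pi$ can be chosen independently of $\n$ and of the orbit. Everything else is a routine change of variables using Lemma~\ref{lem:determinants} and the permutation-invariance of the hafnian.
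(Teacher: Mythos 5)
Your proposal is correct and follows essentially the same route as the paper: both rest on Lemma~\ref{lem:determinants} for the prefactor, on Lemma~\ref{lem:hafinv} (permutational similarity of $A_1\sox\bbJ_{|\n|}$ and $A_2\sox\bbJ_{|\pi(\n)|}$, hence equality of the hafnians) to get the permuted hafnian tuple, and on a sort-and-compare argument for the polynomial-time claim. The only cosmetic difference is that you carry out the orbit reindexing $\m=\pi(\n)$ directly in the sum defining $\lan n_k\ran_G$, whereas the paper packages the same bookkeeping into the matrix identity $\mathsf{N}\,\pi(\mathsf{haf}_{G_1})=P_\pi(\mathsf{N})\,\mathsf{haf}_{G_1}$ after first establishing Lemma~\ref{lem:permProbDistro}.
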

\begin{proof}
Given  partition-averaged photon distribution of a graph $G$ we replace it by a distribution  $\lan{\tilde n_k}\ran_G, k\in[n]$, where $0\le \tilde n_1\le \cdots \le\tilde n_M$.  Clearly, this rearrangement can be done in $O(M^2)$, actually $O(M \log M)$, time.  Two isomorphic graphs will have the same rearranged partition-averaged photon distribution.
\end{proof}
\begin{lem}\label{lem:permProbDistro}
  Let $G_A$ and  $G_{\tilde A}$ be isomorphic graphs. Then the output probability distribution from GBS with encoded graphs is related by a permutation.
\end{lem}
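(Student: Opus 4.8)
The plan is to track how an isomorphism $A_1 = P^\top A_2 P$ propagates through all the steps of the GBS pipeline, and to show that at the level of the output photon-number distribution it induces exactly the corresponding permutation $\pi$ of the $M$ output modes. First I would recall from Remark~\ref{rem:doubling} that each graph with adjacency matrix $A$ is encoded by doubling, $A\mapsto C = A\oplus A$, and then rescaling/shifting to $R = c(A\oplus A + k\bbI_{2M})$ to make it GBS-encodable; by Lemma~\ref{lem:hafsquared} the hafnian (and hence the probabilities, up to the common prefactor) are insensitive to $k$, and by Lemma~\ref{lem:determinants} the determinant prefactor $\sqrt{\det \s_{Q,G}}$ is the same for the two isospectral graphs, so it suffices to compare the hafnian expressions in Eq.~\eqref{eq:probDoubled}.

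The core of the argument is then essentially an application of Lemma~\ref{lem:hafinv}. Starting from $A_1 = P^\top A_2 P$ with $P = P(\pi)$, conjugation by a permutation matrix commutes with the doubling operation, i.e. $A_1\oplus A_1 = (P\oplus P)^\top (A_2\oplus A_2)(P\oplus P)$, so the encoded matrices $R_1$ and $R_2$ are themselves permutationally similar. Next I would invoke Lemma~\ref{lem:hafinv}: the reduced Kronecker product intertwines permutation of the underlying index set with a (larger) permutation of the expanded index set, so for every detection event $\bs n$ there is a permutation matrix $\hat P_\pi$ with $A_1\sox\bbJ_{|\bs n|} = \hat P_\pi^\top (A_2\sox\bbJ_{|\bs m|})\hat P_\pi$ where $\bs m = \pi(\bs n)$. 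Since the hafnian is invariant under permutational similarity (this follows directly from Def.~\ref{def:haf}, as relabelling the index set merely reindexes the sum over perfect matchings), we get $\haf[A_1\sox\bbJ_{|\bs n|}] = \haf[A_2\sox\bbJ_{|\pi(\bs n)|}]$ for all $\bs n$, hence $p_{G_1}(\bs n) = p_{G_2}(\pi(\bs n))$ for all $\bs n$ via Eq.~\eqref{eq:probDoubled} together with Lemma~\ref{lem:determinants}. This is exactly the statement that the two output distributions are related by the mode permutation $\pi$.

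The main obstacle, and the point that needs the most care, is making precise that the $\bs n$-dependent expansion permutation $\hat P_\pi$ from Lemma~\ref{lem:hafinv} is \emph{compatible} with a single fixed mode permutation $\pi$ for all $\bs n$ simultaneously — in other words, that the bijection $\bs n \mapsto \pi(\bs n)$ on detection events is induced by one and the same $\pi\in\mathfrak S_M$ independent of $|\bs n|$ (cf. the ``furthermore'' clause in Prop.~\ref{prop:GIwitness}). This is true because $\hat P_\pi$ is built canonically from $\pi$ by the construction in the proof of Lemma~\ref{lem:hafinv} (reorder the blocks $(\a_{i1},\dots,\a_{in_i})\mapsto(\a_{\pi(i)1},\dots)$ and then relabel $n_{\pi(i)} = m_{\pi(i)}$), so the same $\pi$ works for every $\bs n$; I would spell this out explicitly rather than leaving it implicit. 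The remaining bookkeeping — verifying that the doubled/shifted encoding $R_i$ does not spoil the intertwining, and that the prefactors match — is routine given Lemmas~\ref{lem:hafsquared}, \ref{lem:linearity} and~\ref{lem:determinants}.
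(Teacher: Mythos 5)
Your proposal is correct and follows essentially the same route as the paper: both arguments reduce the claim to the permutational similarity of $A_1\sox\bbJ_{|\bs n|}$ and $A_2\sox\bbJ_{|\pi(\bs n)|}$ via Lemma~\ref{lem:hafinv}, the invariance of the hafnian under simultaneous row/column permutation, and Lemma~\ref{lem:determinants} for the prefactor. The only difference is presentational: the paper first passes through the derivative representation of Eq.~\eqref{eq:ProbMixedGBS} and treats pure and mixed events separately before invoking Lemma~\ref{lem:hafinv}, whereas you apply Eq.~\eqref{eq:probDoubled} and Lemma~\ref{lem:hafinv} directly to all events at once, which also makes your observation that a single $\pi$ serves every $\bs n$ (the point the paper records in its closing sentence) more transparent.
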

\begin{proof}
  Consider pure events first where we present two proofs. A graph $\tilde{A}$ is isomorphic to $A$ iff there exists a permutation $\pi$ such that $\tilde{A}=P_\pi^\top AP_\pi$. Ignoring the prefactor ${c^{|\n|}}/({\n!\sqrt{\det{\s_{Q}}}})$ in Eq.~\eqref{eq:probDoubled} (it is identical for $A$ and $\tilde{A}$ -- see Lemma~\eqref{lem:determinants}), it follows that $\tilde{A}^{\op2}\ox\bbJ_{2|\n|}$ is also a permutation of $A^{\op2}\ox\bbJ_{2|\n|}$ since $\sox\equiv\ox$ for pure detection events (see Remark below Def.~\eqref{def:sox}). Hence $\haf{[\tilde{A}^{\op2}\ox\bbJ_{2|\n|}]}=\haf{[A^{\op2}\ox\bbJ_{2|\n|}]}$ and the probability expressions are invariant.

  We prove the same statement by using Eq.~\eqref{eq:ProbMixedGBS} where $C=c(A\op A),\tilde{C}=c(\tilde{A}\op\tilde{A})$ and we can ignore $c$ here by setting $c=1$. We introduce $P\df P_\pi\op P_\pi$ and write
  \begin{equation}
    \partial^{|\n|}_{\bs{\b},\ol{\bs\b}}e^{{1\over2}\bs{\g}^\top\tilde{A}^{\op2}\bs{\g}}
    =\partial^{|\n|}_{\bs{\b},\ol{\bs\b}}e^{{1\over2}(P\bs{\g})^\top A^{\op2}(P\bs{\g})}
  \end{equation}
  But that implies that the probability of a pure event remains the same since $P\bs{\b}$ by definition merely relabels the output modes and
  the partial derivatives do not care:
  \begin{equation}
    \partial^{|\n|}_{\bs{\b},\ol{\bs\b}}e^{{1\over2}(P\bs{\g})^\top A^{\op2}(P\bs{\g})}
    =\partial^{|\n|}_{\bs{\b},\ol{\bs\b}}e^{{1\over2}\bs{\g}^\top\tilde{A}^{\op2}\bs{\g}}.
  \end{equation}

  For  mixed detection events the situation is different. If one of the $n_i$'s in
  \[
  \partial^{|\n|}_{\bs{\b},\ol{\bs\b}}e^{{1\over2}(P\bs{\g})^\top A^{\op2}(P\bs{\g})}
  \]
  is different from the rest, the corresponding partial derivative breaks the symmetry and unlike the pure case one concludes that
  \begin{equation}
    \partial^{|\n|}_{\bs{\b},\ol{\bs\b}}e^{{1\over2}(P\bs{\g})^\top A^{\op2}(P\bs{\g})}
    \neq
    \partial^{|\n|}_{\bs{\b},\ol{\bs\b}}e^{{1\over2}\bs{\g}^\top{A}^{\op2}\bs{\g}}.
  \end{equation}
  However, if we permute the derivative variables (symbolically written as $\partial_{\b_i,\ol\b_i}\mapsto\partial_{(P_\pi\b_i),(P_\pi\ol\b_i)}$), we find the desired equality
  \begin{equation}\label{eq:mixedprobs1}
    \partial^{|\n|}_{P_\pi\bs{\b},P_\pi\ol{\bs\b}}e^{{1\over2}(P\bs{\g})^\top A^{\op2}(P\bs{\g})}
    =\partial^{|\n|}_{\bs{\b},\ol{\bs\b}}e^{{1\over2}\bs{\g}^\top{A}^{\op2}\bs{\g}}.
  \end{equation}
  Next, using map~\eqref{eq:betaToalphan}, we rewrite the both sides of the last equation as
  \begin{equation}\label{eq:mixedprobs2}
  \prod_{i=1}^{M}\partial_{(\hat{P}_\pi\a_i),(\hat{P}_\pi\ol\a_i)}e^{{1\over2}(\hat{P}\bs{\d})^\top(A^{\op2}\sox\bbJ_{2|\n|})(\hat{P}\bs{\d})}
    =\prod_{i=1}^{M}\partial_{\a_i,\ol\a_i}e^{{1\over2}\bs{\d}^\top(A^{\op2}\sox\bbJ_{2|\n|})\bs{\d}}.
  \end{equation}
  where $\bs{\d}\df(\bs\a,\ol{\bs\a})$ and $\hat P\df\hat{P}_\pi\oplus \hat{P}_\pi$ was introduced in Lemma~\ref{lem:hafinv}. We used Eq.~\eqref{eq:ProbMixed}, Lemma~\ref{lem:linearity} and Lemma~\ref{lem:hafinv} (the upper route in the commutative diagram to go from the LHS of~\eqref{eq:mixedprobs1} to the LHS of~\eqref{eq:mixedprobs2}). But since $(\hat{P}\bs{\d})^\top(A^{\op2}\sox\bbJ_{2|\n|})(\hat{P}\bs{\d})=\bs{\d}^\top(\hat{P}^\top(A^{\op2}\sox\bbJ_{2|\n|})\hat{P})\bs{\d}$ and  $\hat{P}$ is a permutation, the hafnian is preserved and the  output probability distribution is merely permuted.

  To conclude the proof we notice that the overall detection probability is a sum of invariant (for pure events) or permuted (for the mixed ones) probability distributions where the permutation is the same for all mixed $\n$'s.
\end{proof}
To simplify the notation in the rest of the section we write $\haf^{\,2}_{G}{(\bs{n})}\equiv\haf^{\,2}{[A\sox\bbJ_{|\bs{n}|}]}$ in Eq.~\eqref{eq:averPhNumber}. Given the stratification into orbits, it is advantageous to collect $n_k$ together with the factorial coefficients and the (squared) hafnians of a graph $G$ to $\mathsf{N}$ and $\mathsf{haf}_G$, respectively, and rewrite~\eqref{eq:averPhNumber} as
\begin{equation}\label{eq:averPhMatrix}
   \mathsf{n}_{G}={1\over\sqrt{\det{\s_{Q,G}}}}\mathsf{N}\ \mathsf{haf}_{G},
\end{equation}
where $\mathsf{n}_{G}$ is $M$-tuple of numbers.
\begin{exa}
  Let us illustrate~\eqref{eq:averPhMatrix} for a graph $G$ on $M=4$ vertices and for $|\n|=2$. There are two orbits represented by $(0,0,0,2)$ and $(0,0,1,1)$. Since the graph is  doubled, we have $M=4$ detectors and  then
    \begin{equation}\label{eq:LinEqsM4n2}
      \mathsf{n}_{G}=
      {1\over\sqrt{\det{\s_{Q,G}}}}
      \begin{bmatrix}
       2/2! & 0 & 0 & 0 & 1 & 1 & 0 & 1 & 0 & 0 \\
       0 & 2/2! & 0 & 0 & 1 & 0 & 1 & 0 & 1 & 0 \\
       0 & 0 & 2/2! & 0 & 0 & 1 & 1 & 0 & 0 & 1 \\
       0 & 0 & 0 & 2/2! & 0 & 0 & 0 & 1 & 1 & 1 \\
      \end{bmatrix}
      \begin{bmatrix}
        \haf^{\,2}_{G}{(2000)} \\[0.3em]
        \haf^{\,2}_{G}{(0200)} \\[0.3em]
        \haf^{\,2}_{G}{(0020)} \\[0.3em]
        \haf^{\,2}_{G}{(0002)} \\[0.3em]
        \haf^{\,2}_{G}{(1100)} \\[0.3em]
        \haf^{\,2}_{G}{(1010)} \\[0.3em]
        \haf^{\,2}_{G}{(0110)} \\[0.3em]
        \haf^{\,2}_{G}{(1001)} \\[0.3em]
        \haf^{\,2}_{G}{(0101)} \\[0.3em]
        \haf^{\,2}_{G}{(0011)}
      \end{bmatrix}.
    \end{equation}
    We can clearly identify the sums on the RHS of~\eqref{eq:averPhNumber}. Note that due to Corollary~\ref{cor:ofnTOTlemma} the hafnians of the $(0,0,0,2)$ orbit are zero and so are the corresponding contributions to $\mathsf{n}_{G}$.
\end{exa}
The following proof is best viewed together with the above example.
\begin{proof}[Proof of Theorem~\ref{thm:IsoGraphs}]
  Lemma~\ref{lem:permProbDistro} shows that, if graphs $G_1$ and $G_2$ are isomorphic, then the ordered set of hafnians for one graph is a permutation of the same ordered set for the other graph. This statement translates into a permutation of  $\mathsf{haf}_{G}$ introduced earlier: $\mathsf{haf}_{G_2}=\pi(\mathsf{haf}_{G_1})$. Note that the pure orbit elements are fixed points of $\pi$. Now we  observe that the $i$-th row of $\mathsf{N}$ by construction coincides with the sequence assembled from the $i$-th elements of all $\bs{n}$ (see~\eqref{eq:LinEqsM4n2}). So instead of swapping the rows of $\mathsf{N}$ we correspondingly swap these sequences in the argument of all $\haf^{\,2}_{G}$ forming $\mathsf{haf}_{G}$. But this transformation is a permutation of the set of all $\bs{n}$'s for a fixed $|\n|$ since it preserves the photon number. Hence
   \begin{equation}
     \mathsf{N}\,\pi(\mathsf{haf}_{G_1})=P_\pi(\mathsf{N})\,\mathsf{haf}_{G_1},
   \end{equation}
   where $P_\pi$ swaps the rows of $\mathsf{N}$. Since $\mathsf{N}\mathsf{haf}_{G_2}=\mathsf{N}\,\pi(\mathsf{haf}_{G_1})$   we get
    \begin{equation}
       \mathsf{N}\mathsf{haf}_{G_2}=P_\pi(\mathsf{N})\,\mathsf{haf}_{G_1}
    \end{equation}
   and thanks to Lemma~\ref{lem:determinants} we can rewrite the equality as
    \begin{equation}
       {1\over\sqrt{\det{\s_{Q,G_2}}}}\mathsf{N}\mathsf{haf}_{G_2}={1\over\sqrt{\det{\s_{Q,G_1}}}}P_\pi(\mathsf{N})\,\mathsf{haf}_{G_1}.
    \end{equation}
   But the LHS is $\mathsf{n}_{G_2}$ and the action of a permutation $P_\pi$ on the RHS implies that it is equal to $\pi(\mathsf{n}_{G_1})$. Therefore
  \[
  \mathsf{n}_{G_2}=\pi(\mathsf{n}_{G_1})
  \]
   which is the main result.

   To conclude the proof we observe that it takes only a polynomial number of steps to uncover how the partition-averaged photon distribution is permuted. We order $\mathsf{n}_{G_1}$ and $\mathsf{n}_{G_2}$ in an increasing order and if the two ordered sets differ the graphs cannot be isomorphic
\end{proof}
One could be tempted to argue that the opposite is true (that is, if the partition-averaged photon distributions the same then the graphs are isomorphic). The following counterexample shows that there is no hope for the converse of Theorem~\ref{thm:IsoGraphs}.
\begin{exa}[Counterexample based on SRG(16,6,2,2)]\label{exa:counterexa}
  SRG(16,6,2,2) is the smallest family of SRGs containing two isospectral graphs on 16 vertices. Let $|\n|=4$ which can be partitioned in five different ways. Orbits represented by $\n=(1,3)$ and $\n=(4)$ (zeros omitted) do not contribute in accord with Corollary~\ref{cor:ofnTOTlemma}. Calculating Eq.~\eqref{eq:averPhNumber} we find $\lan\lan{n_k}\ran\ran_{G_1}=\lan\lan{n_k}\ran\ran_{G_2}$. What about the less coarse-grained version, Eq.~\eqref{eq:averPhNumberfiner}. Let's check the orbit of $\n=(1,1,1,1)$ where $|O_{\n}|=1820$. Here the situations is quite interesting and generic for SRGs. The sets of hafnians differ: $\mathrm{hafs}[G_1]=(0_{992},1_{768},2_{60})$ and $\mathrm{hafs}[G_2]=(0_{984},1_{792},2_{36},3_8)$ where the subscripts count the hafnian. Yet, we find $\lan{n_k}\ran_{G_1}=\lan{n_k}\ran_{G_2}$.
\end{exa}
\begin{rem}
  Note that since the hafnian sets differ in the previous example we know that the graphs are not isomorphic. It just can't be concluded from comparing the partition-averaged photon distributions for $|\n|=4$ and it can't even be concluded from~\eqref{eq:probOrbit} since $p_{G_1}(O_{n})=p_{G_2}(O_{n})$  for all orbits  for $|\n|=4$ (including $\n=(1,1,1,1)$ again!). The first differences both in $\lan{n_k}\ran_G$ and $p_{G}(O_{n})$ appear for some orbits of $|\n|=8$. Interestingly, $\lan{n_k}\ran_G$ is always uniform for SRGs and when it differs for two nonisomorphic SRGs, it differs in a magnitude.
\end{rem}
\begin{rem}
  Similarly to the partition~\eqref{eq:probCoarsegrained}, the coarse-grained partition-averaged photon distribution~$\langle\langle{n_k}\rangle\rangle_G$ is efficiently calculable.
\end{rem}

\section{Simulations for isospectral graphs}\label{sec:simulations}

In the following section, we present the results of the GBS quantum GI algorithm applied to various SRG families and other isospectral graphs. The algorithm itself is presented in the Appendix \ref{sec:algo}. Among other graphs, we examine the SRG(35,18,9,9) family, and show that, using various detection patterns, the GBS fully distinguishes all 3854 graphs in this family. Due to the large number of photon event permutations required to calculate the probability of detection, and the classically intractable graph hafnian calculation, the results were computed in parallel using the Python Hafnian library~\cite{bjorklund2018faster} and the Titan supercomputer~\footnote{\url{https://www.olcf.ornl.gov/olcf-resources/compute-systems/titan/}}. Recall our convention for GBS encodable graphs: $C=c(A\op A)\in\bbR^{2M\times2M}$ where we set $c=1$ whenever we are allowed to.

\begin{exa}
    Let us start with the smallest connected PING in Fig.~\ref{fig:PING6v}. The hafnians of the adjacency matrices coincide so we have to look to all possible GBS-measurable submatrices. These correspond to all measurement patterns with at most one photon per mode (in this example we won't study the multiphoton contributions coming from $A\sox\bbJ_{\n}$). Hence, we can measure in total $\binom{6}{4}=15$ graphs on 4 vertices as well as 2 vertices (the subgraphs with an odd number of vertices have zero perfect matchings and therefore zero hafnian). The hafnians  of the latter (let's call them 2-hafnians) do not differ but the 4-hafnian sets do differ:
    \begin{subequations}\label{eq:hafnotation}
        \begin{align}
          {4\myhyph\!\haf{A_{G_1}}} &= (0, 0, 0, 0, 0, 0, 0, 1, 1, 1, 1, 1, 1, 1, 2)\equiv(0_7,1_7,2_1), \\
          {4\myhyph\!\haf{A_{G_2}}} &= (0_8,1_7).
        \end{align}
    \end{subequations}
\end{exa}
\begin{exa}
    Consider another example of a PING~\cite{baker1966drum} in Fig.~\ref{fig:PING9v}, this time on nine vertices.
    \begin{figure}[h]
      \resizebox{8cm}{!}{\includegraphics{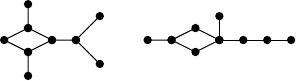}}
      \caption{ PING on nine vertices. }
      \label{fig:PING9v}
    \end{figure}
    Their hafnians are zero since the number of vertices is odd but also all (8,6,4,2)-hafnians are identical (that is, all subgraphs accessible to GBS have the same number of perfect matchings). We thus have to change the strategy and systematically investigate  multiphoton detection events by using the stratification according to the overall photon number and analyze all detection events, both pure and mixed. This is the way we will proceed in the upcoming examples. Here it turns out that the first differences between the two graphs happen for $|\n|=6$. Table~\ref{table:PING9v} on page~\pageref{table:PING9v} summarizes the result. The leftmost column contains all partitions of $|\n|=6$ (see~\eqref{eq:partition}) where each partition is represented by a naturally ordered orbit representative. The orbit size is in the second column in accordance with the discussion preceding~Eq.~\eqref{eq:probOrbit}. In experimental terms, an orbit consists of a measurement pattern and all of its permutations. The two rightmost columns contain the hafnians $\haf{[A\sox\bbJ_{|\n|}]}$. We notice a difference in three orbits (greyed): $(1,1,1,1,2),(1,1,2,2)$ and $(1,1,1,3)$ (the zeros omitted). Also note that the last four rows corresponding to events which do not occur as predicted by Lemma~\ref{lem:nTOT}. For another graph~$G_3$, isomorphic to $G_2$, we get the same hafnians for all orbits as an additional check.
    \begin{center}
    \renewcommand{\arraystretch}{1.1}
    \extrarowheight=\aboverulesep
    \aboverulesep=0pt
    \belowrulesep=0pt
    \begin{table}
           \begin{tabular}{@{}>{\columncolor{white}[0pt][\tabcolsep]}  *5c @{}}    \toprule
            Orbit representative of $O_{\n}$ & $|O_{\n}|$  & $\mathrm{hafs}[G_1]$ & $\mathrm{hafs}[G_2]$ \\
            \midrule
             $(0,0,0,1,1,1,1,1,1)$  & $\binom{9}{6}$              & $(0_{69}, 1_{13},2_{2})$  & $(0_{69}, 1_{13},2_{2})$ \\
            \rowcolor{lightgray}
             $(0,0,0,0,1,1,1,1,2)$  & $\binom{9}{5}\binom{5}{1}$  & $(0_{586}, 2_{41},4_{3})$  & $(0_{585}, 2_{42},4_{3})$\\\addlinespace[0.1em]
             \rowcolor{lightgray}
             $(0,0,0,0,0,1,1,2,2)$  & $\binom{9}{4}\binom{4}{2}$  & $(0_{698}, 2_{42},4_{12},6_4)$  & $(0_{700},2_{42},4_{10},6_4)$\\\addlinespace[0.1em]
             $(0,0,0,0,0,0,2,2,2)$  & $\binom{9}{3}$              & $(0_{84})$  &  $(0_{84})$ \\
             \rowcolor{lightgray}
             $(0,0,0,0,0,1,1,1,3)$  & $\binom{9}{4}\binom{4}{1}$  & $(0_{500},6_4)$  &   $(0_{499},6_5)$\\\addlinespace[0.1em]
             $(0,0,0,0,0,0,1,2,3)$  & $\binom{9}{3}3!$            & $(0_{478},6_{26})$  &   $(0_{478},6_{26})$ \\\addlinespace[0.1em]
             $(0,0,0,0,0,0,0,3,3)$  & $\binom{9}{2}$              & $(0_{27},6_9)$  &   $(0_{27},6_9)$\\
             $(0,0,0,0,0,0,1,1,4)$  & $\binom{9}{3}\binom{3}{1}$  & $(0_{252})$  &  $(0_{252})$ \\\addlinespace[0.1em]
             $(0,0,0,0,0,0,0,2,4)$  & $\binom{9}{2}\binom{2}{1}$  & $(0_{72})$  &  $(0_{72})$ \\
             $(0,0,0,0,0,0,0,1,5)$  & $\binom{9}{2}\binom{2}{1}$  & $(0_{72})$  &  $(0_{72})$ \\
             $(0,0,0,0,0,0,0,0,6)$  & 9                           & $(0_{9})$  &  $(0_{9})$ \\
             \bottomrule
             \hline
            \end{tabular}\\ \vskip .3cm
    \caption{All measurement patterns and their permutations~$O_{\n}$ (orbits) for the PING on nine vertices in Fig.~\ref{fig:PING9v} for $|\n|=6$. The second column is the orbit size and in the last two columns we list the hafnians whose total number (the sum of subscripts) equals $|O_{\n}|$. Note that the notation we are using for the hafnian sets is defined in~\eqref{eq:hafnotation}.}
    \label{table:PING9v}
    \end{table}
    \end{center}

    Another interesting piece of information is the actual partition-averaged photon distribution  for a given orbit provided by~\eqref{eq:averPhNumberfiner} or~\eqref{eq:averPhNumber} (we omit the determinants in this example). For non-SRGs the partition-averaged photon distribution is typically non-flat. To illustrate~\eqref{eq:averPhNumberfiner} let's choose an orbit  where no difference was found: $O_{\n}$ for $\n=(1,2,3)$.  The first two plots in Fig.~\ref{fig:v9distroG1G2G3} are clearly different (that is, non-permutationally invariant). In accordance with the result of Section~\ref{subsec:meanphoton}, namely Theorem~\ref{thm:IsoGraphs}, this is enough to decide that the two graphs are not isomorphic. For a graph $G_3$ isomorphic to $G_2$ we notice a mere permutation of bars in the rightmost panel of Fig.~\ref{fig:v9distroG1G2G3} again in accordance with Theorem~\ref{thm:IsoGraphs}.

    A similar conclusion follows from the analysis of~\eqref{eq:averPhNumber} and the situation is depicted in Fig.~\ref{fig:v9distron8G1G2G3} for~$|\n|=8$.
    \begin{figure}[t]
      \resizebox{9.3cm}{!}{\includegraphics{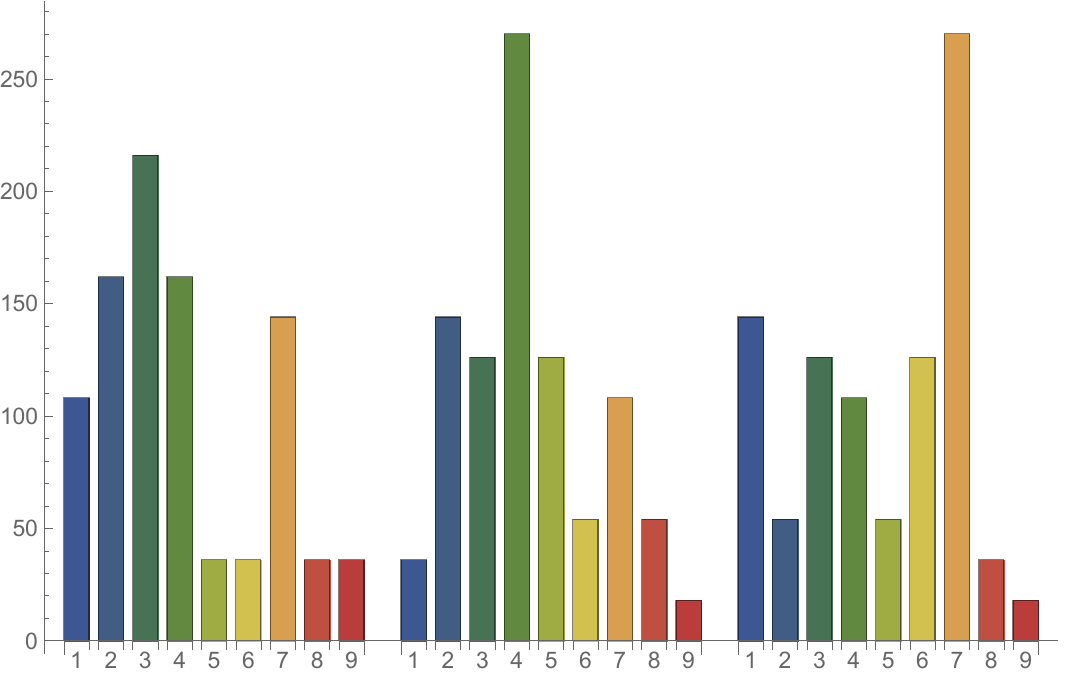}}
      \caption{Partition-averaged photon distribution, Eq.~\ref{eq:averPhNumberfiner}, for $G_1,G_2$ and  $G_3\simeq G_2$ for the orbit of $\n=(1,2,3)$. The $x$ axis labels the detectors. Note that since we omitted the determinantal prefactor  and set $c=1$ the distribution is not normalized.}
      \label{fig:v9distroG1G2G3}
    \end{figure}
    \begin{figure}[t]
      \resizebox{9.3cm}{!}{\includegraphics{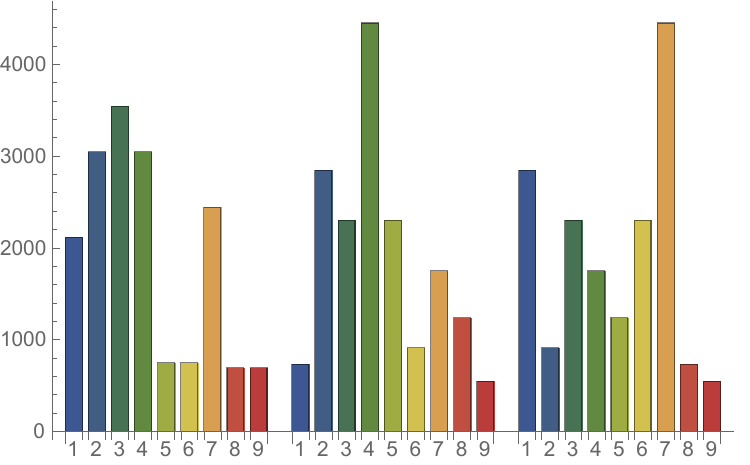}}
      \caption{Coarse-grained partition-averaged photon distribution, Eq.~\ref{eq:averPhNumber}, for $G_1,G_2$ and  $G_3\simeq G_2$ for all orbits contributing to $|\n|=8$. The $x$ axis labels the detectors. Note that since we omitted the determinantal prefactor  and set $c=1$ the distribution is not normalized.}
      \label{fig:v9distron8G1G2G3}
    \end{figure}

\end{exa}
\begin{exa}
  Regular isospectral nonisomorphic graphs appear to be somewhere between PINGs and SRGs in terms of the difficulty to distinguish them. We analyzed a pair of graphs on ten vertices introduced in~\cite[page~110, (a) and (b)]{little2006combinatorial}. Neither the coarse-grained photon distribution, Eq.~\eqref{eq:averPhNumber}, nor its probability equivalent differ for the two graphs for any tested orbit of $\n$. This is what we witnessed for all examined SRGs as well. But there is a difference, most likely related to the fact that regular graphs have less symmetry than SRGs. First, a difference in $\lan{n_k}\ran_G$ but not in $p_G(O_{\n})$ appears for some orbits of $|\n|=6$. For $|\n|=8$ both quantities differ in an ever increasing number orbits. What makes regular graphs different from SRGs is that $\lan{n_k}\ran_G$ is not uniform (c.f. with the example and remark at the end of Section~\ref{subsec:meanphoton}). This is more similar to the PINGs we mentioned previously.  So we may get some information on the actual permutation operation from $\lan{n_k}\ran_{G_{1,2}}$  if we cannot find any difference for any $\n$.
\end{exa}
      \begin{figure}[b]
      \resizebox{14cm}{!}{\includegraphics{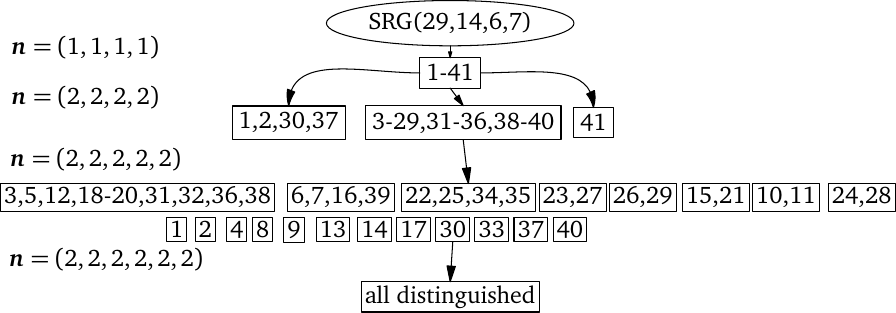}}
      \caption{The SRG(29,14,6,7) family of 41 isospectral graphs can be fully distinguished by considering orbit $(2,2,2,2,2,2)$. The numbers in the rectangular boxes label the graphs according to~\cite{spence2018}. Considering `smaller' orbits (in terms of $|\n|$ or the number of nonzero $n_i$'s) typically leads to a partial separation. Orbit $(2,2,2,2,2,2)$ may not be the smallest one to distinguish all the graphs.}
      \label{fig:SRG29}
    \end{figure}
\begin{exa}[SRG(29,14,6,7)]
    Fig.~\ref{fig:SRG29} summarizes the path to distinguish all 41 isospectral graphs. As the starting point we took orbit $O_{\n}$ for $\n=(1,1,1,1)$. This orbit has no distinguishing power. This is indicated by a single square box containing all graphs. One could start with a measurement event containing more single photons but the problem is, as the number of vertices increases, the orbit size grows rapidly making the simulations rather resource-expensive. Also, it is desirable to find the orbit with the smallest possible $|\n|$ distinguishing all graphs to (i) heuristically assess the performance of our algorithm and (ii) make sure that the physical resources needed to run the algorithm are not excessive. This is because the smaller $|\n|$ is the less squeezing we need in an actual experiment. Sometimes, however, a smaller $|\n|$ does not guarantee a faster simulation. In the current example the orbit of $\n=(2,2,2,2,2,2)$ where $|\n|=12$ is much more computationally feasible than $\n=(1,2,3,4)$ where $|\n|=10$. This is because $|O_{\n}|=475020$ for the former but $|O_{\n}|=11400480$ for the latter. So perhaps the orbit of $\n=(1,2,3,4)$ distinguishes them all. Hence without exploring all alternative routes we cannot claim optimality (here and in any other example).
\end{exa}
      \begin{figure}[t]
      \resizebox{14cm}{!}{\includegraphics{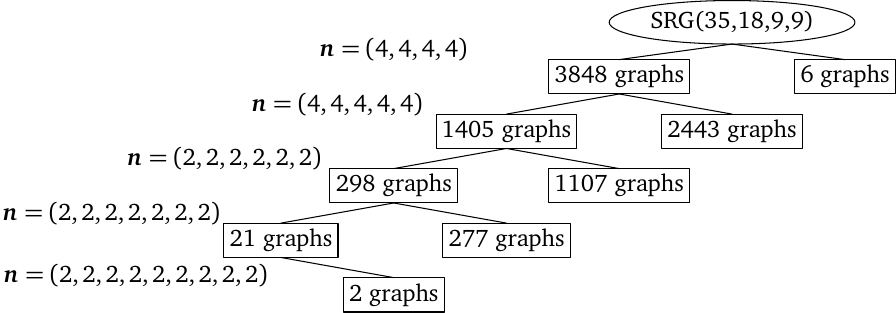}}
      \caption{The SRG(35,18,9,9) family of 3854 isospectral graphs can be fully distinguished by starting with the orbit of $\n=(4,4,4,4)$ up to $\n=(2,2,2,2,2,2,2,2,2)$.}
      \label{fig:SRG35}
    \end{figure}
\begin{exa}[SRG(35,18,9,9)]
    The analysis of the biggest family of SRGs we studied is summarized in Fig.~\ref{fig:SRG35}. Starting from the top, the orbit representatives on the left indicate how successful they are in distinguishing the graphs. The right box contains the number of \emph{newly} distinguished graphs whereas the left box contains the number of remaining graphs. The effect is cumulative so for example the orbit of $\n=(4,4,4,4,4)$ together with $\n=(4,4,4,4)$ distinguishes 2443 graphs. Our computational resources were not enough to distinguish the two remaining graphs using Theorem~\ref{thm:mainthm}. The necessary condition developed in Sec.~\ref{subsec:hierarchyTower} was used instead.
\end{exa}
\begin{exa}[SRG(16,6,2,2)]  The two graphs can be easily distinguished by our method but in this case we illustrate the probability function for all partitions and their orbits up to $|\n|=14$. In Fig.~\ref{fig:SRG16v} we plot Eq.~\ref{eq:probOrbit} for orbits whose probability is nonzero (so their number is less than given by partitioning $|\n|$). The $x$ axis labels these orbits and in the plot we indicate the actual partitioning by white and gray background. Even for a fixed $|\n|$ some orbits are more likely than others. We observed that the probability is correlated to the size of the orbit. This confirms our intuition from the paragraph before~Eq.~\eqref{eq:probOrbit}.
  \begin{figure}[h]
  \resizebox{13cm}{!}{\includegraphics{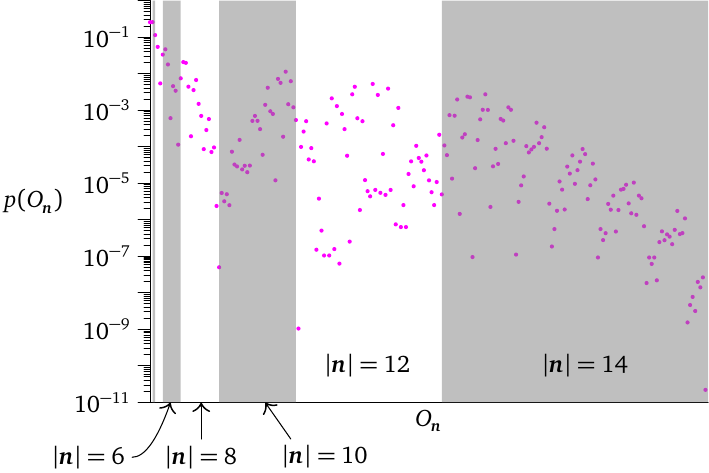}}
  \caption{Probabilities of various orbits. Each point is a probability of an orbit  $O_{\n}$ given by Eq.~\eqref{eq:probOrbit} for one of the graphs from the SRG(16,6,2,2) family. The prefactor in~\eqref{eq:probOrbit} is $1/\sqrt{\det{\s_{Q,G}}}=((-1+4c^2)^{15}(-1+36c^2))^{1/2}$ where we set c=1/6.9. Orbits whose probability is zero are omitted.}
  \label{fig:SRG16v}
\end{figure}
\end{exa}

\section{Open problems and Discussion}\label{sec:discussion}

\subsection{Open questions}

\begin{enumerate}[label=\bfseries{Q\arabic*.}]
 \item   Can we replace in Theorem \ref{thm:mainthm} the quantity
    $\sum_{\sigma\in \mathfrak{S}_n} \sqrt{p(\bs{n}_{\sigma})}$ with $\sum_{\sigma\in \mathfrak{S}_n} p(\bs{n}_{\sigma})$?
  \item The main result of this paper is a necessary and sufficient condition for two isospectral graphs to be isomorphic. We found a complete set of graph invariants. However, this is only half satisfactory because we don't know where the difference between two graphs `kicks in'. Without this knowledge we can use the iff condition only in one direction. The ideal situation would be to have a deterministic or probabilistic criterion for the existence of such a \emph{threshold} orbit as a polynomial function of the graph size. The numerical experiments are favorable as far as the polynomial growth goes -- there is no indication that the threshold value grows fast. As the SRG example at the end of Section~\ref{subsec:meanphoton} shows, this is actually a more subtle problem: the set of hafnians may be different which is one sufficient condition as shown in Section~\ref{subsec:graphinv} but their sum of squares is forming $p_{G_i}(O_{\n})$ (the coarse-grained probability as another sufficient condition) is the same. The latter often comes `later', that is, for higher orbits than the former.
  \item When a threshold orbit $\n$ is reached we made the following  observation: upon examining orbits $\m$ for $|\m|>|\n|$ satisfying $\m>\n$ ($m_i\geq n_i,\forall i$ and at least one $n_i\neq m_i$) the set of hafnians is again different. We call this a \emph{branching effect}. It seems to be a probabilistic effect but from our experiments it holds overwhelmingly. This results in a dramatic increase of orbits with different hafnian sets as we increase $|\m|$ and greatly helps in the practical use of our algorithm. The intuition behind this behavior is that when the hafnian sets differ for an orbit of $\n$, then another orbit of $\m>\n$ (obtained by adding two new rows/columns or copies thereof to the adjacency matrix corresponding to $\n$) contains as its subgraphs all the graphs that already had different hafnian sets. The question to answer is how likely this effect is to occur.
  \item A problem closely related to the previous question is how long it takes to approach a true probability distribution within a given precision for a chosen orbit of interest. This is typically answered by the methods that are standard in random graph and probability theory.
  \item We did not address two related experimental points in this paper.  The first one is the photon losses, which will make the statistics of two isomorphic graphs different.  Hence one needs to come up with cutoff estimates for distinguishing nonisomorphic graphs. The second one is how easy to estimate probabilities of photon counting experiments.
\end{enumerate}

\subsection{Scalability discussion}

As discussed in the previous subsection, we don't know at what point two non-isomorphic graphs start to differ when sampled using a GBS device. Presumably this critical $|\n|$ grows with the graph size and the numerical evidence suggests that the growth is not exponential or even fast in general in the graph size. Nonetheless, even a moderate rate of growth of the threshold orbit $\n$ with the graph size $M$ could be fatal for large graphs. This is because the physical interpretation of $0<c<\|A\|_2$ that directs the amount of squeezing -- a quantity directly related to $|\n|$. As we see from Fig.~\ref{fig:SRG16v}, the class with a (nearly) maximal probability  (a single orbit in fact) is $\n=(0)_M$ and this is a generic case. We can tune the~$c$ parameter by increasing squeezing such that the probability of a different (desired) orbit $O_{\n}$ increases. This typically makes the vacuum contribution smaller but still dominating the probability landscape. What happens is that the desired orbit's probability $p_{G_A}(O_{\n})$ increases with respect to the vacuum and other lower contributions but the probability flattens and therefore its values are inevitably impractically low to gather enough statistics reasonably fast. To put it differently, there is no significant peak (or a concentration effect) for the desired orbit $O_{\n}$. The situation is a bit alleviated by a heuristic observation that beyond the threshold orbit $O_{\n}$, where the difference in the hafnian set is first observed, the orbits $O_{\m}$ where $\n<\m$ overwhelmingly add to distinguishability. But it is tempting to avoid this `probability dilution' altogether. The rise of the number of orbits for a fixed $|\n|$ is equivalent to the integer partition problem briefly discussed towards the end of Section~\ref{subsec:graphinv}.

Let us look closer at this problem. Any graph $G_A$ can be encoded as a pure covariance matrix whose circuit decomposition consists of an array of $M$ squeezers $S$ followed by an interferometer $U$ on $M$ modes. Since the eigenvalues (more precisely the singular values) of $A$ are related to  the squeezing parameters, two isospectral graphs, $G_A,G_{\tilde A}$, have the same $S$ -- the fact already used in Lemma~\ref{lem:probCoarseGr}. The input to the interferometer is given by~\eqref{eq:Sproduct}. Suppose that we know or suspect that a given $|\n|$ contains orbits that are different if two graphs are not isomorphic. Since an interferometer is a passive unitary operator we know the input state responsible for it -- it is the state whose coefficients are $\a_{i\n}(r_1,\dots,r_M)$ from~\eqref{eq:Sproduct}. So the task becomes to prepare such a state. This could be a computationally hard task. Even though $\bigotimes_{k=1}^MS(r_k)\ket{0}$ is factorized the states $\sum_{i=1}^{\binom{|\n|/2+M-1}{|\n|/2}}\a_{i\n}(r_1,\dots,r_M)\ket{(\n,M)}_i$ living in the completely symmetric subspaces  are, in general, entangled. There are two problems, though. First, generation of such states could potentially require a circuit of a great depth. Second, even if $\n$ scales favorably with $M$,  the number of coefficients $\a_{i\n}(r_1,\dots,r_M)$ can be overwhelming to work with for $M\gg0$ and so we may run into the issue of  tractability to describe the necessary unitary operation. If these issues were resolved we would gained a complete control over the output distribution behavior. But we would also switch from GBS to a generalized (multiphoton) boson sampling (BS)~\cite{aaronson2011computational}. The orbit representative is a state with a given photon number per input mode. The difference compared to BS is that we do not require the input/output state to be in the 0,1 subspace per mode. Hence we arrived (by a detour) to the output probability function whose form is most likely governed by some permanent function~\cite{scheel2004permanents} -- a form which will most likely be a variation on the reduced Kronecker product we have introduced in Sec.~\ref{subsec:suppresults}.

The use of a fixed input photon distribution will dramatically change the odds of detecting the difference in the probability distribution. We can take a look at Fig.~\ref{fig:SRG16v} for, say, $|\n|=14$ and $p_{G_A}(O_{\n})$ for all orbits in this partition will be considerably higher with their mutual ratio preserved. We leave for a future exploration the question if the probability distribution is always skewed (or even concentrated) such that some events (preferably the ones where there is a difference)  are overwhelmingly likely than the others. What can we do if this is not the case? The thing we cannot do is to coarse-grain the probability more than in Eq.~\eqref{eq:probOrbit} due to lemma~\ref{lem:probCoarseGr}.
Then, one option would be to coarse-grain the probability more than in~\eqref{eq:probOrbit} but less than in~\eqref{eq:probCoarsegrained}. The reason for this effort is to have a favorable scaling. Recall that the number of partitions of $|\n|$ grows exponentially with $|\n|$. If we partially coarse-grain a given orbit we may keep the scaling polynomial and still detect a difference for non-isomorphic graphs. The question is how to split a given orbit. At this point we can offer only certain heuristic rules based on our simulations.

\section*{acknowledgments}

The authors greatly appreciate thorough reading and suggested modifications of the manuscript by Robert Israel. This  research  used  resources  of  the Oak  Ridge  Leadership  Computing  Facility  at  the  Oak Ridge  National  Laboratory,  which  is  supported  by  the Office of Science of the U.S. Department of Energy under Contract No.~DE-AC05-00OR22725.

\bibliographystyle{unsrt}


\begin{thebibliography}{}

\end{thebibliography}


\begin{thebibliography}{10}
\bibitem{aaronson2011computational}
Scott Aaronson and Alex Arkhipov.
\newblock The computational complexity of linear optics.
\newblock In STOC'11-{\em Proceedings of the 43rd ACM Symposium on Theory of Computing}, 333-342, ACM, New York, 2011.
  
\bibitem{aharonov2007polynomial}
Dorit Aharonov, Itai Arad, Elad Eban, and Zeph Landau.
\newblock {Polynomial quantum algorithms for additive approximations of the
  Potts model and other points of the Tutte plane}.
\newblock {\em arXiv preprint quant-ph/0702008}, 2007.

\bibitem{babai2016graph}
L{\'a}szl{\'o} Babai.
\newblock Graph isomorphism in quasipolynomial time.
\newblock In STOC'16-{\em Proceedings of the 48th Annual ACM SIGACT Symposium on Theory of Computing}, 684-697, ACM, New York, 2016.

\bibitem{BGM} 
\newblock
L. Babai, D.Yu. Grigoryev and D.M. Mount. 
\newblock 
Isomorphism of graphs with bounded eigenvalue
multiplicities. 
In STOC'82- {\em Proceedings  of the 14th ACM symposium on
Theory of Computing}, 310--324, ACM,  New York,1982 .

\bibitem{baker1966drum}
George~A Baker~Jr.
\newblock Drum shapes and isospectral graphs.
\newblock {\em J. Mathematical Phys.}, 7 (1966), no 12, 2238--2242.

\bibitem{barvinok2017combinatorics}
Alexander Barvinok.
\newblock {\em Combinatorics and Complexity of Partition Functions}, 
\newblock Algorithms and Combinatorics, 30. Springer, Cham, 2016. vi+303 pp.

\bibitem{berry2011two}
Scott~D Berry and Jingbo~B Wang.
\newblock Two-particle quantum walks: Entanglement and graph isomorphism
  testing.
\newblock {\em Phys. Rev. A}, 83 (2011), no. 4, 042317, 12 pp.

\bibitem{bjorklund2018faster}
Andreas Bj{\"o}rklund, Brajesh Gupt, and Nicol{\'a}s Quesada.
\newblock A faster hafnian formula for complex matrices and its benchmarking on
 a supercomputer.
\newblock {\em ACM J. Exp. Algorithmics}, 24 (2019), Art. 1.11, 17 pp. 



\bibitem{bradler2017gaussian}
Kamil Br{\'a}dler, Pierre-Luc Dallaire-Demers, Patrick Rebentrost, Daiqin Su,
  and Christian Weedbrook.
\newblock Gaussian boson sampling for perfect matchings of arbitrary graphs.
\newblock {\em Physical Review A}, 98 (2018), no. 3, 032310, 15 pp.
2018.

\bibitem{bradler19duality}
Kamil Br{\'a}dler, Robert Israel, Maria Schuld, Daiqin Su.
\newblock A duality at the heart of Gaussian boson sampling, 
\newblock{\em arXiv preprint arXiv:1910.04022}, 2019.
  
\bibitem{brandao2017quantum}
Fernando~GSL Brand\~{a}o and Krysta~M Svore.
\newblock Quantum speed-ups for solving semidefinite programs.
\newblock {\em 58th Annual IEEE Symposium on Foundations of Computer Science-FOCS} 2017, 415-426, IEEE Computer Soc., Los Alamitos, CA, 2017

\bibitem{brandao2017exponential}
Fernando~GSL Brand\~{a}o, Amir Kalev, Tongyang Li, Cedric Yen-Yu Lin, Krysta~M
  Svore, and Xiaodi Wu.
\newblock Exponential quantum speed-ups for semidefinite programming with
  applications to quantum learning.
\newblock {\em arXiv preprint arXiv:1710.02581}, 2017.  

\bibitem{brouwer2017}
Andries~E. Brouwer.
\newblock Parameters of strongly regular graphs.
\newblock https://www.win.tue.nl/{\textasciitilde}aeb/graphs/srg/srgtab.html,
  June 2017.
 
  \bibitem{caianiello1953quantum}
Eduardo~R Caianiello.
\newblock {On quantum field theory~I: explicit solution of Dyson's equation in
  electrodynamics without use of Feynman graphs}.
  \newblock{\em Nuovo Cimento} (9) 10 (1953), 1634-1652.

  
\bibitem{calude2017qubo}
Cristian~S Calude, Michael~J Dinneen, and Richard Hua.
\newblock {QUBO formulations for the graph isomorphism problem and related
  problems}.
\newblock{\em  Theoret. Comput. Sci.} 701 (2017), 54-69.  
  
 \bibitem{childs2017quantum}
Andrew~M Childs, Robin Kothari, and Rolando~D Somma.
\newblock Quantum algorithm for systems of linear equations with exponentially
  improved dependence on precision.
\newblock {\em SIAM J. Comput.} 46 (2017), no. 6, 1920-1950.   

\bibitem{douglas2008classical}
Brendan~L Douglas and Jingbo~B Wang.
\newblock A classical approach to the graph isomorphism problem using quantum
  walks.
\newblock {\em J. Phys. A} 41 (2008), no. 7, 075303, 15 pp.  
  
\bibitem{emms2009coined}
David Emms, Simone Severini, Richard~C. Wilson, and Edwin~R. Hancock.
\newblock Coined quantum walks lift the cospectrality of graphs and trees.
\newblock In: Rangarajan A., Vemuri B., Yuille A.L. (eds) {\em Energy Minimization Methods in Computer Vision and Pattern Recognition. EMMCVPR 2005. Lecture Notes in Computer Science}, vol 3757. Springer, Berlin, Heidelberg. 

\bibitem{freedman2002simulation}
Michael~H Freedman, Alexei Kitaev, and Zhenghan Wang.
\newblock Simulation of topological field theories by quantum computers.
\newblock {\em  Comm. Math. Phys.} 227 (2002), no. 3, 587-603. 

\bibitem{freedman2002modular}
Michael~H Freedman, Michael Larsen, and Zhenghan Wang.
\newblock A modular functor which is universal for quantum computation.
\newblock {\em Comm. Math. Phys}. 227 (2002), no. 3, 605–622. 


\bibitem{gaitan2014graph}
Frank Gaitan and Lane Clark.
\newblock Graph isomorphism and adiabatic quantum computing.
\newblock {\em  Phys. Rev. A}, 89 (2014), no. 2, 022342, 20 pp.

\bibitem{gamble2010two}
John~King Gamble, Mark Friesen, Dong Zhou, Robert Joynt, and S.~N. Coppersmith.
\newblock Two-particle quantum walks applied to the graph isomorphism problem.
\newblock {\em Phys. Rev. A}, 81 (2010), no. 5, 052313, 11 pp.

  
\bibitem{geraci2008exact}
Joseph Geraci and Daniel~A Lidar.
\newblock {On the exact evaluation of certain instances of the Potts partition
  function by quantum computers}.
 \newblock {\em Comm. Math. Phys.}  279 (2008), no. 3, 735-768. 

\bibitem{godsil2001strongly}
Chris Godsil and Gordon Royle.
\newblock Strongly regular graphs.
\newblock In {\em Algebraic graph theory}, pages 217--247. Springer, 2001.

\bibitem{hamilton2016gaussian}
C.~S. Hamilton, R.~Kruse, L.~Sansoni, S.~Barkhofen, C.~Silberhorn, and I.~Jex.
\newblock Gaussian boson sampling.
\newblock {\em Phys. Rev. Lett.}, 119 (2017), no. 17, 170501,  5 pp.

\bibitem{harrow2009quantum}
Aram~W Harrow, Avinatan Hassidim, and Seth Lloyd.
\newblock Quantum algorithm for linear systems of equations.
\newblock {\em Phys. Rev. Lett.}, 103 (2009), no. 15, 150502, 4 pp.


\bibitem{helfgott2017graph}
Harald~Andr{\'e}s Helfgott, Jitendra Bajpai, and Daniele Dona.
\newblock Graph isomorphisms in quasi-polynomial time.
\newblock {\em arXiv preprint arXiv:1710.04574}, 2017.

\bibitem{kleiber2013multivariate}
Christian Kleiber and Jordan Stoyanov.
\newblock Multivariate distributions and the moment problem.
\newblock {\em Journal of Multivariate Analysis}, 113 (2013), 7-18.

\bibitem{kruse2018detailed}
Regina Kruse, Craig~S Hamilton, Linda Sansoni, Sonja Barkhofen, Christine
  Silberhorn, and Igor Jex.
\newblock Detailed study of gaussian boson sampling.
\newblock {\em Phys. Rev. A}, 100 (2019), no. 3, 032326, 15 pp.

\bibitem{little2006combinatorial}
C.H.C. Little.
\newblock {\em Combinatorial Mathematics V.: Proceedings of the Fifth
  Australian Conference, Held at the Royal Melbourne Institute of Technology,
  August 24 - 26, 1976.}
\newblock Lecture Notes in Mathematics. Springer Berlin Heidelberg, 2006.

\bibitem{mahasinghe2015phase}
Anuradha Mahasinghe, Josh~A Izaac, Jingbo~B Wang, and Jagath~K Wijerathna.
\newblock Phase-modified {CTQW} unable to distinguish strongly regular graphs
  efficiently.
\newblock {\em J. Phys. A},
  48 (2015), no. 26, 265301, 13 pp.
  
  \bibitem{mills2017proposal}
PW~Mills, RP~Rundle, VM~Dwyer, Todd Tilma, Simon~J Devitt, JH~Samson, and
  Mark~J Everitt.
\newblock A proposal for an efficient quantum algorithm solving the graph
  isomorphism problem.
\newblock {\em arXiv preprint arXiv:1711.09842}, 2017.  

\bibitem{rudinger2012noninteracting}
Kenneth Rudinger, John Gamble, Mark Wellons, Eric Bach, Mark Friesen, Robert
  Joynt, and S.~Coppersmith.
\newblock Noninteracting multiparticle quantum random walks applied to the
  graph isomorphism problem for strongly regular graphs.
\newblock {\em J. Phys. A}, 86 (2012), no. 2, 022334, 10 pp.

\bibitem{rudinger2013comparing}
Kenneth Rudinger, John~King Gamble, Eric Bach, Mark Friesen, Robert Joynt, and
  S.~N. Coppersmith.
\newblock Comparing algorithms for graph isomorphism using discrete- and
  {Continuous-Time} quantum random walks.
\newblock {\em Journal of Computational and Theoretical Nanoscience},
  10 (2013), no. 7, 1653--1661.
  
  
\bibitem{scheel2004permanents}
Stefan Scheel.
\newblock Permanents in linear optical networks.
\newblock {\em arXiv preprint quant-ph/0406127}, 2004.

\bibitem{shor1994algorithms}
Peter~W Shor.
\newblock Algorithms for quantum computation: Discrete logarithms and
  factoring.
 \newblock In {\em 35th Annual Symposium on Foundations of Computer Science} (Santa Fe, NM, 1994), 124-134, IEEE Comput. Soc. Press, Los Alamitos, CA, 1994.

\bibitem{shor1999polynomial}
Peter~W Shor.
\newblock Polynomial-time algorithms for prime factorization and discrete
  logarithms on a quantum computer.
\newblock {\em SIAM review}, 41 (1999), no. 2, 303–332. 

\bibitem{smith2012algebraic}
Jamie Smith.
\newblock {\em Algebraic Aspects of {Multi-Particle} Quantum Walks}.
\newblock {PhD}, University of Waterloo, December 2012.

\bibitem{spence2018}
Ted Spence.
\newblock Strongly regular graphs.
\newblock http://www.maths.gla.ac.uk/{\textasciitilde}es/srgraphs.php, August
  2018.
  
  \bibitem{van2009quantum}
Maarten Van~den Nest, Wolfgang D{\"u}r, Robert Raussendorf, and Hans~J Briegel.
\newblock Quantum algorithms for spin models and simulable gate sets for
  quantum computation.
\newblock {\em Phys. Rev. A}, 80 (2009), no. 5, 052334, 5 pp.

\bibitem{wang2015graph}
Huiquan Wang, Junjie Wu, Xuejun Yang, and Xun Yi.
\newblock A graph isomorphism algorithm using signatures computed via quantum
  walk search model.
\newblock {\em J. Phys. A },
  48 (2015), no. 11, 115302, 23 pp.
  
\bibitem{wocjan2006jones}
Pawe{\l} Wocjan and Jon Yard.
\newblock The {J}ones polynomial: quantum algorithms and applications in
  quantum complexity theory.
\newblock {\em  Quantum Inf. Comput.} 8 (2008), no. 1-2, 147–180.

\bibitem{quantadv}
Han-Sen Zhong et al.
\newblock Quantum computational advantage using photons.
\newblock {\em Science}  
370 (2020), 1460-1463.

\bibitem{zick2015experimental}
Kenneth~M Zick, Omar Shehab, and Matthew French.
\newblock Experimental quantum annealing: case study involving the graph
  isomorphism problem.
\newblock {\em Scientific Reports}, 5 (2015), 11168, 11 pp.
\end{thebibliography}

\appendix
\section{Basic hardware setup}\label{sec:experiment}

The basis for our graph isomorphism method is a near-term photonic quantum processor, specifically a GBS apparatus. This apparatus consists of three main components. First, squeezed states are generated in $M$ quantum-optical modes. These states are then sent through an $M$-port linear-optical interferometer. Finally, a photon-number-resolving measurement is performed on each of the $M$ output modes. The first two steps lead to the preparation of a zero-mean quantum-optical Gaussian state, which can be described efficiently using a covariance matrix. For a single mode, the covariance matrix has dimension $2\times 2$, and encodes the covariances of the canonical quadrature operators ($\hat{x}$, $\hat{p}$) of that mode:
\begin{equation}\label{eq:covmatrix}
 \sigma_{ij} = \tfrac{1}{2}\langle \hat{\xi}_i \hat{\xi}_j + \hat{\xi}_j \hat{\xi}_i\rangle - \langle \hat{\xi_i} \rangle\langle \hat{\xi_j} \rangle,
\end{equation}
with $\hat{\xi}_k\in\{\hat{x}_k, \hat{p}_k\}$. For $M$ modes, we have $M$ pairs of quadrature operators and an $2M\times 2M$ covariance matrix built from the set $\bs{\hat{\xi}}\in\{\hat{x}_1,\hat{p}_1,\dots, \hat{x}_M,\hat{p}_M\}$.

Multimode Gaussian states themselves are of limited interest in quantum computing. While they can be prepared with quantum hardware and exhibit entanglement, the covariance matrix scales linearly in the number of modes, so they can be efficiently simulated classically. However, when we introduce the photon-number measurement, the story changes. A single photon-number measurement in mode $k$ will return a nonnegative integer $n_k\in\bbN^+$, representing the number of photons which were detected. For measurement on $M$ modes, we denote the collective photon-number output pattern by $\n=(n_1,\dots,n_M)$ and call it a detection event. From~\cite{hamilton2016gaussian}, whenever $n_i=1,\forall i$ the probability of this detection event is proportional to a function called the \emph{hafnian}~\cite{caianiello1953quantum} (see Def.~\ref{def:haf}):
\begin{equation}\label{eq:p11111}
 p(1,\dots,1)={1\over\sqrt{\det{\s_Q}}}\haf{C},
\end{equation}
where the matrix $C$ is obtained from  $\s_Q$ by basic matrix transformations (see Eq.~\eqref{eq:sigmaQ}).

Unlike the simulation of Gaussian states, computing the hafnian is a \#P-hard problem. In addition, approximating the GBS photon-number distribution is believed to be computationally hard~\cite{hamilton2016gaussian}. Thus, by combining Gaussian states with photon-number measurements -- representing the wavelike and particle-like properties of light, respectively -- we have a physical sampling apparatus whose behaviour is classically hard to replicate.  This paper explores the question of how we can leverage this GBS device for the graph isomorphism problem, specifically, how we set the squeezing and interferometer parameters to represent the problem, and how to interpret the photon-number measurement outcomes to solve the problem.

\newpage
\section{Algorithm}\label{sec:algo}
\begin{algorithm}
    \caption{GBS graph isomorphism algorithm: returns the GI invariant of adjacency matrix $A$ with respect to orbit $o$, considering hafnians' to the $n$th power.}
    \begin{algorithmic}[1] 
        \Function{GBS\_cert}{$A$, $o$, $n$}
            \LineComment{\textit{Generate a list of unique permutations of the orbit}}
            \State perms $\gets$ \text{unique\_permutations}(o)
            \State result $\gets$ array[len(perms)]
            \For{p $\in$ perms}
	            \LineComment{\textit{Generate the reduced Kronecker product of matrix $A$}}
	            \State Ap $\gets$ \text{kron\_reduced}(A, p)
	            \LineComment{\textit{Append the hafnian to the $n$th power}}
	            \State result $\gets$ append(result, haf(Ap)\textsuperscript{n})
	        \EndFor\\
	        \LineComment{\textit{Calculate the sum of hafnians}}
            \State hafSum $\gets$ sum(results)\\
	        \LineComment{\textit{Calculate the mean photon distribution for the orbit}}
	        \State phDist $\gets $ array[len(o)]
	        \For{i $\in$ len(perms)}
	            \For{j $\in$ len(o)}
	               \State phDist[j] $\gets$ phDist[j] + perms[i, j]*result[i]
	            \EndFor
	        \EndFor
        \EndFunction
    \end{algorithmic}
\end{algorithm}

\begin{algorithm}
    \caption{Function to return the reduced Kronecker product of matrix $A$, given a sequence of integers $n$ indicating the multi-mode photon detection event.}
    \begin{algorithmic}[1] 
        \Function{kron\_reduced}{A, n}
            \State rows $\gets$ array()
            \For{i $\in$ len(n)}
                \For{j = 0, n[i]}
                    \State rows $\gets$ append(rows, i)
                \EndFor
            \EndFor
            \State \textbf{return} A[rows][rows]
        \EndFunction
    \end{algorithmic}
\end{algorithm}

\begin{algorithm}
    \caption{Function to return unique permutations of an orbit}
    \begin{algorithmic}[1] 
        \Function{unique\_permutations}{orbit}
            \If{len(orbit) = 1}
                \LineComment{\textit{If orbit is length 1, return the value}}
                \State \textbf{return} orbit[0]
            \Else
                \LineComment{\textit{Else, store the list of unique elements in the orbit}}
                \State elements $\gets$ \text{drop\_duplicates}(orbit)
                \For{e $\in$ elements}
                    \LineComment{\textit{Unique elements except $e$}}
                    \State remaining $\gets$ elements - e
                    \For{p $\in$ unique\_permutations(remaining)}
                        \LineComment{\textit{Use recurrence to concatenate element $e$ with all remaining permutations of elements}}
                        \State \textbf{return} e + p
                    \EndFor
                \EndFor
            \EndIf
        \EndFunction
    \end{algorithmic}
\end{algorithm}

\mbox{\ }

\newpage
\section{List of symbols}\label{sec:symbols}
In this appendix we summarize with a table the most important symbols and their meaning.
\begin{center}
\begin{tabular}{ l | c | r }
\bf{description} & \bf{notation} & \bf{defined at page} \\
\hline
vectors &$\n=(n_1,\dots,n_M),\x=(x_1,\dots,x_M)$&page 3\\
partial derivative operator    &  $\partial^{n_i}_{x_i,\ol x_i}={\partial^{n_i}\over\partial x_i^{n_i}}{\partial^{n_i}\over\partial\ol{x}_i^{n_i}}$            & page 4\\
symmetric group of bijections $\sigma:[n]\to[n]$&$\mathfrak{S}_n$ & page 4\\
$N\times N$ hermitian, psd and positive definite &$\rH_{N}\supset \rH_{+,N}\supset \rH_{++,N}$& page 4\\
hafnian &$ \haf{C}$& page 4\\
measurement probability of $\n$&$p(\n)$& page 4\\
covariance matrix & $\s_Q$ & page 5\\
graph &$G$& page 5\\
$i-th$ eigenvalue of $G$ &$\lambda_i(G)$& page 5\\
strongly regular graph with parameters&SRG$(N,k,\lambda,\mu)$& page 6\\
all-ones $|\n|\times |\n|$ matrix &$\bbJ_{|\n|}$& page 7\\
reduced Kronecker product &$A\sox\bbJ_{|\n|}$& page 7\\
moments of Gaussian distribution induced by $\Sigma$&$\mu_{n_1,\dots,n_{2M}}(\Sigma)$& page 11\\
multinomial coefficients &${\n\choose \ba_1,\ba_2,\ldots,\ba_m}$& page 15\\
symmetrized sums &$\mu(\n,A)$& page 16\\
probability of orbit $O_{\n}$&$p_G(O_{\n})$& page 18\\
coarse-grain probability distribution& $p_G(|\n|)$& page 19\\
 $k$-mode partition-averaged photon distribution &$\lan{n_k}\ran_G$&  page 20\\
coarse-grained version &$ \langle\langle{n_k}\rangle\rangle_G$& page 20\\
\end{tabular}
\end{center}

\end{document}